\documentclass[a4paper,11pt]{article}
\usepackage{jheppub} 
\usepackage{lineno}
\usepackage{amsmath}
\usepackage{amssymb}
\usepackage{amsthm}

\newtheorem{lemma}{Lemma}[section]
\newtheorem{theorem}{Theorem}[section]
\newtheorem{corollary}{Corollary}[section]

\usepackage{bbold}
\usepackage{enumitem}
\usepackage{url}
\usepackage{verbatim}
\usepackage[dvipsnames]{xcolor}
\definecolor{TBlueLight}{HTML}{DCECFF}
\definecolor{TRedLight}{HTML}{FFAAAA}
\definecolor{TOrangeLight}{HTML}{FFD6B6}
\definecolor{TPurpleLight}{HTML}{EAC4FF}
\definecolor{TGreyLight}{HTML}{E0E0E0}

\usepackage{microtype}
\usepackage{tikz}
\usetikzlibrary{bending, positioning, calc, decorations.pathmorphing, decorations.markings, arrows.meta}
\newcommand{\tbeta}{\widetilde{\beta}}
\newcommand{\wh}{\widetilde{h}}
\newcommand{\wR}{\widetilde{R}}
\newcommand{\ba}{\overline{a}}
\newcommand{\bb}{\overline{b}}
\newcommand{\bq}{\mathfrak{q}}
\newcommand{\ket}[1]{\left| #1 \right>}

\newcommand{\VS}[1]{{\color{CornflowerBlue}\textbf{VS:} #1}}
\newcommand{\AB}[1]{{\color{Maroon}(\textbf{AB:} #1)}}
\newcommand{\beq}{\begin{equation}}
\newcommand{\eeq}{\end{equation}}
\newcommand{\beqn}{\begin{eqnarray}}
\newcommand{\eeqn}{\end{eqnarray}}

\newcommand{\mat}[4]{\begin{pmatrix}
\newcommand{\ssb}{\mathsf{b}}
  #1 & #2 \\
  #3 & #4
\end{pmatrix}}

\def\al{\alpha}

\def\lm{\lambda}
\def\eps{\epsilon}
\def\om{\omega}
\title{A semiclassical Hilbert space for random matrix theory}

\author[a]{Abhirup Bhattacharya,}
\author[a]{Onkar Parrikar,}
\author[a]{Vivek Singh}

\affiliation[a]{Department of Theoretical Physics, 
Tata Institute of Fundamental Research, 1 Homi Bhabha Road, Mumbai 400005, India}

\abstract{We study the one-sided time evolution of thermofield double (TFD) states in random matrix theory, where the Hamiltonian is taken to be a $D\times D$ random matrix drawn from a unitarily invariant emsemble of Hermitian matrices. We argue that the Krylov basis with the maximally entangled state taken as the initial vector gives a semiclassical Hilbert space description of these TFD states in random matrix theory at any $O(1)$ temperature and time in the $D\to \infty$ limit, very analogous to the ``chord Hilbert space'' construction in the double-scaled SYK (DSSYK) model. We study this semiclassical description in detail  for ensembles where the spectral density in the $D\to \infty$ limit is even, compactly supported on an interval and has square root edges. With a few more conditions on the analytic structure of the spectral density, we observe that the semiclassical Hamiltonian has the same asymptotic behavior at large Krylov depth as that of DSSYK, with the corresponding parameter $\mathfrak{q}=e^{-\lambda}$ being related to the location of the nearest zero of the spectral density away from the spectral cut. Furthermore, in a large class of models corresponding to ultraviolet deformations of the DSSYK spectral density, i.e., where the spectrum in the UV is modified while leaving the near-edge behavior unchanged, we show that the semiclassical effective Hamiltonian in the Krylov basis reduces to the Liouville Hamiltonian of JT gravity in a low-energy, continuum limit. This suggests that our semiclassical Hilbert space should be interpreted as the bulk Hilbert space of a dual gravity description.      } 

\begin{document}

\maketitle

\parskip=10pt

\section{Introduction}
In recent years, the Sachdev-Ye-Kitaev (SYK) model \cite{Sachdev_1993, Kitaev1, Kitaev2, Maldacena:2016hyu, Kitaev:2017awl, Sarosi:2017ykf} has revolutionized our understanding of the AdS/CFT correspondence in lower dimensions. A central insight to come out of this body of work was that instead of looking at one specific instance of a chaotic Hamiltonian, mileage can be gained from studying the average over an ensemble of chaotic Hamiltonians, and that ensemble-averaged quantities tend to admit semi-classical descriptions in terms of emergent collective variables in the large-$N$ limit. What is more, the emergent semi-classical description matches in some cases with the dynamics of a two-dimensional model of gravity called JT gravity, with negative cosmological constant \cite{Almheiri:2014cka, Maldacena:2016upp}. This insight was made extremely sharp by Saad, Shenker and Stanford \cite{Saad:2019lba} who proposed a precise AdS/CFT-like duality between JT gravity with its genus expansion on the one hand, and a random matrix theory in the double-scaling limit with its t'Hooft expansion on the other hand. The random matrix in question was identified with the Hamiltonian of the putative boundary quantum mechanics, and the natural observables that the duality allows one to match correspond to ensemble averaged products of the thermal partition function: 
$$ \langle Z(\beta_1)Z(\beta_2)\cdots Z(\beta_m)\rangle,\;\;\cdots\;\;Z(\beta) = \text{Tr}\,e^{-\beta H}.$$ 

More recently, the double-scaled SYK (DSSYK) model has led to a lot of interesting progress in this field, particularly at the disc level (i.e., the leading contribution as $N \to \infty$). The DSSYK model was originally solved exactly in the large-$N$, double scaling limit by Berkooz et al \cite{Berkooz:2018qkz, Berkooz:2018jqr} (building on earlier work in \cite{Erd_s_2014, Cotler:2016fpe}). It has a large-$N$ spectral density that is compactly supported on an interval. The model was solved using a ``chord diagram'' picture, where by chords one means Hamiltonian lines which connect up with each other pairwise on account of the Gaussian random averaging over the couplings. This leads to a \emph{chord Hilbert space}, spanned by an orthonormal basis of states corresponding to chord number eigenstates. Perhaps most interestingly, it was suggested in \cite{Berkooz:2018qkz, Lin:2022rbf} that the chord Hilbert space should be interpreted as a bulk gravitational Hilbert space dual to DSSYK in the large-$N$ limit (see also \cite{Blommaert:2024ymv, Blommaert:2025avl} for a realization of the chord-Hamiltonian in sine-dilaton gravity). Indeed, in a further scaling limit (called the triple-scaling limit in this context) which corresponds to zooming in near a large value of the chord number and correspondingly zooming in near the edge of the spectral density in energy, the chord number becomes a continuous, real-valued variable and the Hamiltonian of the model written in the chord basis reduces to the Liouville Hamiltonian of JT gravity for a particular gravitational mode, namely the length of the Einstein-Rosen wormhole connecting the two asymptotic boundaries of the black hole \cite{Bagrets:2016cdf, Harlow:2018tqv, Yang:2018gdb}. In this sense, DSSYK at large-$N$ provides a UV completion of the spectral density of JT gravity, and the chord number basis provides a ``discretization'' of the geometric length basis of JT gravity. The DSSYK model is rich enough to also describe matter operators and their correlation functions \cite{Berkooz:2024lgq} resulting in a non-trivial bulk Hilbert space \cite{Lin:2022rbf}. 

While many of these results were originally derived in the context of the double-scaling limit of the SYK model, it was later pointed out in \cite{Jafferis:2022wez} that they could as well be interpreted in terms of a unitarily invariant random matrix theory -- sometimes referred to as the ``ETH matrix model'' -- with the potential engineered so as to reproduce the correct leading large-$N$ spectral density. The matrix model perspective has the advantage that it admits a standard t'-Hooft expansion in the size of the matrices, which fits nicely with the SSS matrix model/JT gravity duality, and has since been fruitfully used to make important progress on higher genus corrections in the ETH matrix model \cite{Okuyama:2023aup, Okuyama:2023kdo, Okuyama:2024eyf, Miyaji:2025ucp}. It is interesting to ask how the chord Hilbert space of DSSYK is to be interpreted from the matrix model point of view. A hint towards this was originally suggested in \cite{Lin:2022rbf} and worked out in detail in \cite{Rabinovici:2023yex, Ambrosini:2024sre} -- in the absence of matter chords, the chord-number basis in DSSYK agrees with the \emph{Krylov basis} with respect to the maximally-entangled state on two copies of the system. The Krylov basis\footnote{Krylov/recursion methods have been studied in the many-body context for many years now, see \cite{viswanath1994recursion}. It is only relatively recently -- starting from \cite{Parker:2018yvk} -- that they have been appreciated in the quantum chaos and quantum gravity context.} \cite{Parker:2018yvk, Balasubramanian:2022tpr} (see also \cite{Baiguera:2025dkc, Nandy:2024evd, Rabinovici:2025otw} for recent reviews and further references) is the orthonormal basis obtained from the set of states $\{|\Omega\rangle, H|\Omega\rangle, H^2 |\Omega\rangle,\cdots\}$ after Gram-Schmidt, where $|\Omega\rangle$ is the maximally entangled state on two copies of the system. It is thus natural to expect that from the matrix model viewpoint, we should interpret the chord Hilbert space as the subspace spanned by the Krylov basis vectors up to fixed $O(1)$ Krylov depth in the limit where the size of the matrices goes to infinity. In the presence of matter, the story is more subtle, although it may still be possible to define some analog of the Krylov basis in that case as well (see \cite{Lin:2022rbf}). 

In the present paper, we will apply the lessons learnt in the DSSYK (and the ETH matrix model) context to more general one-cut random matrix models. In particular, we will study time-evolved thermo-field double (TFD) states of the form:
\beq \label{eq:TFD0}
|\Psi(z)\rangle = e^{-z\,H}\,|\Omega\rangle,\;\;\;\cdots\;\;(z= \tau + it),
\eeq 
where $H$ is the one-sided Hamiltonian (acting, say, on the right factor) drawn randomly from a unitary ensemble over $D\times D$ Hermitian matrices.\footnote{We will use the notation $D$ for the dimension of the one-sided Hilbert space, or equivalently the size of the Hamiltonian matrix. In DSSYK, $D=2^{N/2}$.} The GKPW dictionary in AdS/CFT \cite{Witten:1998qj, Gubser:1998bc} works at the level of Euclidean path integral states \cite{Marolf:2017kvq} and the family of states $\Psi(z)$ in equation \eqref{eq:TFD0} constitutes the simplest prototypical example of such states. We will argue that in a general, one-cut random matrix model, a semi-classical ``bulk'' Hilbert space description for such TFD states emerges at large $D$ in the \emph{Krylov basis}, analogous to the chord-Hilbert space picture in the DSSYK model. The essential idea is that a randomly drawn Hamiltonian can be put in a tri-diagonal form by writing it in terms of its Krylov basis with respect to the maximally entangled state $\Omega.$ The coefficients which enter this tri-diagonal matrix are called \emph{Lanczos coefficients}, and at least at fixed Krylov depth $n$ in the $D\to \infty$ limit, these coefficients can be replaced with their ensemble averaged values (i.e., their fluctuations are suppressed by $\frac{1}{D}$) which turn out to be smooth functions $n$. Thus, by restricting to the subspace spanned by Krylov basis vectors of fixed $O(1)$ depth in the $D\to \infty$ limit, one obtains a ``small Hilbert space'' with a local (i.e., tri-diagonal) effective Hamiltonian acting upon it. This small Hilbert space is of particular importance to us because the wavefunction $\langle n| \Psi(z)\rangle$ of the TFD state in equation \eqref{eq:TFD0} remains effectively \emph{confined} to the small Hilbert space for any $O(1)$ value of $z$ in the $D \to \infty$ limit. Thus, the small Hilbert space furnishes a smooth, local and semi-classical\footnote{Here by semi-classical we mean that the wavefunction is effectively spread over an $O(1)$ subspace, as opposed to an exponentially large number of states in the $D\to \infty$ limit.} description of time-evolved TFD states in the $D\to \infty$ limit, much like the chord Hilbert space of DSSYK. It is natural to wonder whether this small Hilbert space should be interpreted as the analog the ``bulk'' Hilbert space, obtained from the canonical quantization of some putative gravitational theory. Our goal here is to gather evidence towards this interpretation. 

\subsection{Main results}
We will study the semiclassical bulk Hilbert space and the corresponding Krylov effective Hamiltonian for unitarily invariant random matrix models where the ensemble-averaged spectral density $\omega(E)$ in the $D\to \infty$ limit is even under $E\to -E$, compactly supported on the interval $E\in [-1,1]$ and has \emph{square root} edges. Our main results are as follows: 
\begin{enumerate}
\item The effective bulk Hamiltonian can be extracted efficiently by using the mathematical machinery of orthogonal polynomials with respect to the spectral density $\omega(E)$. Standard results on the asymptotics of orthogonal polynomials \cite{Fokas:1991za, bleher, Deift_et_al, deift1993steepest, kuijlaars2004riemann, kuijlaars2003riemann} then imply that the corresponding (ensemble-averaged) Lanczos coefficients in the semiclassical effective Hamiltonian all approach a constant, $n$-independent value at large Krylov depth $n$.\footnote{Note that by large $n$ we mean the limit $D \to \infty$ followed by $n\to \infty$, in that order.} Thus, at large Krylov depth, the effective Hamiltonian becomes that of a free particle hopping on the one-dimensional Krylov lattice:
\beq 
-H_{\text{eff}} \sim \frac{1}{2}\sum_n \left(|n\rangle\langle n+1| + |n+1\rangle\langle n|\right),\;\;\;\cdots\;\;\;(\text{at large}\;n).
\eeq 
\item With some mild assumptions on the analytic structure of the spectral density, we observe that the approach of the Lanczos coefficients to this asymptotic value is exponential in $n$, very much like in the chord Hamiltonian of the DSSYK model. In DSSYK, the rate of asymptotic approach is controlled by the parameter $\mathfrak{q}=e^{-\lambda}$; in our context, the role of this parameter is played by the location of the nearest pair of zeros of the spectral density away from the interval $[-1,1]$. 
\item We argue that in a large class of matrix models which are ultraviolet (UV) deformations of DSSYK, i.e., where the spectral density agrees with that of DSSYK near the edges but is deformed in the UV, the effective ``bulk'' Hamiltonian in the Krylov basis reduces to the Liouville Hamiltonian of JT gravity \cite{Bagrets:2016cdf, Harlow:2018tqv} in a low-energy, continuum limit (where $\lambda \to 0$ and we zoom-in near one edge of the spectrum). We demonstrate this explicitly in a large class of models corresponding to \emph{rational Christoffel deformations} of the DSSYK spectral density, i.e.,
\beq 
\omega(E) = \mathcal{N}\frac{\prod_{i=1}^P(p_i^2 - E^2)}{\prod_{j=1}^Q(q_j^2 - E^2)}\,\omega_{\text{DSSYK}}(E),
\eeq 
with $p_i,q_j$ real and greater than 1, and $\mathcal{N}$ being a normalization constant.\footnote{Note that the DSSYK spectral density can be realized as the large-$D$ spectral density of the ETH matrix model \cite{Jafferis:2022wez}. Christoffel deformations of the DSSYK spectral density can presumably be translated to deformations of the potential of the ETH matrix model. It would also be interesting to understand the origin of these deformations from the perspective of UV deformations of the SYK model \cite{Jiang:2019pam, Anninos:2022qgy, Berkooz:2024ifu, Berkooz:2024ofm, Aguilar-Gutierrez:2026nmd, Aguilar-Gutierrez:2026ogo}.} As mentioned before, it is known that in the DSSYK/ETH matrix model, the effective Hamiltonian in the chord-number/Krylov basis provides a discretization of JT gravity away from the low-energy limit \cite{Berkooz:2018qkz, Lin:2022rbf, Rabinovici:2023yex}. Our construction shows that this result is robust with respect to choice of ultraviolet completion by generalizing it to a much larger set of matrix models which agree with DSSYK near the edges (and in particular, contain the $\sinh(\sqrt{E})$ behavior of JT near the edges) but differ significantly from DSSYK in the ultraviolet.  
\end{enumerate}
So to summarize, at least for the dynamics of TFD states, the construction of the semi-classical ``bulk'' Hilbert space associated to the discretized wormhole length in the DSSYK model generalizes -- via the identification of chord-number states with Krylov basis states -- to a large class of random matrix models. The bulk theory dual to the DSSYK model is apparently a theory of discrete wormhole lengths, and our construction provides evidence suggesting an analogous discrete gravity dual for more general matrix models away from the low-energy limit in the $D\to \infty$ limit.    

The rest of this paper is organized as follows: in section \ref{sec:prelim}, we review some background material on the DSSYK model, Krylov basis etc., and show why the Krylov basis gives a good semiclassical effective description in the $D\to \infty$ limit. In section \ref{sec:largen}, we study this semiclassical description in the context of random matrix theory, and discuss how the mathematical machinery of orthogonal polynomials can be used to efficiently obtain the effective Hamiltonian. In section \ref{sec:OP}, we briefly review some standard facts about orthogonal polynomials and discuss their implications for our bulk effective theory. In section \ref{sec:continuum}, we discuss the low-energy continuum limit and show the emergence of JT gravity from our effective Hamiltonian in a large class of matrix models. We end with a brief summary in section \ref{sec:discussion}. Further mathematical details on the theory of orthogonal polynomials are relegated to the appendices. 

\section{Preliminaries}\label{sec:prelim}
\subsection{The DSSYK model}
The SYK model consists of $N$ Majorana fermions with the Hamiltonian given by:
\beq
H=i^{p/2} \sum_{i_1<....<i_p} J_{i_1i_2....i_p} \ \psi_{i_1} \psi_{i_2} ....\psi_{i_p},\;\;\;\left\{\psi_i,\psi_j\right\} = 2\delta_{ij},
\eeq
where $J_{i_1i_2....i_p}$ are i.i.d gaussian random variables with zero mean, and variance given by
\beq 
\langle J_{i_1\cdots i_p}^2\rangle =\left(\begin{matrix}N\\p\end{matrix}\right)^{-1}\mathcal{J}^2.
\eeq 
The double scaling limit is achieved by taking $N,p \to \infty$ while keeping $\lambda=\frac{2p^2}{N}$ fixed \cite{Berkooz:2024lgq}. The model is exactly solvable in the double scaling limit; for instance, the thermal partition function can be calculated exactly in this limit using the technique of ``chord diagrams''. In calculating these quantities using chord diagrams, one encounters an \emph{effective Hilbert space} of chords, $\mathcal{H}_{\text{eff}}$. The effective Hilbert space is spanned by orthonormal states $\{|n\rangle\}_{n=0}^{\infty}$ which are to be interpreted as chord number states, where the $|0\rangle$ state is the (normalized) maximally entangled state. The Hamiltonian can then be expressed in terms of these chord number states and takes the following tri-diagonal form:
\beq\label{eq:Heff1}
-H_{\text{eff}}\ket{n}= b_n \ket{n-1} +  b_{n+1} \ket{n+1},
\eeq
where the coefficients $b_n$ are given by
\beq \label{eq:Heff2}
b_n=\sqrt{\frac{1-\bq^n}{1-\bq}},\;\;\;\; \bq := e^{-\lambda}.
\eeq 
Using this effective Hamiltonian, one can calculate the partition function:
\beq 
Z(\beta) = \text{Tr}\,e^{-\beta H} = \langle 0|e^{-\beta H_{\text{eff}}}|0\rangle,
\eeq 
for any fixed $O(1)$ value of $\beta$ in the $N\to \infty$ limit, where the trace appearing in the above expression is a ``renormalized trace'' defined as $$\mathrm{Tr}(\mathbb{1}) = 1,$$
and is related to the standard trace by an overall factor of $2^{N/2}$. Performing an inverse Laplace transform with respect to the temperature, one obtains the spectral density $\rho(E)$. In order to state the result, it is convenient to rewrite the energy $E$ in terms of an angle variable $\theta \in (0, \pi)$ as 
\beq \label{eq:energy}
E(\theta)=-\frac{2 \ \text{cos}(\theta)}{\sqrt{1-\bq}}.
\eeq 
Then, the spectral density is given by:
\beq 
dE\,\rho(E) = d\theta\,\mu(\theta),
\eeq 
where
\beq \label{eq:density}
\mu(\theta)=\frac{1}{2\pi}(\bq;\bq)_{\infty} (e^{2i\theta};\bq)_{\infty} (e^{-2i\theta};\bq)_{\infty},
\eeq
and the $q$-Pochhammer symbol $(a;\bq)_{n}$ is defined as
\beq 
(a;\bq)_{n} := \prod_{k=0}^{n-1}(1- a\,\bq^k).
\eeq 
 Note that the spectrum is compactly supported in the range $-\frac{2}{\sqrt{1-\bq}} \leq E \leq \frac{2}{\sqrt{1-\bq}}$. It is convenient to scale out the overall factor of $\frac{2}{\sqrt{1-\bq}}$ in equation \eqref{eq:energy} by defining a rescaled Hamiltonian:
\beqn \label{eq:rescaledH}
-H_{\text{eff}}^{(\text{new})} &=& -\frac{1}{2}\sqrt{1-\bq}\,H_{\text{eff}}^{(\text{old})}\nonumber\\
&=& \frac{1}{2}\sqrt{1-\bq^n}|n-1\rangle + \frac{1}{2}\sqrt{1-\bq^{n+1}}|n+1\rangle,
\eeqn  
so that the rescaled spectrum is supported in the range $-1 \leq E \leq 1$.   For the rest of this paper, we will work with this convention. 

\subsection{JT gravity in the low-energy limit}
A striking feature of the semi-classical description of the DSSYK model in terms of chord-number states is that this description reduces to JT gravity in the so called \emph{triple-scaling limit}, which involves taking the $\lambda\to 0$ limit and zooming in near the edge of the spectrum (i.e., taking a low-energy limit) \cite{Berkooz:2018qkz, Lin:2022rbf}. To see how this happens, first consider the spectral density in the $\lm\to 0$ limit. Expanding $\rho(E)$ near the edge by defining 
\beq \label{eq:Escaling}
E = -1 + \lambda^2 \widetilde{E},
\eeq
and sending $\lambda \to 0$ while holding $\widetilde{E}$ fixed, one finds \cite{Berkooz:2018qkz, Berkooz:2024lgq}
\beq
 \rho(E)\,dE
= C(\lambda)\ \text{sinh}\left(2\pi \sqrt{2\widetilde{E}} \right) d\widetilde{E},
\eeq
where $C$ 
is a constant independent of the rescaled energy, but vanishing in the $\lambda \to 0$ limit. This was expected -- the spectral density of the original DSSYK model is normalized to have unit integral over the range $-1\leq E\leq 1$, but now we are zooming in close to the edge, and only a vanishingly small part of the total eigenvalue content survives near the edge. Nevertheless, the zoomed-in spectral density has the right dependence on the rescaled energy to match with the spectral density of JT gravity (which, of course, is unbounded above and thus has no natural normalization). In this way, the spectral density of DSSYK approaches that of JT gravity in the low-energy limit. 

Next, let's look at the effective Hamiltonian $H_{\text{eff}}$ in equation \eqref{eq:rescaledH}. The Lanczos coefficients are given by
\beq \label{eq:DSSYKLC}
b_n=\frac12\sqrt{1-e^{-\lambda n}}.
\eeq
To get to the continuum limit, we zoom in to a region at large $n$ by defining: 
\beq \label{eq:scaled variables}
\ell = \lambda(n-n_*),\;\;\;n_* = \frac{1}{\lambda}\log(\frac{1}{\lambda^2}),
\eeq 
and taking $\lambda \to 0$ while holding $\ell$ fixed. Furthermore, in this limit we must also scale the inverse temperature as:
\beq 
\beta = \frac{1}{\lambda^2}\tbeta,
\eeq 
in accordance with the corresponding energy rescaling given in equation \eqref{eq:Escaling}. The new variable $\ell$ should be regarded as a continuous ``renormalized length'' variable. In this sector of the Hilbert space, the Lanczos coeffcients are given by:
\beq \label{eq:LanczosDSSYK}
b_n =\frac{1}{2}\sqrt{1-\lambda^2e^{-\ell}} \simeq \frac{1}{2} - \frac{1}{4}\lambda^2 e^{-\ell}+O(\lambda^4).
\eeq
In order to derive the effective Hamiltonian in the scaling limit, we look at the Schrodinger equation for the wavefunction in the Krylov/chord-number basis:
\beq 
-\partial_{\beta} \langle n| \Psi(\beta)\rangle = \frac{1}{2} \langle n | H_{\text{eff}} |\Psi(\beta)\rangle.
\eeq 
Assuming the wavefunction is a sufficiently smooth function of $n$ such that it descends to a function of $\ell$ in the scaling limit, it is natural to define:\footnote{With this redefinition, the inner product in the $n$-basis translates to the standard $L^2$ norm on functions of $\ell$.} 
\beq 
\psi_{\tbeta}(\ell) : = \frac{1}{\sqrt{\lambda}} \langle n| \Psi(\beta)\rangle.
\eeq 
Using the form of $H_{\text{eff}}$ in equation \eqref{eq:rescaledH}, we get a Schrodinger equation for the continuum wavefunction $\psi_{\tbeta}(\ell)$:
\beq 
-\partial_{\tbeta} \psi_{\tbeta}(\ell) = \frac{1}{2} H_{\text{cont.}}\psi_{\tbeta}(\ell),
\eeq 
where
\beq
H_{\text{cont.}}=-\frac{1}{\lambda^2}+\frac{1}{2} \left(-\partial_{\ell}^2+e^{-\ell}\right).
\eeq
This matches the JT gravity Hamiltonian where $\ell$ has the interpretation of the renormalized length of the wormhole connecting the two sides of the eternal black hole.

\subsection{Krylov basis and emergent semi-classicality}
The discussion so far has been about a specific theory, namely the DSSYK model. Our goal in this paper is to generalize these ideas to a large class of one-cut random matrix models. To this end, let us now momentarily change gears, and consider a general quantum system with a Hilbert space $\mathcal{H}$ of dimension $D$; we will return to DSSYK at the end of the section. For the purposes of studying thermal states (potentially also evolved in Lorentzian time), it is actually natural to consider the doubled Hilbert space $\mathcal{H}\otimes \mathcal{H}^\star$, which can also be thought of as the space of operators acting on $\mathcal{H}$. We take the initial state to be the canonical maximally entangled state $|\Omega\rangle$ proportional to the identity operator:
\beq 
|\Omega\rangle = \frac{1}{\sqrt{D}}\sum_i |i\rangle \otimes |i^\star\rangle,
\eeq 
where $\{|i\rangle\}$ is some orthonormal basis for $\mathcal{H}$, and correspondingly $\{|i^\star\rangle\}$ is the conjugate basis. We are interested in constructing a semi-classical description of thermo-field double states of the form:
\beq \label{eq:TFD}
|\Psi(z)\rangle = e^{-z\,H}|\Omega\rangle,
\eeq 
where $z= (\frac{\beta}{2}+it)$ and  $H$ is the one-sided Hamiltonian acting on the right Hilbert space, which from the operator point of view corresponds to right multiplication. Indeed, the thermal partition function is simply the norm of the above state:
\beq 
Z(\beta) = \mathrm{Tr}\,e^{-\beta H} = \langle \Psi(z)|\Psi(z)\rangle,
\eeq 
where once again the trace appearing above is the renormalized trace. For the initial state $|\Omega\rangle$ and Hamiltonian $H$, the \emph{Krylov basis} is defined \cite{viswanath1994recursion, Parker:2018yvk, Balasubramanian:2022tpr} by starting with the ordered set of states 
\beq 
|\Omega\rangle,\, H |\Omega\rangle,\;H^2|\Omega\rangle,\cdots,
\eeq 
and orthonormalizing them using the Gram-Schmidt procedure. The resulting set of orthonormal states form a basis for the \emph{Krylov subspace}, i.e., the subspace of the Hilbert space spanned by the Euclidean/Lorentzian time evolution of the initial state $\Omega$. For sufficiently chaotic Hamiltonians, we expect the Krylov subspace to be a $D$-dimensional subspace of $\mathcal{H}\otimes \mathcal{H}^\star$ spanned by states with the ``same energy'' on both sides. It is a  general fact that the Hamiltonian, when expressed in the Krylov basis, takes a tri-diagonal form:
\beq 
-H |n\rangle = -a_n |n\rangle + b_n|n-1\rangle + b_{n+1}|n+1\rangle,
\eeq
where the coefficients $\{a_n, b_n\}$ are called \emph{Lanczos} or \emph{recursion} coefficients. Note that the Krylov basis can be redefined using the ``gauge freedom'' to multiply individual basis vectors by local phases:
\beq \label{eq:gauge}
|n\rangle' = e^{i\phi_n}|n\rangle. 
\eeq 
Under such a redefinition, we get
\beq 
a_n'=a_n,\;\;b_n' = e^{i(\phi_n-\phi_{n-1})}b_n.
\eeq 
It is natural to go to a gauge where the coefficients $\{b_n\}$ are all real, but their signs are not fixed. Later, we will fix a particular sign convention which is natural from the point of view of taking the low-energy limit.  

The point of introducing the Krylov basis is that the states $|\Psi(z)\rangle$ in equation \eqref{eq:TFD} are naturally contained in the Krylov subspace with respect to the one-sided Hamiltonian, and thus can naturally be expressed in the Krylov basis. It has been argued that the Krylov basis minimizes the growth of the complexity of classical simulation under Hamiltonian evolution $e^{-zH}$, and is thus an ideal candidate for a dual semiclassical description of the family of states $\Psi(z)$ \cite{Balasubramanian:2022tpr, Basu:2024tgg, Basu:2025mmm, Balasubramanian:2026klv}. This description becomes particularly useful in the limit where the Hilbert space dimension goes to infinity with $z$ fixed to be $O(1)$. In this limit, it is natural to consider the \emph{effective} Hilbert space spanned by Krylov basis states with $n$ fixed as $D\to \infty$:
\beq 
\mathcal{H}_{\text{eff}} = \text{span}\,\{|n\rangle\},\;\;\;(n\;\text{fixed},\, D\to \infty).
\eeq 
We can think of this as the analog of the ``small'' Hilbert space or ``code subspace'' defined in \cite{Papadodimas:2013wnh, Papadodimas:2013jku, Almheiri:2014lwa}, but now for the Hamiltonian operator. The reason the description in terms of the Krylov basis becomes useful is encapsulated by the following three properties of this effective description in general quantum systems with a smooth, compactly supported spectral density in the large $D$ limit: 
\begin{enumerate} 
     \item \textbf{Locality}: The \emph{effective Hamiltonian}, i.e., the restriction of $H$ to $\mathcal{H}_{\text{eff}}$ is tri-diagonal, and hence has emergent locality on the Krylov chain. 
      \item \textbf{Smoothness}: The Lanczos coefficients $\{a_n,b_n\}$ of $H_{\text{eff}}$ approach smooth functions of $n$ in the $D\to \infty$ limit, with any erratic behavior being suppressed in $\frac{1}{D}$.
     \item \textbf{Confinement}: The wavefunction of a TFD state $\Psi(z)$ remains \emph{confined} within the effective Hilbert space $\mathcal{H}_{\text{eff}}$ in the $D\to\infty$ limit for any $O(1)$ value of $z$.
\end{enumerate}  

 The first point is evident. To see smoothness, note that for a given Hamiltonian $H$, the Lanczos coefficients can be extracted from the moments of $H$ as follows:    
\beqn
\mu_1 &=& \langle 0 | H | 0\rangle = a_0,\nonumber\\
\mu_2 &=& \langle 0 | H^2 |0\rangle = a_0^2 + b_1^2,\nonumber\\
\mu_3 &=& \langle 0 | H^3|0\rangle = a_0^3 + 2 a_0 b_1^2 + b^2_1a_1,
\eeqn
and so on, where $\langle 0 | H^n |0\rangle = \text{Tr}(H^n)$, and by trace we mean the one-sided, normalized trace. Furthermore, if we are only interested in the coefficients $\{a_n,b_n\}$ for fixed $n$ in the $D\to \infty$ limit, then these recursion coefficients are given in terms of the moments $\{\mu_n\}$ with fixed $n$ in the $D\to \infty$ limit. For a system with a smooth large-$D$ spectral density, the moments are given by smooth functions:
\beq 
\mu_n = \int dE\,\rho(E)\,E^n,
\eeq 
 from which follows the smoothness of the Lanczos coefficients. Of course, the spectral density at any finite $D$ is not a smooth function, but one way to define a smooth spectral density is to compute the inverse Laplace transform of the partition function $Z_{\infty}(\beta) = \lim_{D\to \infty} Z_D(\beta)$.\footnote{Had we computed the inverse Laplace transform before taking the $D\to \infty$ limit, we would obtain a sum over delta functions.} Equivalently, we could coarse-grain over energy windows at energy scales $\frac{1}{D}\ll \frac{\delta E}{E} \ll 1$. In the context of random matrix theory where $H$ is drawn from an ensemble, it is also natural to define the ensemble averaged spectral density $\overline{\rho(E)}.$ In that context, a related fact is that the Lanczos coefficients for $n\sim O(1)$ are \emph{self-averaging}. This is because the moments $\mu_n$ with $n$ fixed in the $D\to \infty$ limit are self-averaging, i.e., they can be written as 
\beq 
\mu_n = \overline{\mu}_n + \delta \mu_n,
\eeq 
where $\overline{\mu}_n$ are the ensemble-averaged values, while the fluctuations $\delta \mu_n$ are $O(\frac{1}{D})$. Correspondingly, the Lanczos coefficients in the large-$D$ limit are also essentially given by their ensemble-averaged values, with fluctuations of the order $\delta a_n,\,\delta b_n \sim O(\frac{1}{D})$. Thus, in the $D\to \infty$ limit the recurssion coefficients approach their the ensemble-averaged values $\{\overline{a}_n, \overline{b}_n\}$, with erratic fluctuations of $O(\frac{1}{D})$.

 Finally, the confinement property follows from a basic result in linear algebra \cite{linalg0, linalg, linalg2}. Let $P_{\Lambda}$ be the projector onto the Krylov subspace with $n\leq \Lambda-1$, and let us assume that the Hamiltonian $H$ has a bounded spectrum such that $-E_{\text{max}}\leq  E \leq E_{\text{max}}$. We would like to estimate how much support the state $|\Psi(z)\rangle$ has outside the Krylov subspace of depth $\Lambda$. We would thus like to compute $||\left(\mathbb{1} - P_{\Lambda}\right)\Psi(z)||$, where $\| X \| = \sqrt{\langle X|X\rangle}$ is the standard Hilbert space norm. Note that
\beq 
(\mathbb{1} - P_{\Lambda}) e^{-zH}|0\rangle  =(\mathbb{1} - P_{\Lambda}) \sum_{n = \Lambda}^{\infty}\frac{(-z H)^n}{n!}|0\rangle.
\eeq
Thus,
\beqn
\| (\mathbb{1} - P_{\Lambda}) e^{-zH}|0\rangle \| &=& \|(\mathbb{1} - P_{\Lambda}) \sum_{n = \Lambda}^{\infty}\frac{(-z H)^n}{n!}|0\rangle\|\nonumber\\
&\leq & \left\|\sum_{n = \Lambda}^{\infty}\frac{(-z H)^n}{n!}|0\rangle\right\|\nonumber\\
&\leq &  \sum_{n = \Lambda}^{\infty}\left\| \frac{(-z H)^n}{n!}|0\rangle \right\|\nonumber\\
&=& \sum_{n = \Lambda}^{\infty}\frac{|z|^n}{n!} E_{\text{max}}^n.
\eeqn 
From standard bounds on the convergence of Taylor series, we thus find that
\beq 
\| \Psi(z) - P_{\Lambda} \Psi(z) \| \leq e^{\xi} \,\frac{(|z|E_{\text{max}})^{\Lambda}}{\Lambda!}\simeq e^{\xi}\,\frac{e^{-\Lambda \left(\log(\frac{\Lambda}{|z|E_{\text{max}}})-1\right)}}{\sqrt{2\pi\Lambda}},
\eeq 
 where $\xi \in (0,E_{\text{max}})$ is arbitrary, and in the last line we have used Stirling's approximation. Thus, for $\Lambda > E_{\text{max}}|z|$, the norm of the tail of the wavefunction decays super-exponentially in $\Lambda$. So, if we first send $D \to \infty$ and then $\Lambda \to \infty$ while making sure that $|z| E_{\text{max}}$ stays fixed, then the state remains effectively confined to the effective Hilbert space $\mathcal{H}_{\text{eff}}$. This is essentially the same reason why Krylov subspace techniques are useful for efficient numerical calculations of exponentials of large matrices \cite{linalg0, linalg, linalg2}.

For the reasons explained above, \emph{the Krylov basis provides a local, smooth, semiclassical effective description of (time-evolved) thermal states $\Psi(z)$ for quantum systems with a smooth and compactly-supported spectral density in the $D\to \infty$ limit}. Here, by semiclassical we mean that the state $\Psi(z)$ is not spread over a subspace whose dimension is scaling in some way with $D$ in the $D\to \infty$ limit. This is an extremely important feature of the effective description in the Krylov basis -- the time evolution of a generic initial state by a random Hermitian Hamiltonian generically becomes exponentially hard to simulate classically within an $O(1)$ amount of time evolution. It is only in an extremely fine-tuned choice of basis, such as the Krylov basis, that the dynamics becomes classically simulable \cite{Basu:2024tgg, Basu:2025mmm, Balasubramanian:2026klv}.  

Coming back to the DSSYK model, it was pointed out in \cite{Lin:2022rbf} (see also \cite{Rabinovici:2023yex, Ambrosini:2024sre}) that the chord length basis is identical to the Krylov basis with respect to the maximally entangled state. Thus, it appears that the DSSYK Hamiltonian written in the chord length basis is precisely the emergent, semiclassical description in Krylov space that we have discussed above. In the rest of this paper, we will take this as our guiding principle in extending the results of DSSYK to a larger class of theories.

\section{Semi-classical dynamics in random matrix theory}\label{sec:largen}
In this section, we will study the semiclassical effective description which emerges in the Krylov basis in a specific class of theories, namely one-cut random matrix models. Our goal is to explain how the effective Hamiltonian $H_{\text{eff}}$ can be extracted from the spectral density of the model in the $D\to \infty$ limit. 
\subsection{Setup}
Consider a theory defined on a Hilbert space of dimension $D$, where the Hamiltonian is drawn randomly from a unitary-invariant ensemble over $D\times D$ Hermitian matrices with a probability measure of the form:
\beq 
\mu(H) = \frac{1}{Z}e^{-D\,\text{Tr}\,V(H)},
\eeq 
for some potential function $V(H)$. We will particularly be interested in potentials such that the corresponding ensemble-averaged spectral density of eigenvalues:
\beq 
\om(E) := \overline{\rho(E)},\;\;\rho(E)= \frac{1}{D}\sum_{i=1}^D \delta (E- E_i),
\eeq 
in the $D\to\ \infty$ limit is compactly supported on an interval. For simplicity, we will focus on the case where the potential is an even function of $H$, and correspondingly the spectral density is symmetric under $E\to -E$. We can then appropriately rescale the couplings in the potential to make sure that the spectrum is supported on the interval 
$$E \in [-1,1].$$ We will further restrict our attention to the class of random matrix theories where the averaged spectral density in the $D\to \infty$ limit takes the form:
\beq 
\omega(E)=(1-E^2)^{1/2} h(E),
\eeq 
where $h(E)$ is some \emph{positive}, real-analytic function over the interval $[-1,1]$. The choice of the square-root edges greatly simplifies the analysis, and turns out to be natural from the gravity point of view.

As before, we will be interested in thermo-field double states $\Psi\in\mathcal{H}\otimes \mathcal{H}^\star$ of the form 
\beq 
|\Psi(z) \rangle = e^{-z H}|\Omega\rangle,\;\;\;(z= \tau + it),
\eeq 
where $z=(\frac{\beta}{2} + it)$ with $\beta \geq 0$, $H$ is the one-sided Hamiltonian (which we think of acting on the right subsystem) and $\Omega$ is the maximally entangled state. We wish to construct a semi-classical description for such states using the Krylov basis. Given a specific Hamiltonian from the random matrix ensemble, we can construct the corresponding Krylov basis $\{|n\rangle\}_{n=0}^{D-1}$ with respect to the maximally entangled state $\Omega$ by applying the Gram-Schmidt process to the ordered set of states $\{H^n|\Omega\rangle\}_{n=0}^{D-1}$.\footnote{In random matrix theory, we expect the dynamics to be ergodic enough so that the Krylov subspace should be of dimension $D$. Note that we can at best get a set of $D$ basis states since the Krylov subspace is spanned by states of the form $|E_i\rangle \otimes |E_i^\star\rangle$, i.e., with $E_L=E_R$.} As before, the Hamiltonian expressed in the Krylov basis takes a tri-diagonal form:
\beq\label{eq:HK2}
-H\ket{n}= -a_n\ket{n}+b_n\ket{n-1}+b_{n+1}\ket{n+1}.
\eeq 
Using the gauge freedom explained in equation \eqref{eq:gauge}, we will henceforth fix the phases of the Krylov basis vectors so as to make the coefficients satisfy $b_n \geq 0$ for all $n$. This choice for the signs turns out to be a convenient one in taking the low-energy, continuum limit, so that in this gauge, low-energy states also end up having slowly-varying wavefunctions in the Krylov basis. 

As discussed previously, the Lanczos coeffcients in the limit $D\to \infty$ with $n$ fixed are essentially given by their ensemble averaged values, with fluctuations of the order $\delta a_n,\;\delta b_n \sim O(\frac{1}{D})$. Thus, at large $D$, we obtain an \emph{effective Hamiltonian} acting on a semi-classical Hilbert space comprising of the subspace $\mathcal{H}_{\text{eff}} = \text{span}\{|n\rangle,\; n\; \text{fixed}, D\to \infty\}$:
\beq \label{eq:Heff3}
-H_{\text{eff}} |n\rangle = -\overline{a}_n |n\rangle + \overline{b}_{n}|n-1\rangle + \overline{b}_{n+1}|n+1\rangle.
\eeq 
In fact, for random matrix models with even potentials, ensemble averaged odd moments vanish, and this implies that $\overline{a}_n=0$ for all $n$; thus, we only need to worry about the $\overline{b}_n$. It is natural to view this restriction of the Hamiltonian to the semiclassical Hilbert space in the $D\to \infty$ limit as the analog of the ``chord basis'' description of DSSYK. Indeed, as explained in \cite{Lin:2022rbf, Rabinovici:2023yex}, the chord basis of DSSYK precisely matches with the Krylov basis, at least in the absence of matter. Our task now is to understand how to construct $H_{\text{eff}}$, or in other words, how to extract the ensemble-averaged Lanczos coefficients $\{\overline{b}_n\}$ from properties of the spectral density of the model at large $D$.

\subsection{Orthogonal polynomials}\label{sec:poly}
There is a convenient way to obtain the ensemble averaged Lanczos coefficients $\overline{b}_n$ using the method of orthogonal polynomials \cite{Kar:2021nbm, Muck:2022xfc}. To see how this works, consider the overlap
\beq 
\phi_n(E_i) = \sqrt{D}\,\langle \widehat{E}_i|n\rangle,
\eeq 
for the moment at finite $D$, where  $|\widehat{E}_i\rangle = |E_i\rangle\otimes |E_i^\star\rangle$ for an energy eigenstate $|E_i\rangle$ of the Hamiltonian. From equation \eqref{eq:HK2}, we see that 
\beq \label{eq:rec1}
-E_i \phi_n(E_i) = -a_n \phi_n(E_i) + b_{n}\phi_{n-1}(E_i) + b_{n+1}\phi_{n+1}(E_i).
\eeq 
In addition, 
\beq \label{eq:rec2}
\phi_0(E_i) = \langle \widehat{E}_i | \Omega\rangle = 1.
\eeq 
Equations \eqref{eq:rec1} and \eqref{eq:rec2} imply that $\phi_n(E_i)$ are polynomials in $E_i$. Furthermore, the states $|\widehat{E}_i\rangle$ form a complete set of states in the Krylov subspace of interest, and so inserting these in the relation $\langle n|n'\rangle=\delta_{n,n'}$, we get
\beq 
\int dE\,\rho(E)\, \phi_n^*(E)\, \phi_{n'}(E) = \delta_{n,n'},\;\;\,\rho(E) = \frac{1}{D}\sum_{i}\delta(E-E_i).
\eeq 
Taking the ensemble average of this equation in the $D\to \infty$ limit and using large-$D$ factorization, we get
\beq 
\int_{-1}^1 dE\,\omega(E)\, \psi_n^*(E)\, \psi_{n'}(E) = \delta_{n,n'},
\eeq 
where $\omega(E) = \overline{\rho(E)}$, and the ensemble-averaged polynomials $\psi_n(E)$ satisfy the Schrodinger equation with respect to the effective Hamiltonian \eqref{eq:Heff3}:
\beq \label{eq:rec3}
-E \psi_n(E) = -\overline{a}_n \psi_n(E) + \overline{b}_{n}\psi_{n-1}(E) + \overline{b}_{n+1}\psi_{n+1}(E),
\eeq 
\beq 
\psi_0(E) = 1.
\eeq 
Thus, the effective wavefunctions $\psi_n(E)$ are orthogonal polynomials (or more precisely, orthonormal polynomials) with respect to the weight function $\omega(E)$, i.e., the $D\to \infty$ limit of the spectral density of the matrix model. It is an important property of orthogonal polynomials that they satisfy a three-term recursion relation -- in our context, this is related to the tri-diagonal nature of the effective Hamiltonian. The coefficients in the recursion relation of the orthogonal polynomials are then precisely the ensemble-averaged Lanczos coefficients in the $D\to \infty$ limit. This allows us to efficiently extract the effective Hamiltonian from knowledge of the spectral density of the matrix model, via its orthogonal polynomials. Note that orthogonal polynomials are famously useful in random matrix theory \cite{Mehta, DiFrancesco:1993cyw, Eynard:2015aea}, but in the usual context, one considers orthogonal polynomials with respect to the the measure of $e^{-V}$, where $V$ is the potential that defines the random matrix ensemble. In contrast, here we are considering orthogonal polynomials with respect to the $D\to \infty$ limit of the spectral density.
\subsection{Properties of the effective theory}
It turns out that while the $\overline{b}_n$ coefficients can in general depend on the detailed form of the weight function at finite $n$, the asymptotic behavior of the $\overline{b}_n$ in the $n\to \infty$ limit is quite universal and depends only on behavior near the edges and some basic analytic properties of the weight $\omega(E)$. This allows us to say some universal things about the effective Hamiltonian:
\begin{enumerate}
\item \textbf{Free particle at large $n$}: For any one-cut matrix model with compactly supported spectral density, the coefficients $\overline{b}_n$ asymptotically approach constant values:
\beq 
\overline{b}_n \sim \frac{1}{2},
\eeq 
in the $n\to \infty$ limit \cite{kuijlaars2003riemann, kuijlaars2004riemann}. This implies that in any such random matrix theory, the effective Hamiltonian in the Krylov basis approaches that of free-particle hopping on a one-dimensional chain at large enough $n$:
\beq 
-H_{\text{eff}}\sim \frac{1}{2}\sum_n \left(|n\rangle\langle n+1| + |n+1\rangle\langle n|\right),\;\;\;\cdots \;\;\;(n\to \infty).
\eeq 
\item \textbf{Exponential potential}: Furthermore, if the spectral density has square-root edges, then the Lanczos coefficients approach their asymptotic values exponentially:
\beq \label{eq:basymp}
\overline{b}_n \sim \frac{1}{2}- \alpha \,e^{-c\,n}+\cdots,
\eeq
where the constant $c$ is related to the location of the closest zero of the spectral density away from the cut; this will be explained in more detail in section \ref{sec:OP} (see theorem \ref{thm:main}). The same exponential-approach-to-saturation for the $\overline{b}_n$ coefficients is present in the DSSYK model as well (see equation \eqref{eq:Heff2}) and plays a crucial role in the effective chord Hamiltonian reducing to the Liouville Hamiltonian of JT gravity with the exponential potential. 

\item \textbf{Emergence of JT gravity:} It is thus tempting to conjecture that for any matrix model that constitutes a UV completion of the spectral density of JT gravity -- namely, that the spectral density in the $D\to \infty$ limit is even, compactly supported and approaches $\sinh(\sqrt{2E})$ near the edges but is unconstrained away from the edges -- the semiclassical effective Hamiltonian in the Krylov basis should reduce to the Liouville Hamiltonian of JT gravity in a low-energy limit. In section \ref{sec:continuum}, we will give evidence supporting this conjecture by showing this for a large class of UV deformations of the DSSYK spectral density. This then makes it natural to interpret the Krylov effective theory as a dual gravitational description.
\end{enumerate}
\subsection{A remark about entropy}
So far, we have emphasized that given the (ensemble-averaged) spectral density $\omega(E)$ of the matrix model in the $D\to \infty$ limit, we can extract the Krylov effective Hamiltonian from it via the orthogonal polynomials of $\omega$. Interestingly, we can also go in the reverse direction -- given the effective Hamiltonian in the Krylov basis, we can first compute the (renormalized) ensemble-averaged partition function:
\beq 
\overline{Z(\beta)} = \langle 0|e^{-\beta H_{\text{eff}}}|0\rangle,
\eeq
where the left hand side is the boundary partition function while the right hand is the ``bulk'' overlap between the Hartle-Hawking states $e^{-\beta/2 H_{\text{eff}}}|0\rangle$. From here, we can obtain the large-$D$ spectral density, and hence the (renormalized) entropy $S(E)\sim \log\,\omega(E)$, by computing the inverse Laplace transform of the (renormalized) partition function:
\beq 
e^{S(E)} = \int_{-\infty}^{\infty}dt\,e^{itE}\langle 0| e^{-itH_{\text{eff}}}|0\rangle.
\eeq 
Crucially, the use of the effective description in computing these quantities is justified because the wavefunction of a TFD state at $O(1)$ temperature or time is \emph{confined} within the effective Hilbert space. But at the same time, since the Krylov effective theory is defined in the $D\to \infty$ limit with $\beta$ or $t$ held fixed to be $O(1)$, the above calculation should naturally give a smooth spectral density (and not a sum of delta functions). This explains why the semiclassical effective theory ``knows'' about the entropy $S(E)$ in the microscopic theory, but only in a coarse-grained sense; this is consistent with what we know from AdS/CFT, and can be thought of as toy model towards a logical explanation of the Euclidean path integral calculations of entropy in gravity \cite{Gibbons:1976ue, Lewkowycz:2013nqa}.

\section{Some relevant mathematical results on orthogonal polynomials}\label{sec:OP}
In this section, we will briefly review some mathematical results on orthogonal polynomials and their asymptotics that will be relevant for us.
\subsection{Large $n$ limit: Riemann Hilbert approach}
 
Here, we will review a fundamental theorem about the large-$n$ asymptotics of orthogonal polynomials for one-cut weights with square-root edges. We will largely follow the presentation in \cite{kuijlaars2003riemann, kuijlaars2004riemann}; further details will be relegated to Appendix \ref{app:asymptotics} (see also \cite{Murugan:2026rfa} for a recent paper applying similar techniques in the context of Krylov space in random matrix theory). It turns out to be convenient to work with the ``monic'' polynomials
\beq 
\pi_n(E)=\frac1{\gamma_n} \psi_n(E),
\eeq 
where $\pi_n(E)$ is defined such that the coefficient of the $E^n$ is equal to 1. Substituting this in equation \eqref{eq:rec3}, we learn that
\beq 
\bb_{n+1} =- \frac{\gamma_n}{\gamma_{n+1}},
\eeq 
and we get the following recursion relations for monic polynomials:
\beq\label{eq:rec pi}
E\, \pi_n(E)= \ba_n \pi_n(E)+\bb_n^2 \pi_{n-1}(E)+\pi_{n+1}(E).
\eeq\label{eq-Y}
A clever way to obtain the asymptotic behavior of these polynomials is to construct a corresponding Riemann-Hilbert problem. Introduce the matrix $Y(z)$ defined as:
\beq
Y(z)=\mat{\pi_n(z)}{C[\pi_n \omega](z)}{c_n \pi_{n-1}(z)}{c_n\, C[\pi_{n-1}\omega](z)},
\eeq
where $\pi_n(z)$ is the analytic continuation of $\pi_n(E)$ to the complex plane $z = (x+iy)\in\mathbb{C}$, $c_n=-2 \pi i \gamma_{n-1}^2$ and $C[f]$ is the Cauchy transform defined as:
\beq\label{eq:cauchy}
C[f](z)= \int_{-1}^1\frac{dx}{2 \pi i}\, \frac{f(x)}{x-z}.
\eeq
Strictly speaking, we should be labeling the matrix $Y$ with the subscript $n$, but we will avoid doing this in order to simplify notation. The matrix $Y$ constructed out of the monic orthogonal polynomials for the weight $\omega(x)$ has the following properties in the complex $z$ plane:
\begin{enumerate}
    \item $Y(z)$ is analytic in the complex $z$ plane everywhere except on $[-1,1]$ interval.
    \item $Y(z) \to (\mathbb{1}+O(1/z))\cdot \mat{z^n}{0}{0}{z^{-n}}$ as $z \to \infty$.
    \item It has following jump matrix across $[-1,1]$:
    \beq
    Y_+(x)=Y_-(x) \mat{1}{\omega(x)}{0}{1},
    \eeq
    where $Y_\pm = \lim_{\eps \to 0} Y(x\pm i\eps)$.
    \item As $z \to \pm1$, we get:
    \beqn
    Y(z) &\to&  O\mat{1}{1}{1}{1} .
    \eeqn
\end{enumerate}
It turns out that $Y(z)$ is the unique complex-valued matrix which satisfies the above properties (see appendix \ref{app:unique}). These set of conditions on the analytic and asymptotic behavior of $Y$ are said to constitute a \emph{Riemann-Hilbert problem} (RHP). The point is that given a weight function, we can construct the corresponding RHP, from which we can extract the  corresponding orthogonal polynomials. But the RHP can be made more amenable to a large-$n$ asymptotic analysis by doing a series of invertible matrix transformations $Y(z) \to T(z) \to S(z) \to R(z)$, with each matrix in this chain satisfying a corresponding RHP of its own. These transformations simplify the RHP enough, so that the RHP for $R(z)$ can be solved order by order in an expansion at large $n$. After obtaining $R(z)$, we then simply use the inverse transformations to get $Y(z)$, thus obtaining the asymptotic form for $\pi_n(z)$, from which, of course, one can obtain the asymptotic formulas for the Lanczos coefficients. For the reader's convenience, we have reviewed these RHP transformations in some detail in Appendix \ref{app:asymptotics} following the presentation in \cite{kuijlaars2003riemann, kuijlaars2004riemann}. We only state the final result here:\footnote{The requirement in point number 4 that the zeros lie on the real line can be relaxed without too much difficulty.}

\begin{theorem}[Large-$n$ asymptotics] \label{thm:main}
Consider a unitarily invariant random matrix model such that the spectral density $\om(E)$ in the $D\to \infty$ limit has the following properties:
\begin{enumerate}
    \item $\om(E)$ is compactly supported on the interval $\Sigma_0 = [-1,1]$.
    \item $\om(-E) = \om(E).$
    \item $\om$ has square-root edges, i.e., we can write 
    $$\om(E)=\sqrt{1-E^2} \ h(E),$$
    where $h(E)$ is a positive real-analytic function on $\Sigma_0$, such that $h$ can be analytically continued to a function $h(z)$ in the complex plane.
    \item  The zeros of the function $h(z)$ are all simple, come in pairs $\{z_i=\pm x_i\}$ and all lie on the real line, outside $\Sigma_0$, i.e., $x_i \in \mathbb{R} \setminus \Sigma_0$ and $x_i>1$. 
    \item As $z \to \infty$, $h(z)$ is such that the following integral vanishes:
    \beq
    \oint_{\Gamma_{\infty}} \frac{dz}{z-s} \frac{z^{-2n-1}}{h(z)} 
    \eeq
    where $\Gamma_{\infty}$ is the closed contour at infinity, for any s.\footnote{For the purposes of studying the asymptotics of Lanczos coefficients as $n\to \infty$, it suffices to impose that $\frac1{h(z)}$ should not grow super-polynomially as $z \to \infty$. We will only consider such functions $h(z)$ throughout this paper.}
\end{enumerate} 
Let $\pm x_0$ be the zeroes of $h(z)$ closest to $\Sigma_0$, where $x_0 > 1$. Then, we have the following asymptotic formula for the ensemble averaged Lanczos coefficients:
\beq \label{eq:asymp1}
\ba_n=0, \ \ \bb_n=\frac12 -\alpha \,e^{-2 c n}+\cdots,
\eeq
where,
\beq \label{eq:asymp2}
c = \log\left(x_0+\sqrt{x_0^2-1}\right), \;\;\alpha = 4 \ |g(x_0)|,
\eeq 
and
\beq \label{eq:asymp3}
g(x_0)= \frac{\ D_h(x_0)^2}{4\phi(x_0)}\operatorname{Res}_{z=+x_0} \left(\frac1{h(z)}\right) .
\eeq
\end{theorem}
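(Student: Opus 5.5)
The proof runs through the Deift--Zhou steepest-descent analysis of the Riemann--Hilbert problem for $Y(z)$, in the form of \cite{kuijlaars2003riemann, kuijlaars2004riemann}; the new ingredient is to track exponentially small corrections instead of the usual power-law ones. First, $\ba_n=0$ is immediate from $\om(-E)=\om(E)$: the monic polynomials satisfy $\pi_n(-E)=(-1)^n\pi_n(E)$, so $\ba_n=\int E\,\pi_n^2\om/\int\pi_n^2\om=0$. For $\bb_n$ we use the standard reconstruction formula, which follows from the large-$z$ normalization of $Y$: writing $Y(z)=(\mathbb{1}+Y_1/z+Y_2/z^2+\cdots)\,\mathrm{diag}(z^n,z^{-n})$, one has $\bb_n^2=(Y_1)_{12}(Y_1)_{21}$. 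So it suffices to compute $Y_1$ up to exponentially small accuracy in $n$.

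Next we perform the transformations $Y\to T\to S\to R$. Set $\phi(z)=z+\sqrt{z^2-1}$, and normalize by $T=2^{n\sigma_3}Y\phi(z)^{-n\sigma_3}$. We then open lenses around $[-1,1]$ using the factorization of the oscillatory jump. This requires continuing $\om$ off the interval via $\om(z)=(1-z^2)^{1/2}h(z)$, with $1/h(z)$ appearing in the lens jumps. The outer parametrix $N(z)$ is built from the Szeg\H{o} function $D_h$ of $h$ (together with the explicit Szeg\H{o} function of $\sqrt{1-z^2}$). Near $\pm1$ the square-root edges are handled exactly by Bessel parametrices of order $\tfrac12$, which reduce to elementary functions. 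This is why square-root edges produce no algebraic $1/n$ corrections, unlike generic Jacobi weights.

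The key point is the structure of the jumps of $R=S\,(\text{parametrix})^{-1}$. On the lens boundaries the jump is $\mathbb{1}+O(\phi(z)^{-2n}/h(z))$, which is exponentially small. We then push the lens contours outward as far as analyticity permits, namely until they run into the zeros $\pm x_j$ of $h$, where $1/h$ has poles, or out to infinity. In this step hypothesis 5 guarantees that the contribution from $\Gamma_\infty$ vanishes. Deforming through the poles of $1/h$ by the residue theorem leaves residue terms at $\pm x_j$ of size $\phi(x_j)^{-2n}$, plus contours where the jump is smaller still. In the first iteration of the small-norm expansion $R=\mathbb{1}+R^{(1)}+\cdots$, with $R^{(1)}(z)=\frac{1}{2\pi i}\oint \frac{\Delta(s)}{s-z}\,ds$, the dominant term therefore comes from the nearest poles $\pm x_0$. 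It equals
\[
\frac{1}{z\mp x_0}\,\phi(x_0)^{-2n}\,\frac{D_h(x_0)^2}{4\phi(x_0)}\operatorname{Res}_{z=x_0}\frac{1}{h(z)}
\]
multiplied by fixed matrices obtained from $N(x_0)$, and this scalar coefficient is $g(x_0)$. The quadratic iterates are of order $\phi(x_0)^{-4n}$ and are subleading.

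Finally, unwinding the transformations gives $Y_1=2^{-n\sigma_3}\bigl(N_1+R^{(1)}_1+\cdots\bigr)2^{n\sigma_3}$ plus the $\phi$-normalization constants. The contribution of $N_1$ alone yields $\bb_n^2=\tfrac14$, so $\bb_n=\tfrac12$. Adding the $\pm x_0$ residues, with symmetric contributions by evenness, yields $\bb_n^2=\tfrac14-4|g(x_0)|\phi(x_0)^{-2n}+\cdots$, and hence $\bb_n=\tfrac12-\alpha e^{-2cn}$ with $e^{c}=\phi(x_0)$. The correct factor and sign should be checked against DSSYK, for which the nearest zero of $h$ is at $x_0=\cosh(\lambda/2)$. \textbf{The main obstacle} is the contour-deformation step: we must justify that the lenses can be pushed past the real zeros of $h$ (which lie on the real axis, so the residues must be taken carefully with respect to both lens lips), and we must control the remaining contour and the Bessel-disk contributions uniformly so that they are genuinely $o(\phi(x_0)^{-2n})$. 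Bookkeeping the constant $\alpha$ exactly, including the absolute value, is the second delicate point.
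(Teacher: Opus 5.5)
Your architecture matches the paper's. It has the chain $Y\to T\to S\to R$ with $T=2^{n\sigma_3}Y\phi^{-n\sigma_3}$, an outer parametrix built from $D=D_JD_h$, and a first-order correction $R^{(1)}(z)=\frac{1}{2\pi i}\oint\frac{\Delta(s)}{s-z}\,ds$ pushed out to $\Gamma_\infty$ via hypothesis 5, with leading residues at $\pm x_0$. Your parity argument for $\bar a_n=0$ and the formula $\bar b_n^2=(Y_1)_{12}(Y_1)_{21}$ are also fine.

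\textbf{The gap.} The step you flag as the main obstacle is a genuine gap, and the resolution you propose cannot work. The Bessel disks at $\pm1$ must lie where $h\neq0$, and the lens lips attach to them. The jumps of $R$ on the circles, and on the lips near the disks, then have size $|\phi(s)|^{-2n}$ at points with $|\phi(s)|<\phi(x_0)$. These are exponentially \emph{larger} than $\phi(x_0)^{-2n}$, so no uniform estimate makes them $o(\phi(x_0)^{-2n})$. The exact rate $e^{-2cn}$ emerges only if the \emph{entire} jump of $R$ is one function, analytic on $\mathbb{C}\setminus[-1,1]$ away from the zeros of $h$, so that the whole Cauchy integral can be deformed outward.

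\textbf{The missing idea.} The paper uses the fact that for exponent $\tfrac12$ this is automatic. With $\omega(z)=\mp i\sqrt{z^2-1}\,h(z)$ for $\pm\operatorname{Im}z>0$, the lens entries $\mp 1/\omega$ on the upper and lower lips are restrictions of the single function $1/(i\sqrt{z^2-1}\,h(z))$. That function is analytic across $\mathbb{R}\setminus[-1,1]$ except at zeros of $h$. Hence $S$ is continuous across $(1,1+\epsilon)$ and $(-1-\epsilon,-1)$, and the two lips merge into one closed contour $\Gamma$ around $[-1,1]$ that avoids the endpoints. Moreover, $N$ already has the same $O(|z\mp1|^{-1/2})$ first-column behaviour as $S$ inside the lens, and $\det N=1$. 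So $R=SN^{-1}=O(|z\mp1|^{-1/2})$ has removable singularities at $\pm1$, and no local parametrices are needed. Then $\Delta(s)=-iD_h(s)^2\phi(s)^{-2n-1}\mathcal{A}(s)/h(s)$ is a single meromorphic function, which also removes your worry about taking residues ``with respect to both lens lips.'' If you keep order-$\tfrac12$ Bessel parametrices, you must show that $PN^{-1}-\mathbb{1}$ on the circles equals this same $\Delta$, so that circles plus lips are homologous to $\Gamma$; that is the same fact in disguise.

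\textbf{A smaller correction.} The quadratic iterates are $O(\rho^{-4n})$ for $\Gamma$ on a level curve $|\phi|=\rho<\phi(x_0)$, not $O(\phi(x_0)^{-4n})$. This is still $o(\phi(x_0)^{-2n})$ once $\rho>\sqrt{\phi(x_0)}$.

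\textbf{The constant $\alpha$.} You assert $\alpha$ rather than derive it (``should be checked against DSSYK''), yet your reconstruction formula gives it in two lines. Since $\phi(z)/(2z)=1+O(z^{-2})$, you have $Y_1=2^{-n\sigma_3}(N_1+R_1)2^{n\sigma_3}$ with $(N_1)_{12}=iD_\infty^2/2$ and $(N_1)_{21}=-i/(2D_\infty^2)$, which gives the leading $\tfrac14$. After the residue evaluation, $R^{(1)}(z)=\sum_{s_j=\pm x_0}K_j/(s_j-z)$ with $K_j=4i\,g(s_j)\,e^{-2cn}\mathcal{A}(s_j)$. The first-order change in $\bar b_n^2$ is then $\tfrac{i}{2}(D_\infty^2(R_1)_{21}-D_\infty^{-2}(R_1)_{12})=\tfrac12 e^{-2cn}\sum_j g(s_j)[(a_j+a_j^{-1})^2-(a_j-a_j^{-1})^2]=4g(x_0)e^{-2cn}$, using $g(-x_0)=g(x_0)$.

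For the sign: $D_h(x_0)>0$ and $\phi(x_0)>0$, and $h>0$ on $[1,x_0)$ with a simple zero at $x_0$ forces $\operatorname{Res}_{x_0}(1/h)=1/h'(x_0)<0$. Hence $\bar b_n^2=\tfrac14-4|g(x_0)|e^{-2cn}$ and $\bar b_n=\tfrac12-\alpha e^{-2cn}+\cdots$. This extraction is cleaner than the paper's. The paper expands $\pi_n=Y_{11}$ at large $z$, feeds it into the recurrence, and needs the identity $(1-e^{-2c})[\tfrac12(a-a^{-1})^2+s(a^2-a^{-2})]=-2$ at $s=\pm x_0$ to reach the same result.
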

The functions $D_h(z)$ and $\phi(z)$ are defined in appendix \ref{app:conventions}.

The details of the proof are given in Appendix \ref{app:asymp_lanc}, but let us make some remarks: firstly, note from equation \eqref{eq:asymp1} that the Lanczos coefficients saturate to a constant at large $n$. This is actually a general property of all one-cut random matrix models where the spectral density has compact support on an interval. Thus, at large enough $n$, the effective Hamiltonian in the Krylov basis becomes that of a free particle hopping on the Krylov chain. Secondly, the approach to saturation is exponential in $n$. The parameter $c$ that controls the rate of the leading exponential approach is given in equation \eqref{eq:asymp2} in terms of the location of the zeroes of $\omega$ closest to the cut $\Sigma_0$. The overall coefficient $\alpha$ in front of the exponential is given by the residue of $\frac{1}{h(z)}$ at $z=x_0$ times some function evaluated at $x_0$. Note that this exponential approach to saturation is exactly what we observed for the coefficients in the chord-Hamiltonian of DSSYK:
\beq\label{eq:DSSYKLC2}
\overline{b}_n^2 = \frac{1}{4}\left(1- e^{-\lm n}\right).
\eeq 
In DSSYK, this formula is exact and there are no further corrections. But in more general matrix models, the exponential correction in equation \eqref{eq:asymp1} is only the leading contribution to the exponential approach obtained from the closest pair of zeros at first order in perturbation theory; there are subleading exponentials corresponding to the other zeros of $h$ (farther away than $x_0$) which have to be accounted for consistently at higher orders in perturbation theory (see Appendix \ref{app:asymp_lanc} for details).

Let us consider some examples. For the simplest case where $h(E)=1$ corresponding to the spectral density of the Gaussian unitary ensemble (GUE), the averaged Lanczos coefficients are exactly given by $\overline{a}_n=0, \,\overline{b}_n=1/2$ in our conventions \cite{Dumitriu:2002ntg, Balasubramanian:2022dnj}. This matches with equation \eqref{eq:asymp1} because we can regard this example as the case where the closest zeros move off to infinity, i.e., $c\to \infty.$ We will now consider more non-trivial examples.

\subsubsection*{Example 1: Two zeros} 
Consider the weight function:
\beq \label{eq:SD2zeros}
\omega(E)=\mathcal{N} \sqrt{1-E^2} (p^2-E^2),
\eeq
where $p=\text{cosh}\left(\frac{\lm}{2}\right)$ and $\mathcal{N}=\frac{2}{\pi(p^2-\frac{1}{4})}$ is the normalization constant. As shown in appendix \ref{app:asymp_lanc}, equation \eqref{eq:asymp1} for the asymptotics of the Lanczos coefficients in this case gives:
\beq
\overline{b}_n=\frac12-\frac{\left( e^{\lm}-e^{-\lm}\right)^2}{8 \ e^{\frac{3\lm}{2}} \ \text{cosh}\left(\frac{\lm}{2}\right)} e^{-\lm n}+ \cdots.
\eeq
Actually, for the spectral density in equation \eqref{eq:SD2zeros}, it is possible to write down the Lanczos coefficients exactly (see theorem \ref{thm:Christoffel}); as we will see below, the above expression agrees with this exact formula up to higher order corrections. 

\subsubsection*{Example 2: Multiple zeros} 
The above simple example can be generalized to include multiple zeros (and poles as well):
\beq
\om(E)=\mathcal{N} \sqrt{1-E^2}  \frac{\prod_{i=0}^{P-1}(p_i^2-E^2)}{\prod_{j=0}^{Q-1}(r_j^2-E^2)}.
\eeq
For this general class of weights, the asymptotic behavior of the Lanczos coefficients is again given by:
\beq\label{eq:example1}
\overline{b}_n=\frac12- \al e^{-\lm_0 n}+\cdots,
\eeq
with
\beqn\label{eq:example2}
\al &=& \frac{\left( e^{\lm_0}-e^{-\lm_0}\right)^2}{8 \ e^{\frac{3\lm_0}{2}} \ \text{cosh}\left(\frac{\lm_0}{2}\right)} \ \frac{\prod_{i=1}^{M-1} \left(\frac{1-e^{-\lm_0} e^{-\lm_i}}{1-e^{\lm_0} e^{-\lm_i}}\right)}{\prod_{j=0}^{L-1} \left(\frac{1-e^{-\lm_0} e^{-\eta_j}}{1-e^{\lm_0} e^{-\eta_j}}\right)},
\eeqn
where we have defined $p_i = \text{cosh}\left(\frac{\lambda_i}{2}\right)$, $r_j = \text{cosh}\left(\frac{\eta_j}{2}\right)$.

\subsubsection*{Example 3: DSSYK} 
The spectral density of the DSSYK model (see equation \eqref{eq:density}) is an extension of the class of theories we considered in the previous example with \emph{infinitely} many zeros and no poles. Furthermore, the zeros are at
\beq
p_k= \pm \cosh\left(\frac{k \lm}{2}\right). 
\eeq
Using our asymptotic formulas in (\ref{eq:asymp1}) and (\ref{eq:asymp2}), we find:
\beqn
\overline{a}_n&=&0\\
\overline{b}^2_n&=&\frac{1}{4}(1-e^{-\lm n})+\cdots.
\eeqn
This agrees with the large-$n$ behavior of the exact Lanczos coefficients given in equation \eqref{eq:DSSYKLC2}. Remarkably, from the exact formula \eqref{eq:DSSYKLC2}, it appears that there are no higher order corrections to the Lanczos coefficients, and that the full answer comes from the leading order correction from the first pair of zeros. It would be interesting to understand this in more detail. 

\subsection{Christoffel deformations} \label{sec:CD}

There is another useful tool for computing orthogonal polynomials and their recursion coefficients which we will use, and so we briefly review it now. Consider a random matrix theory with the spectral density $\omega(E) = \overline{\rho(E)}$. A \emph{polynomial Christoffel deformation} of $\omega$ is a new spectral density $\widetilde{\omega}$ defined as
\begin{equation} \label{eq:CDPoly}
    \widetilde{\omega}(E) = \mathcal{N} \prod_{i=1}^{k} (p_i - E)\, \omega(E).
\end{equation}
where $\{p_i\}_{i=1}^k$ is a collection of real numbers such that $p_i > 1$, and $\mathcal{N}$ is a normalization constant that depends on the $\{p_i\}$. Note that the zeros $p_i$ need not be distinct, but for simplicity we will only consider the case where each zero is simple. The deformed density differs from the original density in that the deformation introduces $k$ new zeros at the points $p_i$. Note that the deformed density is non-negative on $\text{supp}(\omega)$ since each $p_i > 1$.

Likewise, we define a \emph{rational Christoffel deformation}\footnote{Rational Christoffel deformations are often referred to as \emph{Christoffel-Uvarov deformations} in the literature \cite{ismail2005classical}.} of $\omega$ as
\begin{equation} \label{eq:CDRational}
    \widetilde{\omega}(E) = \mathcal{N} \frac{\prod_{i=1}^{k}(p_i - E)}{\prod_{j=1}^l (q_j - E)} \omega(E).
\end{equation}
where the $p_i$ and $q_j$ are real numbers that are larger than $1$. In addition to $k$ new zeros at the points $p_i$, a rational Christoffel deformation introduces $l$ new poles at the points $\{q_j\}_{j=1}^l$, each of which will be assumed to be simple. As before, note that deformed density is positive on $\text{supp}(\omega)$ since each $p_i, q_j > 1$. 

Interestingly, if the orthogonal polynomials with respect to the original density $\omega$ are known, it is straightforward to compute the orthogonal polynomials, and the associated Lanczos coefficients, with respect to the deformed density $\widetilde{\omega}$ \cite{szeg1939orthogonal, ismail2005classical}. For simplicity, in the main text we only state the relevant theorems for polynomial deformations. The analogous theorems for rational deformations (i.e., including poles) can be found in Appendix \ref{app:RationalCD}.

\begin{theorem}[Polynomial Christoffel deformations] \label{thm:Christoffel}
    Let $\widetilde{\omega}$ be a polynomial Christoffel deformation of $\omega$ that introduces a collection of simple zeros at the points $\{p_i\}_{i=1}^k$ with $p_i > 1$. Let $\pi_n(E)$ be the (monic) orthogonal polynomials with respect to $\omega$. Then the (monic) orthogonal polynomials with respect to $\widetilde{\omega}$ are given by
    \begin{equation} 
        \widetilde{\pi}_n(E) = \frac{(-1)^k}{ \prod_{i=1}^k (p_i - E)}
        \frac{\det\begin{pmatrix}
            \pi_n(p_1) & \pi_{n+1}(p_1) & \cdots & \pi_{n+k}(p_1) \\
            \vdots & \vdots & \ddots & \vdots \\
            \pi_{n}(p_k) & \pi_{n+1}(p_k) & \cdots & \pi_{n+k}(p_k) \\
            \pi_n(E) & \pi_{n+1}(E) & \cdots & \pi_{n+k}(E) 
            \end{pmatrix}}{\det \left[ \pi_{n + j -1}(p_i) \right]_{i,j = 1}^k}.
    \end{equation}
\end{theorem}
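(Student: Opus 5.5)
We prove the Christoffel formula by the classical Christoffel argument: exhibit the right-hand side as a monic polynomial of degree $n$ that is orthogonal to all lower-degree polynomials with respect to $\widetilde{\omega}$. Uniqueness of monic orthogonal polynomials for a positive weight (the same uniqueness that underlies the Riemann--Hilbert problem for $Y$, Appendix \ref{app:unique}) then identifies it with $\widetilde{\pi}_n$. Throughout we assume the $p_i$ are distinct, as stated.

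\textbf{Step 1 (polynomiality).} Let $N(E)$ be the $(k+1)\times(k+1)$ determinant in the numerator. It is a polynomial in $E$ of degree at most $n+k$. Cofactor expansion along the last row gives $N(E)=\sum_{j=0}^{k} C_j\,\pi_{n+j}(E)$, where the $C_j$ are constants built from the $\pi_{n+m}(p_i)$. Setting $E=p_i$ makes two rows equal, so $N(p_i)=0$ for every $i$. Since the $p_i$ are distinct, $\prod_i(p_i-E)$ divides $N(E)$, and therefore $q_n(E):=N(E)/\prod_i(p_i-E)$ is a polynomial of degree at most $n$.

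\textbf{Step 2 (normalization).} The coefficient of $E^{n+k}$ in $N$ comes only from $\pi_{n+k}$. Its cofactor is $C_k=(-1)^{(k+1)+(k+1)}\det[\pi_{n+j-1}(p_i)]_{i,j=1}^k$, which is exactly the denominator determinant. The leading coefficient of $\prod_i(p_i-E)$ is $(-1)^k$. Hence the prefactor $(-1)^k/\det[\pi_{n+j-1}(p_i)]$ makes $q_n$ monic of degree exactly $n$. This step uses that $\det[\pi_{n+j-1}(p_i)]\neq0$, which I treat as the main obstacle below.

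\textbf{Step 3 (orthogonality).} Take any polynomial $r$ with $\deg r<n$. Then
\[
\int \widetilde{\omega}\,q_n\,r\,dE=\mathcal{N}\,(\text{const})\int \omega(E)\,N(E)\,r(E)\,dE .
\]
Since $N$ is a linear combination of $\pi_n,\dots,\pi_{n+k}$, each of which is $\omega$-orthogonal to polynomials of degree below $n$, the integral vanishes. So $q_n$ is the monic degree-$n$ orthogonal polynomial for $\widetilde{\omega}$, i.e.\ $q_n=\widetilde{\pi}_n$.

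\textbf{Main obstacle: nonvanishing of the denominator.} Suppose $\det[\pi_{n+j-1}(p_i)]=0$. Then some nonzero combination $P=\sum_{j=0}^{k-1}c_j\pi_{n+j}$ vanishes at every $p_i$. Write $P=\prod_i(p_i-E)\,s(E)$, so $\deg s\le n-1$. Pair $P$ with $s$ under $\omega$:
\[
\int\omega\,P\,s\,dE=\int\omega\prod_i(p_i-E)\,s^2\,dE .
\]
The left side vanishes by orthogonality, since $\deg s<n$. On the right, $\omega\prod_i(p_i-E)$ is strictly positive on the interior of the support because every $p_i>1$, and $s\not\equiv0$ because $P\neq0$. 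So the right side is strictly positive, a contradiction. Hence the determinant is nonzero.

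The delicate point here is that positivity genuinely requires all zeros $p_i$ to lie outside $\Sigma_0$. I would also record the degenerate case $n=0$, where the same argument works with $s$ forced to be zero.
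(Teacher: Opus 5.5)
Your proof is correct and follows the same route as the paper: the numerator vanishes at each $p_i$, so dividing by $\prod_i(p_i-E)$ gives a polynomial; orthogonality holds because the numerator is a combination of $\pi_n,\dots,\pi_{n+k}$; monicity comes from the leading cofactor; and the denominator is shown to be nonzero by a positivity contradiction. Your nonvanishing step is slightly tighter than the paper's, because you pick a nonzero kernel vector so that $P\not\equiv 0$ is automatic, whereas the paper builds the specific determinant $Q(E)$ and never rules out the case $Q\equiv 0$.
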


In this paper, since we only consider even weights, we will primarily focus on Christoffel deformations that are symmetric about the origin. An even Christoffel deformation that introduces $2k$ new zeros at the points $\{ \pm p_i \}_{i=1}^k \subset \mathbb{R} \setminus \operatorname{supp}(\omega)$ can be written as
\begin{equation} \label{eq:CDEven}
    \widetilde{\omega}(E) = \mathcal{N} \prod_{i=1}^k (p_i^2 - E^2) \omega(E).
\end{equation}
We then have the following corollary.
\begin{corollary} \label{cor:Cor1}
    If both $\omega$ and its Christoffel deformation $\widetilde{\omega}$ are even, then the deformed orthogonal polynomials are given by
    \begin{equation} \label{eq:CDPolyEven}
    \widetilde{\pi}_n(E) = \frac{(-1)^k}{\prod_{i=1}^k (p_i^2 - E^2)} \frac{\det 
    \begin{pmatrix}
        \pi_n(p_1) & \pi_{n+2}(p_1) & \cdots & \pi_{n+2k}(p_1) \\
        \vdots & \vdots & \ddots & \vdots \\
        \pi_n(p_k) & \pi_{n+2}(p_k) & \cdots & \pi_{n+2k}(p_k) \\
        \pi_n(E) & \pi_{n+2}(E) & \cdots & \pi_{n+2k}(E) 
    \end{pmatrix}}{\det[ \pi_{n + 2(j-1)}(p_i)]_{i,j=1}^k}.
\end{equation}
\end{corollary}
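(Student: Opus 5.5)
The plan is to verify directly that the right-hand side of \eqref{eq:CDPolyEven} is the monic orthogonal polynomial of degree $n$ for $\widetilde{\omega}$, and then invoke uniqueness of monic orthogonal polynomials. This avoids having to reduce the $(2k+1)\times(2k+1)$ determinant of Theorem \ref{thm:Christoffel}, which one gets by taking its nodes to be $\{\pm p_i\}$. The only input is parity. Since $\omega$ is even, $\pi_m(-E)=(-1)^m\pi_m(E)$. Let $\mathcal{D}(E)$ denote the numerator determinant in \eqref{eq:CDPolyEven}. Expanding along the last row shows that $\mathcal{D}(E)$ is a linear combination of $\pi_n,\pi_{n+2},\dots,\pi_{n+2k}$. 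All of these have the same parity $(-1)^n$.

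\textbf{Step 1: divisibility.} At $E=p_i$ the last row coincides with row $i$, so $\mathcal{D}(p_i)=0$. At $E=-p_i$ every entry of the last row is $(-1)^n\pi_{n+2j}(p_i)$, because all indices share the parity of $n$. The last row is then proportional to row $i$, so $\mathcal{D}(-p_i)=0$. The $p_i$ are distinct and exceed $1$, so $\prod_i(p_i^2-E^2)$ divides $\mathcal{D}(E)$. Hence the right-hand side is a polynomial of degree at most $n$.

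\textbf{Step 2: monicity.} The coefficient of $E^{n+2k}$ in $\mathcal{D}$ comes only from $\pi_{n+2k}(E)$. Its cofactor is $(+1)\cdot\det[\pi_{n+2(j-1)}(p_i)]_{i,j=1}^k$, since the sign is $(-1)^{2(k+1)}$. The leading term of $\prod_i(p_i^2-E^2)$ is $(-1)^kE^{2k}$. Combining this with the prefactor $(-1)^k$ and the denominator determinant gives leading coefficient exactly $1$ in degree $n$.

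\textbf{Step 3: orthogonality.} Let $r$ be any polynomial with $\deg r<n$. Then
\[
\int \widetilde{\pi}_n\, r\,\widetilde{\omega}\,dE=(-1)^k\mathcal{N}\,\frac{\int \mathcal{D}(E)\,r(E)\,\omega(E)\,dE}{\det[\pi_{n+2(j-1)}(p_i)]}.
\]
This vanishes because each $\pi_{n+2j}$ is $\omega$-orthogonal to all polynomials of degree $<n$. By uniqueness of the monic orthogonal polynomial of degree $n$ for the positive weight $\widetilde{\omega}$, the formula in Corollary \ref{cor:Cor1} follows.

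\textbf{Main obstacle: the denominator does not vanish.} All of the above needs $\det[\pi_{n+2(j-1)}(p_i)]_{i,j=1}^k\neq 0$. I would argue by contradiction. Suppose the determinant vanishes. Then there are coefficients $c_j$, not all zero, such that $P(E)=\sum_{j=1}^k c_j\pi_{n+2(j-1)}(E)$ vanishes at every $p_i$. By the parity argument of Step 1 it also vanishes at every $-p_i$. So $P=\prod_i(p_i^2-E^2)\,r(E)$ with $r\not\equiv 0$ and $\deg r\le n-2<n$. Now compute
\[
\int r^2\,\widetilde{\omega}\,dE=\mathcal{N}\int P\,r\,\omega\,dE=0
\]
by $\omega$-orthogonality. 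But $\widetilde{\omega}>0$ on $[-1,1]$, so this forces $r\equiv0$, a contradiction. This is the one genuinely non-formal point of the proof; the remaining steps are bookkeeping.
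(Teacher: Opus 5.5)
Your proof is correct, but it is not the route the paper takes.

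\textbf{The paper's route.} The paper does not re-verify orthogonality. It applies Theorem \ref{thm:Christoffel} with the $2k$ distinct nodes $\{\pm p_i\}$ and then reduces the resulting $(2k+1)\times(2k+1)$ determinant. It replaces each row pair $R^{\pm}(p_i)$ by $R^+(p_i)\pm R^-(p_i)$ and uses $\pi_m(-E)=(-1)^m\pi_m(E)$, which kills every entry whose column parity differs from that of the row. After reordering rows and columns the matrix is block triangular:
\begin{itemize}
\item one block is (up to factors of $2$) the $(k+1)\times(k+1)$ matrix of the corollary, with columns $\pi_n,\pi_{n+2},\dots,\pi_{n+2k}$ and including the $E$-row;
\item the other block is $\det[\pi_{n+1+2(j-1)}(p_i)]_{i,j=1}^k$.
\end{itemize}
The $2k\times 2k$ denominator factorizes in the same way. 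The odd block, the powers of $2$ and the permutation signs therefore cancel, and $\prod_i(p_i-E)(-p_i-E)=(-1)^k\prod_i(p_i^2-E^2)$ supplies the prefactor.

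\textbf{What each route buys.} The reduction requires nothing to be re-proved. In particular, $\det[\pi_{n+2(j-1)}(p_i)]\neq 0$ is inherited, because it is a factor of the general theorem's nonzero denominator. Your direct verification (divisibility via parity, monicity via the $(k+1,k+1)$ cofactor, orthogonality, then uniqueness) makes the corollary independent of the general theorem. It also avoids the sign and block bookkeeping that the paper leaves to the reader. The cost is that you must prove nonvanishing of the denominator yourself.

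Your null-vector argument for that step is sound. It is also cleaner than the cofactor-polynomial argument in the paper's proof of Theorem \ref{thm:Christoffel}. There, deducing $q_{n-1}\equiv 0$ only gives $Q\equiv 0$, and one still has to rule out all cofactors of the first row vanishing (e.g.\ by induction on $k$). Your choice of $P=\sum_j c_j\pi_{n+2(j-1)}$ with $c\neq 0$ guarantees $P\not\equiv 0$ from the outset, so this issue never arises.
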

These are standard results in the theory of orthogonal polynomials, and the proofs are reviewed in appendix \ref{app:ChristoffelProof}. 

As a simple example, consider a polynomial Christoffel deformation of $\omega$ that adds a pair of new zeros at the points $\pm p$ (with $p > 1$):
\begin{equation}
    \widetilde{\omega}(E) = \mathcal{N} (p^2 - E^2)\, \omega(E).
\end{equation}
Using equation \eqref{eq:CDPolyEven}, it is easy to check that the deformed orthogonal polynomials satisfy
\begin{equation} \label{eq:Christoffel}
    \widetilde{\pi}_n(E) = \frac{1}{E^2 - p^2} \left( \pi_{n+2}(E) - \frac{\pi_{n+2}(p)}{\pi_n(p)} \pi_{n}(E) \right).
\end{equation}
Note that although the right hand side has a denominator of the form $E^2 - p^2$, the quantity in brackets has zeros at $\pm p$, and therefore $\widetilde{\pi}_n$ is indeed a polynomial of degree-$n$.

Having expressed the deformed polynomials $\widetilde{\pi}_n$ in terms of the undeformed polynomials $\pi_n$, we are now in a position to compute the deformed Lanczos coefficients. Let the deformed and the undeformed polynomials satisfy recursion relations with Lanczos coefficients $\bar{\mathsf{b}}_n$ and $\bar{b}_n$ respectively, i.e.
\begin{align}
    E\widetilde{\pi}_n(E) &= \widetilde{\pi}_{n+1}(E) + \bar{\mathsf{b}}_n^2 \widetilde{\pi}_{n-1}(E), \label{eq:DefLanczos} \\
    E\pi_n(E) &= \pi_{n+1}(E) + \bar{b}_n^2 \pi_{n-1}(E) \label{eq:UndefLanczos}.
\end{align}
Using equation \eqref{eq:CDPolyEven} and the above recursion relations, a short computation leads to the following result (see appendix \ref{app:ChristoffelProof} for a proof):
\begin{corollary} \label{cor:Cor2}
Let $\widetilde{\omega}$ be an even polynomial Christoffel deformation of $\omega$, as defined in eq.\ \eqref{eq:CDEven}. Let $\bar{\mathsf{b}}_n$ and $\bar{b}_n$ be the deformed and undeformed Lanczos coefficients respectively. Then we have
\begin{equation} \label{eq;bnRatioGeneral}
    \left( \frac{\bar{\mathsf{b}}_n}{\bar{b}_n} \right)^2 = \frac{ \det[\pi_{n+2j}(p_i)]_{i,j=1}^k \, \det[ \pi_{n-1+2(j-1)}(p_i)]_{i,j=1}^k}{ \det[\pi_{n+2(j-1)}(p_i)]_{i,j=1}^k\, \det[\pi_{n-1+2j}(p_i)]_{i,j=1}^k}.
\end{equation}
\end{corollary}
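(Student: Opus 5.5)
The plan is to avoid manipulating the recursion relations directly and instead use the standard characterization of Lanczos coefficients through norms of monic orthogonal polynomials. Write $h_n := \int \pi_n(E)^2\,\omega(E)\,dE$ and $\widetilde{h}_n := \int \widetilde{\pi}_n(E)^2\,\widetilde{\omega}(E)\,dE$. Multiply \eqref{eq:UndefLanczos} by $\pi_{n-1}\omega$ and integrate. Since $\int E\,\pi_n\pi_{n-1}\,\omega = \int \pi_n\,(E^{n}+\dots)\,\omega = h_n$, this gives $\bar b_n^2 = h_n/h_{n-1}$, and likewise $\bar{\mathsf b}_n^2 = \widetilde h_n/\widetilde h_{n-1}$. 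Hence
\begin{equation*}
\left(\frac{\bar{\mathsf b}_n}{\bar b_n}\right)^2 = \frac{\widetilde h_n/h_n}{\widetilde h_{n-1}/h_{n-1}},
\end{equation*}
and it suffices to compute the ratio $\widetilde h_n/h_n$ for each $n$.

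To compute $\widetilde h_n$, I would use monicity to replace one factor of $\widetilde\pi_n$ by $E^n$, since $\widetilde\pi_n$ is orthogonal to lower-degree polynomials with respect to $\widetilde\omega$. This gives
\begin{equation*}
\widetilde h_n = \mathcal N\int E^n\,\widetilde\pi_n(E)\prod_{i=1}^k(p_i^2-E^2)\,\omega(E)\,dE .
\end{equation*}
Now insert Corollary \ref{cor:Cor1}. The factor $\prod_i(p_i^2-E^2)$ cancels, leaving $(-1)^k\Delta_n^{-1}$ times the $(k+1)\times(k+1)$ determinant, where $\Delta_n := \det[\pi_{n+2(j-1)}(p_i)]_{i,j=1}^k$. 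Expanding that determinant along its last row writes it as $\sum_{j=0}^{k} C_j\,\pi_{n+2j}(E)$, with cofactors $C_j$ independent of $E$. Against $E^n\omega$, every $\pi_{n+2j}$ with $j\ge1$ integrates to zero by orthogonality. The $j=0$ term gives $C_0\,h_n$, where $C_0 = (-1)^{k}\det[\pi_{n+2j}(p_i)]_{i,j=1}^k$ is the cofactor of the bottom-left entry. Therefore
\begin{equation*}
\frac{\widetilde h_n}{h_n} = \mathcal N\,\sigma_k\,\frac{\det[\pi_{n+2j}(p_i)]_{i,j=1}^k}{\det[\pi_{n+2(j-1)}(p_i)]_{i,j=1}^k},
\end{equation*}
with a sign $\sigma_k$ depending only on $k$.

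Repeating the computation with $n\to n-1$ and taking the quotient, the constant $\mathcal N$ and the sign $\sigma_k$ cancel. The four determinants then assemble into exactly the right-hand side of \eqref{eq;bnRatioGeneral}.

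The main point requiring care is the bookkeeping in the cofactor expansion: identifying the correct cofactor and checking that the sign depends only on $k$ and not on $n$, so that it cancels in the ratio. I would double-check this against the $k=1$ case \eqref{eq:Christoffel}.

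A secondary issue is ensuring that $\Delta_n\neq 0$, so that Corollary \ref{cor:Cor1} applies. This follows because $\widetilde\omega$ is positive on $[-1,1]$, so its monic orthogonal polynomials exist and are unique; the construction in Corollary \ref{cor:Cor1} then requires $\Delta_n\neq0$. Positivity of $\widetilde h_n$ also forces the determinant ratio to be nonzero.
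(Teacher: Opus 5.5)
Your proof is correct, but it takes a different route from the paper. The paper stays inside the three-term recursions. It writes $\prod_i(E^2-p_i^2)\,\widetilde\pi_n=\pi_{n+2k}+\sum_{j=0}^{k-1}c^{(n)}_j\pi_{n+2j}$ and multiplies by $E$. It then expands the result once with the deformed recursion and once with the undeformed one, and matches the coefficients of $\pi_{n-1}$ to get $(\bar{\mathsf b}_n/\bar b_n)^2=c^{(n)}_0/c^{(n-1)}_0$. Finally it reads off $c^{(n)}_0$ as a ratio of minors.

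You instead combine $\bar b_n^2=h_n/h_{n-1}$ with Christoffel's formula for the deformed norms. The two routes are linked by $\widetilde h_n=(-1)^k\mathcal N\,c^{(n)}_0\,h_n$, so the same coefficient does the work in both. Your version never has to track two full expansions, and it fixes the sign outright. In fact $\sigma_k=1$, because the $(-1)^k$ in Corollary \ref{cor:Cor1} cancels the cofactor sign. Hence $\widetilde h_n/h_n=\mathcal N\,\Delta_{n+2}/\Delta_n$ with $\Delta_m:=\det[\pi_{m+2(j-1)}(p_i)]_{i,j=1}^k$, and the corollary reads simply $\Delta_{n+2}\Delta_{n-1}/(\Delta_n\Delta_{n+1})$.

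Positivity of the norms then shows directly that $\Delta_{m+2}/\Delta_m>0$. So the right-hand side is manifestly positive and its denominator $\Delta_{n+1}$ cannot vanish. In the paper's route, the corresponding fact $c^{(n-1)}_0\neq0$ instead rests on the nonvanishing of the denominator of Corollary \ref{cor:Cor1} at the shifted index $n+1$. That comes from the contradiction argument in the proof of Theorem \ref{thm:Christoffel}. The same argument, rather than a bare appeal to uniqueness, is also the clean justification for your $\Delta_n\neq0$.

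What the paper's coefficient matching buys is a purely algebraic template, which it reuses for rational deformations to get the affine relation $\bar{\mathsf b}_n^2=\alpha_n\bar b_n^2+\beta_n$. Your pairing trick extends there as well. Pair against the monic polynomial $(-1)^l\prod_{j=1}^l(q_j-E)\,E^{n-l}$ in place of $E^n$; this gives a multiplicative relation with $\bar b_{n-l}^2$ instead.
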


For the case of single pair of zeros at $\pm p$ (with $p > 1$), this formula simplifies to
\begin{equation} \label{eq:bnRatio1}
    \bar{\mathsf{b}}_n^2 = \frac{\pi_{n+2}(p) \pi_{n-1}(p)}{\pi_n(p) \pi_{n+1}(p)}\, \bar{b}_n^2.
\end{equation}


\subsubsection*{Example: Two zeros revisited} 
Let us revisit the example with two zeros considered in the previous section. The weight function of interest is given by:
\beq
\widetilde{\omega}(E)=\mathcal{N} \sqrt{1-E^2}\; (p^2-E^2).
\eeq
The monic polynomials for $\omega(E)=\frac{2}{\pi} \sqrt{1-E^2}$ are the Chebyshev polynomials of the second kind, and the corresponding Lanczos coefficients are given by $\overline{b}_n=1/2$ for all $n$. We can now use equation \eqref{eq:bnRatio1} to obtain the exact form of the Lanczos coefficients for the deformed weight:
\beq
\overline{\mathsf{b}}_n^2 = \frac14\frac{U_{n+2}(p)\, U_{n-1}(p)}{U_n(p)\, U_{n+1}(p)}.
\eeq
The Chebyshev polynomials for $p>1$ are given by:
\beq
U_n(\text{cosh}(\eta))=\frac{1}{2^{n-1}}\frac{\text{sinh}((n+1)\eta)}{\text{sinh}((\eta)}.
\eeq
Using this and the parametrization, $p=\text{cosh}(\lm/2)$, we get:
\beq
\overline{\mathsf{b}}_n^2 = \frac14 \frac{(1-e^{-\lm n}) (1-e^{-\lm (n+3)})}{(1-e^{-\lm (n+1)}) (1-e^{-\lm (n+2)})}.
\eeq
Expanding this to the leading order in $e^{-\lm n}$, we get:
\beq
\overline{\mathsf{b}}_n^2=\frac{1}{4}-\frac{\left( e^{\lm}-e^{-\lm}\right)^2}{8 \ e^{\frac{3\lm}{2}} \ \text{cosh}\left(\frac{\lm}{2}\right)} e^{-\lm n}+\cdots.
\eeq
This agrees with the result that was obtained in the previous section using theorem \ref{thm:main}. The more general formula in equation \eqref{eq;bnRatioGeneral} immediately allows one to write down exact formulas for the Lanczos coefficients corresponding to even Christoffel deformations of the GUE spectral density with arbitrary (but finite) number of zeros.

\section{Continuum limit and JT gravity} \label{sec:continuum}
In the previous section, we discussed how the effective Hamiltonian in the Krylov basis at large $n$ universally seems to become that of a free particle hopping on the Krylov chain in the presence of an exponential potential. The exponential potential is, of course, reminiscent of the Liouville Hamiltonian of JT gravity \cite{Bagrets:2016cdf, Jafferis:2022wez}. In light of this, it is natural to conjecture the following: \emph{for any one-cut matrix model that forms a UV completion of JT gravity -- i.e., where the spectral density of the matrix model agrees with that of JT gravity near the edges -- the effective Hamiltonian in the Krylov basis should reduce to the Liouville Hamiltonian of JT gravity in a low-energy, continuum limit}. If this is true, then the Krylov basis provides a UV completion of the geometric ``wormhole-length'' basis of JT gravity. 

In this section, we will give evidence towards the above conjecture by using the DSSYK model as our starting point. In DSSYK, it is known that the Hamiltonian in the Krylov basis reduces to JT gravity in the low-energy, continuum limit \cite{Berkooz:2018qkz, Lin:2022rbf, Rabinovici:2023yex}. Recall that in this limit (sometimes referred to as the ``triple-scaling'' limit in the context of DSSYK), the Krylov depth $n$ is replaced by a continuous variable $\ell$:
\beq \label{eq:scaling}
n = n_{\star} + \frac{\ell}{\lambda},\;\;n_{\star} = \frac{1}{\lambda}\log\left(\frac{1}{\lambda^2}\right),\;\;\;\cdots\;\;(\text{continuum limit}),
\eeq
where we take the limit $\lm\to 0$ with $\ell$ fixed. Then, the effective Hamiltonian in terms of the $\ell$ variable becomes the Liouville Hamiltonian of JT gravity. Here, we will show that for a large class of ultraviolet deformations of the spectral density of the DSSYK model -- where the spectral density is modified in the UV, leaving the spectral density near the edges unchanged -- the semiclassical effective description in the Krylov basis continues to reduce to JT gravity in the low-energy, continuum limit.  

Note that the limit in equation \eqref{eq:scaling} does two things: (i) it corresponds to taking $n$ large but not independently of $\lm$ (on account of the shift by $n_{\star}=\frac{1}{\lm} \log(\frac{1}{\lm^2})$), and (ii) it corresponds to zooming-out to get a continuum approximation. In the context of the DSSYK model, the parameter $\mathfrak{q}=e^{-\lambda}$ is a parameter of the theory, but we can also think of $\mathfrak{q}$ as the location of the nearest zero of $\omega$ away from the cut (i.e., not including the zeros at $E=\pm 1$). We emphasize that the continuum limit is \emph{not} the same as the large-$n$ limit considered in theorem \ref{thm:main}, because we are taking the limit $n\to \infty$ in a coordinated way with $\lm \to 0$. Much of the analysis in this section is devoted to treating this limit carefully. 

In what follows, we will study the continuum limit described above in two classes of random matrix theories: 
\begin{enumerate}
    \item where the function $h(E)$ in the spectral density has a finite number of zeros and poles in the complex-$E$ plane, 
    \item where the function $h(E)$ has an infinite number of zeros in the complex-$E$ plane. 
\end{enumerate}    
In this latter case, we will specialize to a particular class of models corresponding to rational Christoffel deformations of the DSSYK model with a finite number of \emph{additional} (i.e., in addition to the DSSYK zeros) pairs of zeros and poles. Crucially, the locations of these additional zeros will \emph{not} be scaled in the $\lm \to 0$ limit. In both these cases, we will show that the Lanczos coefficients in the $\lambda\to 0$ limit take the form:
\beq 
\overline{b}_n = \frac{1}{2}+\lambda^2b_{(2)}(\ell)+\lambda^3 b_{(3)}(\ell)+\cdots,
\eeq 
where the leading correction term turns out to be $O(\lambda^2)$ because of the shift by $n_{\star}$ in equation \eqref{eq:scaling}. The wavefunction $\psi_n(E)$, which was previously a function of the discrete index $n$, now becomes a function of the continuous variable $\ell$:
\beq \label{eq:scaling2}
\psi(\ell) := \frac{1}{\sqrt{\lambda}} \psi_{n}.
\eeq 
Substituting equations \eqref{eq:scaling} and \eqref{eq:scaling2} into the tri-diagonal form of the effective Hamiltonian:
\beq
-H_{\text{eff}} \psi_n(E)=\overline{b}_{n+1} \psi_{n+1}(E)+\overline{b}_n \psi_{n-1}(E),
\eeq
and doing a Taylor series expansion in $\lambda$, we get a new continuum Hamiltonian:
\beq \label{eq:contHam}
 H_{\text{cont.}} =-\frac{1}{\lm^2} + \left(-\frac{1}{2} \partial_{\ell}^2-2b_{(2)}(\ell) \psi(\ell) \right)+O(\lambda).
\eeq

\subsection{Case (i): Finite number of zeros and poles}\label{sec:finitepoles}
Consider the following class of models with the spectral density given by:
\beq
\om(E)=\mathcal{N} \sqrt{1-E^2} \prod_{i=0}^{Q-1}(q_i^2-E^2) \ \frac{\prod_{i=0}^{P-1}(p_i^2-E^2)}{\prod_{i=0}^{R-1}(r_i^2-E^2)},
\eeq
where now we have separated out some zeros $\{\pm q_i\}$ of $h(E)$ as special. We will consider a limit where the $\{q_i\}$ start approaching the cut $[-1,1]$ in the $\lambda \to 0$ limit, while the $\{p_i\}$ and $\{r_i\}$ remain fixed away from the cut. We parameterize the scaled zeros as 
$$q_i=\text{cosh}\left(\frac{\lm f(i)}{2}\right),$$ where, $f(i)>f(j)$ for $i>j$ and $f(0)=1$. Note that the scaling parameter $\lambda$ is related to the position of the closest pair of zeros as $q_0 = \cosh(\lm/2)$. For this class of theories, we show in appendix \ref{app:asymp_lanc}, that:
\beq
b_{(2)}(\ell)=0,
\eeq
in the continuum limit. Heuristically, this happens because the coefficient in front of the leading exponential $e^{-\lm n}$ in the large-$n$ limit is itself $O(\lambda^2)$, and so when we take a further scaling limit, this contribution ends up being of $O(\lm^4)$. However, we emphasize that this is merely a heuristic explanation; as we have noted, the continuum limit is not the same as the large-$n$ limit, and in Appendix \ref{app:asymp_lanc}, we do the analysis carefully in the continuum limit. Substituting $b_{(2)}=0$ in equation \eqref{eq:contHam}, the continuum Hamiltonian becomes:
\beq
H_{\text{cont.}}=-\frac{1}{\lm^2}-\frac{1}{2}\partial_{\ell}^2+O(\lm^2),
\eeq
which is just the free particle Hamiltonian.

\subsection{Case (ii): Christoffel deformations of DSSYK} \label{sec:CDTS}
Consider now the following class of spectral density functions:
\beq\label{eq:CD}
\om(E)= \mathcal{N}\,\om_{\text{DSSYK}} (E) \frac{\prod_{i=0}^{P-1}(p_i^2-E^2)}{\prod_{j=0}^{Q-1}(q_j^2-E^2)},
\eeq
where $\omega_{\text{DSSYK}}$ is the spectral density of the DSSYK model, and $\mathcal{N}$ is a normalization constant which depends on $\{p_i\}$ and $\{q_j\}$. These are (even) rational Christoffel deformations of the spectral density of the double scaled SYK model, with additional zeros at $E=\pm p_i$ and poles at $E=\pm q_j$ (with $p_i, q_j > 1$). One can think of the spectral density in equation \eqref{eq:CD} as a large class of ``ultraviolet'' deformations of the spectral density. The locations of the additional zeros and poles will \emph{not} be scaled in the $\lm \to 0$ limit and thus the near-edge JT behavior of the spectral density is left unaffected; but the spectral density in the UV (i.e., in the middle of the spectrum) can be more or less arbitrarily modified using such deformations (see figure \ref{fig:sat_value}). Note that the DSSYK spectral density can be realized as the large-$D$ spectral density of a Hermitian matrix model called the ETH matrix model \cite{Jafferis:2022wez}; the Christoffel deformations of the DSSYK spectral density can, in principle, also be translated to deformations of the potential of the ETH matrix model using the formula \cite{DiFrancesco:1993cyw}:
\beq 
V'(E) = - 2\pi i\Big( C[\rho](E+i\epsilon) + C[\rho](E-i\epsilon)\Big),\;\;C[\rho](z) = \int_{-1}^1\frac{dx}{2\pi i}\frac{\rho(x)}{x-z}.
\eeq

\begin{figure}[t]
 \centering
 \includegraphics[width=0.6\textwidth]{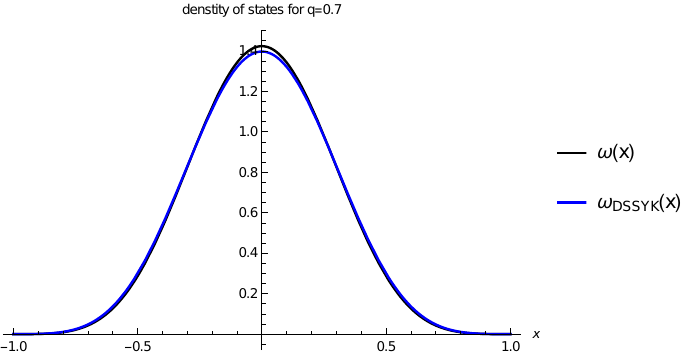}
    \caption{The spectral density $\omega(E)$ for a Christoffel deformation of DSSYK with two additional zeros at $p=\pm2$ for $q=0.7$. We see that the spectrum differs from the DSSYK density only near the center and approaches the DSSYK density at the edges.}  
    \label{fig:sat_value}
\end{figure}

One of our main results is that in this class of models, one gets
\beq \label{eq:CT2}
b_{(2)}(\ell) = -\frac{1}{4}e^{-\ell}
\eeq 
in the continuum limit, exactly the same as in the undeformed DSSYK model. This implies that the effective Hamiltonian in the low-energy, continuum limit agrees with the Liouville Hamiltonian of JT gravity: 
\beq
H_{\text{cont.}}=-\frac{1}{\lm^2}+\frac{1}{2}\left(-\partial_{\ell}^2+ e^{-\ell}\right)+O(\lm).
\eeq
We can derive this result in several ways: here we will use theorem \ref{thm:Christoffel} directly in the continuum limit to give an elementary proof for polynomial Christoffel deformations, i.e., with additional zeros, but no poles. The case of poles is conceptually similar, but more technically involved, and is discussed in Appendix \ref{sec:RationalCDScaling}. A second derivation of the same results using the RHP method is presented in Appendix \ref{app:asymp_lanc}. 

\subsubsection*{Proof for polynomial Christoffel deformations}
We will proceed by method of induction on the number of additional zeros. First, consider the simplest case of a Christoffel deformation with one pair of added zeros:
\beq
\om(E)=\mathcal{N}\,\om_{\text{DSSYK}} (E) (p^2-E^2).
\eeq 
In this case, we can simply use the results from theorem \ref{thm:Christoffel}. Recall from eq.\ \eqref{eq:LanczosDSSYK} that in the DSSYK model, the undeformed Lanczos coefficients are given by
\begin{equation}
    \bar{b}_n^2 = \frac{1 - \bq^n}{4}, \qquad \bq =  e^{-\lambda}.
\end{equation}
Our goal now is to evaluate the ratio $\bar{\mathsf{b}}_n / \bar{b}_n$, where $\mathsf{b}_n$ are the deformed Lanczos coefficients. The main novelty is that we will compute this ratio directly in the continuum limit, i.e. the limit $\lambda \to 0$ while keeping the continuum length $\ell$ and the location $p$ of the additional pair of zeros fixed. To this end, we define the ratios
\begin{equation}
    R_n(p) = \frac{\pi_{n+2}(p)}{\pi_n(p)}, \qquad r_n(p) = \frac{\pi_{n+1}(p)}{\pi_n(p)}, 
\end{equation}
where the $\pi_n$ are monic orthogonal polynomials corresponding to the DSSYK spectral density. Clearly, from the above definitions $R_n(p) = r_{n+1}(p) r_n(p)$, and we have from equation \eqref{eq:bnRatio1}
\begin{equation} \label{eq:bnratio}
    \left(\frac{\bar{\mathsf{b}}_n}{\bar{b}_n}\right)^2 = \frac{R_{n}(p)}{R_{n-1}(p)} = \frac{r_{n+1}(p)}{r_{n-1}(p)}.
\end{equation}
Note that from equation \eqref{eq:UndefLanczos}, we get 
\beq 
p = \frac{\pi_{n+1}(p)}{\pi_n(p)} + \overline{b}_n^2\frac{\pi_{n-1}(p)}{\pi_{n}(p)},
\eeq 
which implies the following recursion relation for $r_n$:
\beq 
p = r_{n}(p) + \frac{\overline{b}_n^2}{r_{n-1}(p)}.
\eeq 
We now assume that in the continuum limit with $p$ fixed, $r_n(p)$ admits a good limit, and define
\beq 
r(\ell,p) := r_{n =n_{\star} + \frac{\ell}{\lm}}(p),\;\;n_{\star}=\frac{1}{\lm}\log(\frac{1}{\lm^2}).
\eeq 
This object satisfies the following recursion relation:
\begin{equation} \label{eq:r_TS}
    p = r(\ell, p) + \frac{(1 - \lambda^2 e^{-\ell})}{4} \frac{1}{r(\ell - \lm, p)}.
\end{equation} 
The recursion relation implies that there cannot be any singular behavior for $r$ as $\lm\to0$, so we postulate the following expansion in powers of $\lambda$:
\begin{equation} \label{eq:rExp}
    r(\ell, p) = \sum_{n=0}^{\infty} r_{(n)}(\ell, p) \lambda^n.
\end{equation}
Likewise, Taylor expanding $r(\ell - \lambda, p)$ and substituting eq.\ \eqref{eq:rExp}, we get:
\begin{equation} \label{eq:rShiftedExp}
    r(\ell - \lambda, p) = \sum_{n,m=0}^{\infty} \frac{(-1)^n}{n!} \lambda^{n+m} \partial_{\ell}^n r_{(m)}(\ell, p).
\end{equation}
The idea now is to solve eq.\ \eqref{eq:r_TS} perturbatively in $\lambda$. To that end, we expand out the right hand side of eq.\ \eqref{eq:r_TS} and plug in the expansions eqs.\ \eqref{eq:rExp} and \eqref{eq:rShiftedExp}. At $O(\lambda^0)$, we get
\begin{equation}
    r_{(0)}^2 - p\, r_{(0)} + \frac{1}{4} = 0,
\end{equation}
whose solutions are
\begin{equation}
    r_{(0)}(\ell, p) = \frac{1}{2} \left( p \pm \sqrt{p^2 - 1} \right).
\end{equation}
The physical branch is the one with the + sign since it follows from definition that $r_{(0)}(\ell, p) \sim p$ as $p \to \infty$.\footnote{This is also consistent with the large-$n$ asymptotics of orthogonal polynomials:
\begin{equation}
    \pi_n(x) \sim \left( \frac{\phi(x)}{2} \right)^n, \qquad \phi(x) = x + \sqrt{x^2 - 1},
\end{equation}
for $n \to \infty$ and fixed $x$; see equation \eqref{eq:leadpi} in Appendix \ref{app:asymptotics}.} Therefore, at $O(\lambda^0)$ we get
\begin{equation} \label{eq:assumpt}
    r_{(0)}(\ell, p) = \frac{\phi(p)}{2}, \qquad \phi(p) = p + \sqrt{p^2 - 1}.
\end{equation}
Note that $r_{(0)}$ is independent of the length variable $\ell$. The $\ell$-dependence appears at first subleading order. Using the fact that $r_{(0)}$ is independent of $\ell$, at $O(\lambda)$ we get
\begin{equation}
    r_{(1)} \left(1-  \frac{1}{4r_{(0)}^2}\right) = 0,
\end{equation}
which implies that $r_{(1)}(\ell, p) = 0$ (since $p > 1$, the other factor can never vanish). Next, using the relations $r_{(0)} = \phi(p) / 2$ and $r_{(1)} = 0$, it easy to check that we get the following solution for $r_{(2)}$:
\begin{align} 
    r_{(2)}(\ell, p) = \frac{1}{2}\frac{\phi(p)}{ \phi(p)^2 - 1 } e^{-\ell}. \label{eq:r_2}
\end{align}
In principle, we can compute the higher order terms as well, but we will not need them here. In Appendix \ref{sec:Hermite}, we will reproduce the above formulas for $r_{(0)}$ and $r_{(2)}$ directly from the known formulas for $\bq$-Hermite polynomials. 

It now follows that $\bar{\mathsf{b}}_n$ also admits a similar expansion in the triple scaling limit. From eq.\ \eqref{eq:bnratio}, we get
\begin{equation}
    \left( \frac{\bar{\mathsf{b}}_n}{\bar{b}_n} \right)^2 = \frac{r_{n+1}(p)}{r_{n-1}(p)} \to \frac{r(\ell + \lambda, p)}{r(\ell - \lambda, p)}.
\end{equation}
Substituting the expansion for $r$ above, we find that
\begin{equation}
    \frac{\bar{\mathsf{b}}_n}{\bar{b}_n} \to 1 + O(\lambda^3).
\end{equation}
Hence, to $O(\lm^2)$, we see that the Lanczos coefficients are the same as in the DSSYK model with corrections appearing only at $O(\lambda^3)$. Actually, we can compute the $O(\lm^3)$ correction as well -- note that the $r_{(3)} \lambda^3$ term does not contribute to the $O(\lambda^3)$ correction to the Lanczos ratio since $r_{(3)}$ appears with the same sign in both the numerator and denominator, and therefore cancels once the denominator is expanded. So, we get 
\beq 
\left( \frac{\bar{\mathsf{b}}_n}{\bar{b}_n} \right)^2 = 1 + 2 \lm^3 \frac{\partial_{\ell}r_{(2)}}{r_{(0)}} = 1 -  \lm^3 \frac{2e^{-\ell}}{(\phi^2-1)}+ O(\lm^4).
\eeq 
In Appendix \ref{app:asymp_lanc}, this formula is reproduced using the RHP method (see equation \eqref{eq:conf}). 

Having shown that the Lanczos coefficients are unaffected with one pair of zeros, we now proceed to the more general case using induction. Consider an even Christoffel deformation of $\omega_{\text{DSSYK}}$ that adds $P$ pairs of zeros at the points $\pm p_i$:
\begin{equation}
    \omega^{(P)}(E) = \mathcal{N}_P\prod_{i=0}^{P-1} (p_i^2 - E^2)\ \omega_{\text{DSSYK}}(E).
\end{equation}
Let $\bar{\mathsf{b}}_n^{(P)}$ be the Lanczos coefficients associated to $\omega^{(P)}$, and let $\pi_n^{(P)}$ be the corresponding orthogonal polynomials. Let us assume that
\begin{equation} \label{eq:bnInd}
    \bar{\mathsf{b}}_n^{(P)} = \bar{b}_n+ O(\lambda^3),
\end{equation}
where $\bar{\mathsf{b}}_n^{(P=0)} = \bar{b}_n$ are the Lanczos coefficients in the undeformed DSSYK model. Consequently, the orthogonal polynomials corresponding to $\omega^{(P)}$ satisfy the recursion relation
\begin{equation}
    E \pi_n^{(P)}(E) = \pi_{n+1}^{(P)}(E) + \frac{(\bar{b}_n + O(\lambda^3))}{4} \pi_{n-1}^{(P)}(E).
\end{equation}
Now consider adding one more pair of zeros to the Christoffel deformation: 
\beq 
\omega^{(P+1)}(E) \propto (p^2 - E^2)\ \omega^{(P)}(E).
\eeq 
Then, from equation \eqref{eq:bnRatio1} we have
\begin{equation}
    \left( \frac{\bar{\mathsf{b}}_n^{(P+1)}}{\bar{\mathsf{b}}_n^{(P)}} \right)^2 = \frac{r_{n+1}^{(P)}(p)}{r^{(P)}_{n-1}(p)},
\end{equation}
where the ratio $r_{n}^{(P)}(p)$ is defined as $r^{(P)}_n(p) = \pi_{n+1}^{(P)}(p) / \pi_{n}^{(P)}(p)$. As in equation \eqref{eq:rExp}, we assume that $r^{(P)}_n$ admits a well-defined triple scaling limit which we call $r^{(P)}(\ell, p)$. This quantity satisfies the recursion relation:
\beq 
p = r^{(P)}(\ell,p) + \frac{1}{4}(1-\lm^2 e^{-\ell}+O(\lm^3))\frac{1}{r^{(P)}(\ell-\lm,p)}.
\eeq 
The recursion relation implies that there are no singularities at $\lm=0$, and thus $r^{(P)}$ can be expanded in powers of $\lambda$ as
\begin{equation}
    r^{(P)}(\ell, p) = \sum_{n = 0}^{\infty} r^P_{(n)}(\ell, p) \lambda^n.
\end{equation}
The coefficients $r_{(n)}^P$ can be computed using the recursion relation. This calculation is essentially identical to the $P=1$ case and shows that the coefficients remain unchanged up to $O(\lambda^2)$:
\begin{align}
    r^P_{(0)}(\ell, p) &= \frac{\phi(p)}{2}, \\
    r^P_{(1)}(\ell, p) &= 0, \\
    r^P_{(2)}(\ell, p) &= \frac{\phi(p)}{2 (\phi(p)^2 - 1)} e^{-\ell}.
\end{align}
Therefore, in the triple scaling limit
\begin{align*}
    \left(\frac{\bar{\mathsf{b}}^{(P+1)}_n}{\bar{\mathsf{b}}^{(P)}_n}\right)^2 &\to \frac{r^{(P)}(\ell + \lambda, p)}{r^{(P)}(\ell - \lambda, p)} \\
    &= 1 + O(\lambda^3).
\end{align*}
Therefore, by induction on $P,$ equation \eqref{eq:bnInd} holds for any $P \geq 1$. This completes the proof for polynomial Christoffel deformations. We can go a step further and compute the $O(\lm^3)$ correction to the Lanczos ratio:
\beq \label{eq:CDzeros}
\left( \frac{\bar{\mathsf{b}}^{(P)}_n}{\bar{b}_n} \right)^2  = 1 -  \lm^3\sum_{i=0}^{P-1}\frac{2e^{-\ell}}{(\phi^2(p_i)-1)}+ O(\lm^4).
\eeq 
As a consistency check, this formula will be reproduced using the RHP method in Appendix \ref{app:asymp_lanc}. The more general case of rational Christoffel deformations with zeros and poles is discussed in Appendix \ref{sec:RationalCDScaling}.

To summarize, we have argued that for any one-cut matrix model where the spectral density corresponds to an even, rational Christoffel deformation (with arbitrary but finite number of extra pairs of zeros and poles) of the DSSYK spectral density, the effective Hamiltonian in the Krylov basis reduces in the low-energy continuum limit to the Liouville Hamiltonian of JT gravity. Together with the fact that such Christoffel deformations modify the spectral density in the middle of the spectrum leaving the near-edge JT behavior unchanged, we see that in this large class of matrix models the semiclassical effective description in the Krylov basis reduces to JT gravity in the low-energy, continuum limit. So, at least in this large class of UV completions of the spectral density of JT gravity, the Krylov basis provides a natural UV version of the geometric ``wormhole-length'' basis of JT gravity. This statement was, of course, well-known in the specific context of the DSSYK model \cite{Berkooz:2018qkz, Lin:2022rbf, Rabinovici:2023yex}, but here we have shown that it is more robustly true in a way that is independent of the details of the UV completion. We also take our results as evidence that the semiclassical effective description in the Krylov basis should be interpreted as a dual gravitational description.

\section{Summary}\label{sec:discussion}
In this paper, we have explained how the construction of the semiclassical bulk Hilbert space associated to the wormhole-length in DSSYK can be carried out in more general one-cut random matrix models satisfying certain minimal conditions. Our construction gives evidence pointing towards a semiclassical gravity dual for all such matrix models at the disc level, i.e., at leading order in the $D\to \infty$ limit. At least without matter, the bulk description is obtained by writing the Hamiltonian in the Krylov basis with respect to the maximally entangled state on two copies. The corresponding Hamiltonian is tri-diagonal, with the ensemble-averaged Lanczos coefficients given by the recursion coefficients of the orthogonal polynomials of the planar spectral density. For any one-cut model with an even spectral density compactly supported on an interval and with square root edges, the corresponding effective Hamiltonian in the Krylov basis behaves similarly to the chord Hamiltonian of DSSYK at large Krylov depth. Furthermore, in a large class of models corresponding to UV deformations of the DSSYK spectral density, the low-energy continuum limit of the Krylov effective Hamiltonian reduces to JT gravity. This reinforces the picture that the semiclassical Hilbert space in the Krylov basis should be interpreted as a bulk gravitational Hilbert space. The gravitational mode in the low-energy limit becomes the length of the wormhole connecting the two sides of the TFD state, but away from the low-energy limit, the bulk description apparently has a discrete wormhole length. It would be interesting to explore and extend this work in several directions, such as including matter, higher-genus contributions, supersymmetry etc.



\acknowledgments
We would like to thank Vijay Balasubramanian, Pawel Caputa, Abhijit Gadde, Ohad Mamroud, Shiraz Minwalla, Harshit Rajgadia, Pratik Rath, Joan Simon, Sandip Trivedi and Zhenbin Yang for helpful discussions. We are grateful to Ohad Mamroud and Joan Simon for helpful comments on an earlier version of the draft. This research was supported by the Department of Atomic Energy, Government of India, under Project Identification Number RTI-4012 and the Infosys Endowment for the study of the Quantum Structure of Spacetime. OP acknowledges fruitful discussions during the workshop ``Observers, wormholes and complex saddles in cosmology", organized at the Bernoulli Center for Fundamental Studies (EPFL, Lausanne) from 18--22 May 2026.

\appendix 

\section{Asymptotics of Lanczos coefficients from RHP} \label{app:asymptotics}
In this appendix, we will present a detailed review of the Riemann-Hilbert approach to extracting the asymptotic behavior of the Lanczos coefficients. We will largely follow the presentation in \cite{kuijlaars2004riemann} (see also \cite{kuijlaars2003riemann}). 

\subsection{Notation and conventions}\label{app:conventions}
In this section we will set up the notation and define various functions to be used later.
\begin{itemize}
    \item $\boldsymbol{\omega(z)}$: This is the analytic continuation of the weight function $\om(x)$ from $[-1, 1]$ to $\mathbb{C}$. Recall that $\om(x)=(1-x)^{\al} (1+x)^{\beta} \ h(x)$ and therefore has branch points at $\{-1, 1\}$ for non-integer values of $\al$ and $\beta$. We will choose the branch cuts to lie along $(-\infty,1]\cup[1,\infty)$ so that $\omega(x)$ is unambiguously defined for $x \in [-1, 1]$. For the particular case that is considered throughout this paper, namely $\al=\beta=\frac12$, we have:
    \beq\label{eq:def_omega}
    \om(z)= \begin{cases}-i \sqrt{z^2-1} \ h(z), & \text{Im}(z)>0 \\
    i \sqrt{z^2-1} \ h(z), &  \text{Im}(z)<0.
    \end{cases}
    \eeq
    Moreover, the branch of $\sqrt{z^2-1}$ is chosen such that $\sqrt{z^2-1} \to z$ as $z \to \infty$ throughout the appendix.
    \item $\boldsymbol{a(z)}$: We define $a(z)=(\frac{z-1}{z+1})^{1/4}$, with the branch cut chosen to lie along the real interval $[-1,1]$.
    \item $\boldsymbol{D(z)}$: This is the so-called Szeg\H{o} function, and is given by
    \beq
    D(z)=\exp \left({\frac{\sqrt{z^2-1}}{2 \pi}\int_{-1}^{1}\frac{dx}{z-x}\frac{\log\omega(x)}{\sqrt{1-x^2}}}\right).
    \eeq
    It is analytic everywhere except for a branch cut along the interval $[-1,1]$. For the weight function corresponding to GUE, i.e., $\om(x)=\frac2{\pi}\sqrt{1-x^2}$, it is given by:
    \beq\label{eq:D_J}
D_J(z)=\frac{(z^2-1)^{1/4}}{\sqrt{\phi(z)}}.
    \eeq
   For any other weight of the form $\om(x)=\sqrt{1-x^2} \ h(x)$, the Szeg\H{o} function has the decomposition
    \beq\label{eq:fact_J}
    D(z)=D_J(z) D_h(z),
    \eeq
    where 
    \beq\label{eq:Dhz}
    D_h(z)=\exp \left(\frac{\sqrt{z^2-1}}{2 \pi}\int_{-1}^{1}\frac{dx}{z-x}\frac{\text{log}\,h(x)}{\sqrt{1-x^2}}\right).
    \eeq
    We will use this notation throughout the appendix. More generally, the Szeg\H{o} function satisfies the following \emph{factorization property:} if the density of states is of the form:
    \beq
      \om(x)=\mathcal{N}\sqrt{1-x^2} h_1(x) h_2(x)
    \eeq
    where, $h_1,\ h_2$ are positive real analytic functions on the interval $[-1,1]$, then
    \beq
      D_h(z)=D_{h_1}(z) D_{h_2}(z).
    \eeq
    Another important property of $D_h(z)$ that is useful later is that it is even under $z \to -z$. To see this, note that:
    \beq
       D_h(-z)=\exp \left(-\frac{\sqrt{z^2-1}}{2 \pi}\int_{-1}^{1}\frac{dx}{-z-x}\frac{\text{log}\,h(x)}{\sqrt{1-x^2}}\right).
    \eeq
    under $x\to-x$, this becomes $D_h(z)$ as $h(x)$ is an even function. 
    \item $\boldsymbol{\phi(z)}$: this is the conformal map from $\mathbb{C} \setminus [-1,1]$ to the exterior of the unit circle and is given by $\phi(z)=z+\sqrt{z^2-1}$. The branch cut is chosen to lie along the real interval $[-1,1]$ and $\sqrt{z^2-1} \to z$ as $z \to \infty$. For $x \in (-1, 1)$, it is useful to define $\phi_{\pm}(x) = x \pm i \sqrt{1-x^2}$ so that $\phi_+(x) \phi_-(x) = 1$. Note that $\phi(z)$ is the natural analytic continuation of $\phi_+(x)$ to the upper half plane and $\phi_-(x)$ to the lower half plane. In other words, $\phi(x \pm i 0^+) = \phi_{\pm}(x)$ for $x \in (-1, 1)$. 
\end{itemize}

\subsection{Solving the RHP of section \ref{sec:poly}}\label{app:unique}
Let us begin by recalling the Riemann-Hilbert problem defined in section \ref{sec:poly}. We seek a matrix valued function $Y: \mathbb{C} \setminus [-1, 1] \to \mathbb{C}_{2\times 2}$ that satisfies the following properties:
\begin{enumerate}
    \item $Y$ is analytic on $\mathbb{C} \setminus [-1, 1]$.
    \item $Y$ satisfies the following jump condition on $(-1, 1)$:
    \begin{equation} \label{eq:JumpY}
        Y_+(x) = Y_{-}(x) \begin{pmatrix}
            1 & \omega(x) \\
            0 & 1
        \end{pmatrix}
    \end{equation}
    where we have defined $Y_\pm(x) = Y(x \pm i 0^+)$.
    \item $Y$ satisfies the asymptotic condition
    \begin{equation} \label{eq:YAsymp}
        Y(z) = \left( \mathbb{I} + O(1/z) \right) \begin{pmatrix}
            z^n & 0 \\
            0 & z^{-n}
        \end{pmatrix}   
    \end{equation}
    as $|z| \to \infty$.
    \item Near the endpoints $z \to \pm 1$, $Y_{ij}(z) = O(1)$ for every $1 \leq i,j \leq 2$.
\end{enumerate}
Let us first show that the solution to this problem, if it exists, is unique. We will then prove existence by explicitly constructing a solution.

Uniqueness is a straightforward consequence of Liouville's theorem. To begin with, let us take the determinant on both sides of eq.\ \eqref{eq:JumpY}. Clearly, $\det Y_+(z) = \det Y_{-}(z)$ and therefore $\det Y: \mathbb{C} \to \mathbb{C}$ is an entire function. Now, eq.\ \eqref{eq:YAsymp} implies that $\lim_{|z| \to \infty} \det Y(z) = 1$ and by analyticity, $\det Y(z) = 1$ everywhere. This shows that $Y$ must be invertible everywhere on $\mathbb{C}$. Now, suppose $Y$ and $\tilde{Y}$ are two solutions to the above RHP. Then, it is easy to check that
\begin{align*}
    Y_+(z) \tilde{Y}_+^{-1}(z) &= Y_{-}(z) \tilde{Y}_{-}^{-1}(z),
\end{align*}
which shows that $Y \tilde{Y}^{-1}$ is an entire function on $\mathbb{C}$. Moreover, from eq.\ \eqref{eq:YAsymp} we see that $Y(z) \tilde{Y}^{-1}(z) \to \mathbb{1}$ as $|z| \to \infty$ and therefore by Liouville's theorem $Y = \tilde{Y}$.\footnote{Note that the endpoint conditions are necessary to ensure uniqueness. Roughly speaking, without these conditions one can always multiply $Y$ by a meromorphic function with isolated singularities at $\pm 1$ and obtain a valid solution to the RHP. The endpoint conditions fix this ambiguity.} 

We will now construct a solution to the above RHP. The jump condition eq.\ \eqref{eq:JumpY} is equivalent to four additive RHPs on $(-1, 1)$:
\begin{align}
    Y^+_{11}(x) &= Y^{-}_{11}(x) \label{eq:JumpY11} \\
    Y^+_{21}(x) &= Y^{-1}_{21}(x) \label{eq:JumpY21}\\
    Y_{12}^{+}(x) - Y_{12}^{-}(x) &= \omega(x) Y^{-}_{11}(x) \label{eq:JumpY12}\\
    Y_{22}^{+}(x) - Y_{22}^{-}(x) &= \omega(x) Y^{-}_{21}(x) \label{eq:JumpY22}
\end{align}
 The asymptotic condition can be equivalently be written as
\begin{equation}
    Y(z) \begin{pmatrix}
        z^{-n} & 0 \\
        0 & z^n
    \end{pmatrix}
    \sim \mathbb{1} + O(1/z), \quad \text{as } |z| \to \infty.
\end{equation}
From here, it is easy to see that we have the following asymptotic conditions on the entries of $Y(z)$:
\begin{align}
    Y_{11}(z) &\sim z^n \label{eq:AsympY11}\\
    Y_{21}(z) &\sim O(z^{n-1}) \label{eq:AsympY21}\\
    Y_{12}(z) &\sim O(z^{-n - 1}) \label{eq:AsympY12}\\
    Y_{22}(z) &\sim z^{-n} \label{eq:AsympY22}.
\end{align}
Clearly, it follows from eqs.\ \eqref{eq:JumpY11} and \eqref{eq:JumpY21} that $Y_{11}$ and $Y_{21}$ are entire functions. Therefore, the asymptotic conditions \eqref{eq:AsympY11} and \eqref{eq:AsympY21} and Liouville's theorem  imply that $Y_{11}(z)$ and $Y_{21}(z)$ are polynomials of degree $n$ and $n-1$ respectively with $Y_{11}$ being monic. Let $Y_{11}(z) = P_n(z)$ and $Y_{21}(z) = Q_{n-1}(z)$. From conditions \eqref{eq:JumpY12} and \eqref{eq:JumpY22}, we have
\begin{align}
    Y_{12}(z) &= \frac{1}{2\pi i} \int_{-1}^{1} dx\, \frac{\omega(x) P_n(x)}{x - z} \equiv C[P_n \omega](z) \\
    Y_{22}(z) &= \frac{1}{2\pi i} \int_{-1}^{1} dx\, \frac{\omega(x) Q_{n-1}(x)}{x - z} \equiv C[Q_{n-1} \omega](z).
\end{align}
Now, recall that Cauchy transforms of typical functions decay as $1/z$ for large $|z|$, whereas in our case we require $Y_{12}(z) \sim O(z^{-n - 1})$. To ensure that $Y_{12}$ has the correct asymptotics we need further conditions on $P_n$. Performing an expansion on $Y_{12}$, it is easy to see that these extra conditions are
\begin{equation}
    \int_{-1}^{1}dx\, \omega(x) P_n(x)\, x^k = 0, \quad \text{for all}\ 0 \leq k \leq n-1.
\end{equation}
This shows that $P_n$ are orthogonal with respect to $\omega$, and since $P_n$ are monic, we have $P_n = \pi_n$.

A similar argument shows that $Q_{n-1} = c_n \pi_{n-1}$. The coefficient $c_n$ is fixed by the asymptotic condition $Y_{22}(z) \sim z^{-n}$ and turns out to be $c_n = -2\pi i \gamma_{n-1}^2$. Finally note that the endpoint conditions are trivially satisfied since the weight function $\omega(x) = \sqrt{1-x^2} h(x)$ is positive and bounded on $(-1, 1)$ and the limit $\lim_{z \to 1} Y_{ij}(z)$ exists. This completes the proof.

\subsection{Steepest descent method}
In this section, we will see how to extract the large-$n$ asymptotics of the orthogonal polynomials using a version of the Deift-Zhou steepest descent method \cite{deift1993steepest} developed in \cite{kuijlaars2004riemann}. The idea is to perform a sequence of transformations on the matrix $Y(z)$ whose purpose is to simplify the problem enough so that the large-$n$ limit becomes tractable. We will then perform the appropriate sequence of inverse transformations to extract the asymptotics of $\pi_n$ and the Lanczos coefficients.

\subsubsection*{$\boldsymbol{Y} \to \boldsymbol{T}$}
We define the matrix $T(z)$ as
\beq
T(z)=\mat{2^n}{0}{0}{2^{-n}} Y(z) \mat{\phi(z)^{-n}}{0}{0}{\phi(z)^{n}},
\eeq
where 
\beq 
\phi(z) = z + \sqrt{z^2-1}.
\eeq 
The function $\phi$ is the conformal map from $\mathbb{C}/[-1,1]$ to the exterior of the unit circle and was introduced in \ref{app:conventions}. Also recall from \ref{app:conventions} that for $x \in (-1, 1)$ we defined $\phi_{\pm}(x) = x \pm i \sqrt{1 - x^2}$ so that $\phi_+(x) \phi_-(x) = 1$. Then, a straightforward calculation shows that $T(z)$ satisfies the following RHP:

\noindent\textbf{RHP for $T(z)$:}
\begin{enumerate}
    \item $T(z)$ is analytic on $\mathbb{C} \setminus [-1, 1]$.
    \item $T$ has the following jump matrix across $(-1,1)$:
    \[
    T_+(x)=T_-(x) \mat{\phi_+(x)^{-2n}}{\omega(x)}{0}{\phi_-(x)^{-2n}}.
    \]
    \item $T(z) \to \mathbb{1}+O(1/z) $ as $z \to \infty$.
    \item $T$ has the same behavior as $Y$ when $z \to \pm 1$.
\end{enumerate}
 This transformation serves two purposes. First, it simplifies the boundary conditions at $z = \infty$. Second, although the jump matrix looks complicated, the problem considerably simplifies if we can somehow continue $x$ away from the real axis. Recall from \ref{app:conventions} that $\phi(z)$ is the analytic continuation of $\phi_{\pm}(x)$ in the sense that $\phi(x \pm i0^+) = \phi_{\pm}(x)$. Now, for $z \in \mathbb{C} \setminus [-1, 1]$, $|\phi(z)| > 1$ and $\phi(z)^{-2n}$ is exponentially suppressed, and in the large-$n$ limit the jump matrix approaches the identity. This observation leads us to our next transformation.

\subsubsection*{$\boldsymbol{T} \to \boldsymbol{S}$}
This transformation is based on the following factorization of the jump matrix for $T$ (sometimes called \emph{Bruhat decomposition}):\footnote{More generally, for any matrix $M = \begin{pmatrix} a & b \\ c & d \end{pmatrix} \in GL_2(\mathbb{C})$ we have $M = \begin{pmatrix} 1 & 0\\ \dfrac{d}{b} & 1 \end{pmatrix} \begin{pmatrix} 0 & b\\ -\dfrac{\det(M)}{b} & 0 \end{pmatrix} \begin{pmatrix} 1 & 0\\ \dfrac{a}{b} & 1 \end{pmatrix}$.}
\begin{equation} \label{eq:TJumpDecomp}
    \begin{pmatrix}
        \phi_+^{-2n} & \omega \\
        0 & \phi^{-2n}_{-}
    \end{pmatrix}
    = 
    \begin{pmatrix}
        1 & 0\\
        \frac{\phi_{-}^{-2n}}{\omega} & 1
    \end{pmatrix}
    \begin{pmatrix}
        0 & \omega \\
        -\frac{1}{\omega} & 0
    \end{pmatrix}
    \begin{pmatrix}
        1 & 0\\
        \frac{\phi_{+}^{-2n}}{\omega} & 1
    \end{pmatrix}.
\end{equation}
\begin{figure}[t]
    \centering
    \begin{tikzpicture}[scale=0.95]
        \coordinate (O) at (0, 0);
        \coordinate (A) at (-3, 0);
        \coordinate (B) at (3, 0);
        \coordinate (C) at (-4, 0);
        \coordinate (D) at (4, 0);
        \fill[fill=TGreyLight] (O) ellipse (5.3 and 3.3);
        \draw[postaction={decorate, decoration={markings, mark=at position 0.5 with {\arrow{Stealth}}}}] (A) -- node[midway, yshift=7pt] {$\Sigma_0$} (B);
        \filldraw[draw=black, fill=black] (A) circle (2pt) node[xshift=-6pt, yshift=-12pt] {$-1$};
        \filldraw[draw=black, fill=black] (B) circle (2pt) node[xshift=0pt, yshift=-12pt] {$+1$};
        \draw (A) edge[bend left=45, postaction={decorate, decoration={markings, mark=at position 0.5 with {\arrow{Stealth}}}}] node[midway,yshift=6pt] {$\Sigma_+$} (B);
        \draw (A) edge[bend left=-45, postaction={decorate, decoration={markings, mark=at position 0.5 with {\arrow{Stealth}}}}] node[midway, yshift=6pt] {$\Sigma_-$} (B);
        \node[xshift=-1.2cm, yshift=2cm] at (B) {$U$};
        \node[xshift=-1.2cm, yshift=0.4cm] at (B) {$V_+$};
        \node[xshift=-1.2cm, yshift=-0.4cm] at (B) {$V_-$};
    \end{tikzpicture}
    \caption{The grey shaded region is the neighbourhood $U$ and $\Sigma_0 = [-1, 1]$. The lens shaped region is the region bounded by $\Sigma_+ \cup \Sigma_-$. We will take $V_{\pm}$ to be the regions bounded between $\Sigma_0 \cup \Sigma_{\pm}$.}
    \label{fig:lens}
\end{figure}
The idea now is to interpret the jump condition on $T$ as a sequence of three jumps governed by the above factorization as one crosses the interval $[-1, 1]$ from the upper half plane. This is achieved by breaking up the contour $\Sigma_0 \equiv [-1, 1]$ in to three distinct contours $\Sigma_{+} \cup \Sigma_0 \cup \Sigma_-$, as shown in figure \ref{fig:lens}. To this end, let us choose some open neighbourhood $U$ of the strip $[-1,1]$ where the function $h(z)$ is analytic and does not have any zeros. Such a neighbourhood always exists since $h$ is real-analytic and positive on $[-1, 1]$. In particular, $\operatorname{Re} h(z) > 0$ for all $z \in U$. Inside $U$ we now draw a lens shaped region as in figure \ref{fig:lens} and define a new matrix $S(z)$ as follows:
\beq 
S(z) = \begin{cases}  T(z), & z \in \text{outside the lens} \\
 T(z) \mat{1}{0}{-\frac{1}{\omega(z)} \phi^{-2n}}{1}, & z \in  V_+ \\
 T(z) \mat{1}{0}{\frac{1}{\omega(z)} \phi^{-2n}}{1},  & z \in V_-
\end{cases}.
\eeq
This leads to the following Riemann-Hilbert problem for $S$:

\noindent\textbf{RHP for $S(z)$:}
\begin{enumerate}
    \item $S(z)$ is analytic in $\mathbb{C} \setminus (\Sigma_+ \cup \Sigma_0 \cup \Sigma_-)$.
    \item $S$ has following jump matrices across $\Sigma_+$, $\Sigma_-$ and $\Sigma_0$:
    \beqn
   S_+(z) &=& S_-(z) \mat{1}{0}{\frac{1}{\omega(z)} \phi(z)^{-2n}}{1}, \quad \text{across } \Sigma_{\pm} \\
    S_+(x)&=&S_-(x) \mat{0}{\omega(x)}{-\frac{1}{\omega(x)}}{0}, \quad \text{across } \Sigma_0.
    \eeqn
    \item $S(z) = \mathbb{1}+O(1/z) $ as $z \to \infty$.
    \item As $z\to \pm 1$,
     \begin{eqnarray}
         S(z)&=& O\mat{1}{1}{1}{1}, \quad \text{from outside the lens} \\
         &=&O\mat{|z\mp1|^{-1/2}}{1}{|z\mp1|^{-1/2}}{1}, \quad \text{from inside the lens.}
     \end{eqnarray}
\end{enumerate}
In deriving the jump conditions, we have used the factorization formula eq.\ \eqref{eq:TJumpDecomp}.
The benefit of this transformation is that now $\phi(z)^{-2n}$, in the jump matrix, is an exponentially suppressed quantity since it lies outside the strip $[-1,1]$. Furthermore, the jump matrix across $\Sigma_0$ is significantly simpler. 

\begin{figure}[t]
    \centering
    \begin{tikzpicture}[scale=0.9]
        \coordinate (O) at (0, 0);
        \coordinate (A) at (-3, 0);
        \coordinate (B) at (3, 0);
        \coordinate (C) at (-4, 0);
        \coordinate (D) at (4, 0);
        \fill[fill=TGreyLight] (O) ellipse (5.6 and 3.6);
        \draw[postaction={decorate, decoration={markings, mark=at position 0.5 with {\arrow{Stealth}}}}] ($(O)+(4,0)$) arc[start angle=0, end angle=180, x radius=4, y radius=2.5];
        \draw[postaction={decorate, decoration={markings, mark=at position 0.5 with {\arrow{Stealth}}}}] ($(O)+(-4,0)$) arc[start angle=180, end angle=360, x radius=4, y radius=2.5];
        \draw[postaction={decorate, decoration={markings, mark=at position 0.5 with {\arrow{Stealth}}}}] (A) -- node[midway, yshift=7pt] {$\Sigma_0$} (B);
        \filldraw[draw=black, fill=black] (A) circle (2pt) node[xshift=-6pt, yshift=-12pt] {$-1$};
        \filldraw[draw=black, fill=black] (B) circle (2pt) node[xshift=0pt, yshift=-12pt] {$+1$};
        \filldraw[draw=black, fill=black] (C) circle (2pt) node[xshift=-24pt] {$-1 - \epsilon$};
        \filldraw[draw=black, fill=black] (D) circle (2pt) node[xshift=20pt] {$1 + \epsilon$};
        \node[xshift=-1.2cm, yshift=2.4cm] at (B) {$\Gamma$};
    \end{tikzpicture}
    \caption{Contour deformation of $\Sigma_0 \cup \Sigma_+ \cup \Sigma_-$ to $\Sigma_0 \cup \Gamma$. Note that we have dropped the segments $(-1 - \epsilon, -1)$ and $(1, 1 + \epsilon)$ since $S(z)$ is continuous across those segments.}
    \label{fig:lens-deformed}
\end{figure}

For the special case of square root edges that we are working with, we can further deform the lens shaped region as shown and explained in figure \ref{fig:lens-deformed}. By deforming the region in this way, we obtain a new RHP for $S(z)$:

\noindent\textbf{Deformed RHP for $S(z)$:}
\begin{enumerate}
    \item $S(z)$ is analytic in $\mathbb{C} \setminus (\Gamma \cup \Sigma_0 )$.
    \item $S(z) = \mathbb{1}+O(1/z) $ as $z \to \infty$.
    \item $S$ has following jump matrices across $\Gamma$ and  $\Sigma_0$:
    \begin{eqnarray*}
    S_+(z)&=&S_-(z) \mat{1}{0}{\pm\frac{1}{\omega(z)} \phi(z)^{-2n}}{1} \ \ \ \text{for $z \in \Gamma$ ... (+ for $\text{Im}(z)<0$)}  \\ 
    S_+(x)&=&S_-(x) \mat{0}{\omega(x)}{-\frac{1}{\omega(x)}}{0} \ \ \ \text{for $x\in\Sigma_0$}.
    \end{eqnarray*}
    \item As $z \to \pm 1$:
    \[
    O\mat{|z\mp1|^{-1/2}}{1}{|z\mp 1|^{-1/2}}{1}.
    \]
\end{enumerate}

\subsubsection*{$\boldsymbol{S \to R}$}
We are now ready to introduce the final transformation. This is given by:
\beq\label{eq:StoR}
R(z)=S(z)N(z)^{-1},
\eeq
where the matrix $N$ satisfies the following RHP:

\noindent\textbf{RHP for $N(z)$:}
\begin{enumerate}[topsep=0pt]
    \item $N(z)$ is analytic in $\mathbb{C}- \Sigma_0$.
    \item $N(z) = \mathbb{1}+O(1/z) $ as $z \to \infty$.
    \item $N$ has the following jump matrix across $ \Sigma_0 $:
    \begin{eqnarray*}
    N_+(x)&=&N_-(x) \mat{0}{\omega(x)}{-\frac{1}{\omega(x)}}{0} \ \ \ \text{for $x\in \Sigma_0$}.
    \end{eqnarray*}
    \item $N$ has the same behavior as $S$ for $z \to \pm 1$.
\end{enumerate}
By a similar argument to the one that guaranteed uniqueness of the solution for the RHP for $Y$, the above RHP for $N$ must also have a unique solution. Moreover, direct calculation shows that:
\beq
N(z)=\mat{D_{\infty}}{0}{0}{D_{\infty}^{-1}} \mat{\frac{a(z)+a(z)^{-1}}{2}}{\frac{a(z)-a(z)^{-1}}{2i}}{\frac{a(z)-a(z)^{-1}}{-2i}}{\frac{a(z)+a(z)^{-1}}{2}} \mat{D(z)^{-1}}{0}{0}{D(z)},
\eeq
is a solution for this RHP, and hence by uniqueness this is the only solution. The functions $a(z)$ and $D(z)$ and their analytic properties have been previously defined in \ref{app:conventions}. Due to the appearance of $N^{-1}$ in the transformation \eqref{eq:StoR}, the jump matrix across the $[-1,1]$ cut becomes $\mathbb{1}$. We thus get the following RHP for $R$:

\noindent\textbf{RHP for $R(z)$:}
\begin{enumerate}
    \item $R(z)$ is analytic everywhere in the complex plane except on $\Gamma$.
    \item $R(z)\to \mathbb{1}$ as $z\to\infty$.
    \item $R$ has the following jump matrix across $\gamma$:
    \beqn
    R_+(s)&=&R_-(s)V(s)
    \eeqn
    where,
    \beqn
    V(s)&=&N(s) \mat{1}{0}{\frac{1}{w(s)} \phi(s)^{-2n}}{1} N(s)^{-1}\;\cdots \;( s \in \Gamma \cap {\text{Im}(z)<0} )\\
    &=& N(s) \mat{1}{0}{\frac{1}{-w(s)} \phi(s)^{-2n}}{1} N(s)^{-1} \;\cdots\;( s \in \gamma \cap {\text{Im}(z)>0}).
    \eeqn
\end{enumerate}
Using the explicit form of $N$, we can write $V(s)$ as:
\beq
V(s)=\mathbb{1}+\Delta(s)
\eeq
where,
\beq
\Delta(s)=\pm \frac{D(s)^2 \phi(s)^{-2n}}{w(s)}\mathcal{A}(s),
\eeq\label{eq:delta}
with $+$ sign for $s \in \Gamma \cap {\text{Im}(z)<0}$ and $-$ sign for $s \in \Gamma \cap {\text{Im}(z)>0}$, and 
\beq 
\mathcal{A}(s)=\mat{\frac{a(s)^2-a(s)^{-2}}{4i}}{\frac{(a(s)-a(s)^{-1})^2}{4} D_{\infty}^{2}}{\frac{(a(s)+a(s)^{-1})^{2}}{4} D_{\infty}^{-2}}{\frac{a(s)^2-a(s)^{-2}}{-4i}}.
\eeq
Written in this form, $\Delta(s)$ appears to have a branch cut along $(-\infty,1] \cup [1,\infty)$. But this is not true as this apparent branch cut gets canceled by the branch cut in the analytic continuation of $\om(s)$ away from $[-1,1]$ \footnote{This is the reason why the simplification was possible for $\al,\beta=\pm\frac12$}. Therefore, using equations (\ref{eq:def_omega}), (\ref{eq:D_J}) and (\ref{eq:fact_J}) we get:
\beq\label{eq:def delta}
\Delta(s)=-\frac{i \ D_h(s)^2 \phi(s)^{-2n-1}}{h(s)}\mathcal{A}(s),
\eeq
In the large $n$ limit, $\Delta(s)$ is small and of $O(e^{-cn})$. The jump matrix for $R(z)$ is identity as $n\to\infty$, so $R(z)$ at $n=\infty$ (which we will denote as $R^{(0)}(z)$) is an entire function, and therefore by Liouville's theorem $R^{(0)}(z)=\mathbb{1}$. In what follows, we will show that tracking this back to $Y$ implies that the Lanczos coefficients as $n\to \infty$ are given by $\ba_n=0$ and $\bb_n=1/2$. 


\subsection{Solving the RHP for $R$ perturbatively} \label{app:RHP for R}
In this subsection, we show that the RHP for $R(z)$ can be solved perturbatively at large $n$ in various cases. First, we will derive a general formula for the leading perturbative correction to $R$ in the large $n$ limit, but without any low-energy/continuum limit. After this, we will show how to solve the RHP for $R$ direclty in the low-energy continuum limit $\lambda\to 0$, where $\lambda$ is related to the location of the closest pair of zeros of $h$. Here we will consider the two cases discussed in the main text: in case (i), the number of zeros of $h(z)$ in the complex plane are finite and $h(z)$ falls off at most polynomially at complex infinity. In case (ii), the spectral density takes the form:
\beq
\om(E)=\om_{\text{DSSYK}} (E) \ \mathcal{N} \  \frac{\prod_{i=1}^{P}(p_i^2-E^2)}{\prod_{j=1}^{Q}(q_j^2-E^2)},
\eeq
where the zeros $\{p_i\}$ and the poles $\{q_j\}$ are not scaled in the $\lm \to 0$ limit.
\subsubsection*{Large $n$ limit}
At large $n$, we can solve for $R$ order by order in $\Delta$. This is possible since $\Delta$ is uniformly suppressed in $e^{-n}$ outside the interval $[-1,1]$. We can solve for $R(z)$ by expanding it as:
\beq
R(z) = \mathbb{1}+ R^{(1)}(z) + \cdots.
\eeq 
By substituting this in RHP for $R(z)$ and keeping the terms that are $O(\Delta(s))$ in the jump matrix, we get the following RHP for $R^{(1)}(z)$.

\textbf{RHP for $R^{(1)}(z)$:}
\begin{enumerate}
    \item $R^{(1)}(z)$ is analytic everywhere in the complex plane except on $\Gamma$.
    \item $R^{(1)}(z)\to 0$ as $z\to\infty$.
    \item It has the following jump matrix across $\Gamma$:
    \beqn
     R^{(1)}_+(s)&=&R^{(1)}_-(s)+\Delta(s).
    \eeqn
\end{enumerate}
The main simplification that this provides is that $R^{(1)}(z)$ follows an additive RHP instead of the multiplicative RHP satisfied by $R(z)$. Any additive RHP admits the following simple solution:
\beq \label{eq:R1integrand}
R^{(1)}(z)=\frac{1}{2 \pi i}\oint_{\Gamma} ds \frac{\Delta(s)}{s-z}.
\eeq
In order to calculate this integral, we will do a contour deformation in the $s$ plane. To this end, let us first look at the analytic properties of the function $\Delta(s)$. $\Delta(s)$ has a branch cut across $[-1,1]$ inside $\Gamma$ coming from branch cuts of $D_h(s)$, $\phi(s)$ and $a(s)$. The analytic property of this function outside $\Gamma$ depends on the analytic continuation of $h(s)$. If $h(s)$ has zeros in the complex $s$ plane outside $\Gamma$, then $\frac{1}{h(s)}$ will have poles at those points, and while deforming the $\Gamma$-contour to $\infty$ we will pick up contributions from the residues of these poles. In addition, we have also assumed that $\frac{1}{h}$ vanishes sufficiently fast as $s \to \infty$ so that we can drop the contour at infinity. More precisely, the integrand in equation (\ref{eq:R1integrand}) goes as $\frac1{h(s) \ s^{2n+2}}$ as $s \to \infty$. This implies that $h(s)$ should not decay faster than $s^{-2n-1}$ at infinity. Moreover, since $n$ can be taken to be arbitrarily large in the present analysis, any polynomial decay at infinity would fall under this category. For any $h(s)$ of this type, we have: 
\beq\label{eq:solR1}
R^{(1)}(z)=\sum_{s_i} \text{Res}_{s=s_i} \left(\frac1{h(s)} \frac{i \ D_h(s)^2}{(s-z)} \phi(s)^{-2n-1} \mathcal{A}(s)\right) - \frac{i \ D_h(z)^2 \phi(z)^{-2n-1} \mathcal{A}(z)}{h(z)} ,
\eeq
where the first term above comes from the zeros of $h$ and the second term comes from the pole at $s=z$, assuming $z$ lies outside $\Gamma$; we will later be interested in taking $z$ large. In the large $n$ limit, the leading term comes from the residue at the set of points $s_j$ for which $\phi(s_j)$ has the least magnitude, i.e., the closest zeros of $h$. Therefore,
\beq\label{eq:exp R}
R^{(1)}(z)=\sum_{s_j=\pm s_0} \text{Res}_{s= s_j} \left(\frac1{h(s)} \frac{i \ D_h(s)^2}{\phi(s)^{2n+1}(s-z)}  \mathcal{A}(s)\right).
\eeq
It is clear from equation (\ref{eq:exp R}) that if the poles at $\pm s_0$ are simple, then $R^{(1)}(z)$ would be suppressed by a factor of $e^{-2cn}$, this is what we are after in this paper. However, if the order of pole $a$ is greater than $1$, then $R^{(1)}(z)$ would go as $n^{a-1} e^{-2 cn}$. Of course, this does not create any trouble for the large $n$ limit, but could potentially be problematic for the continuum limit. For this reason, it is convenient to restrict ourselves to the case of simple zeros in $h$. In this case:
\beq\label{eq:exp R simple}
R^{(1)}(z)=e^{-2 c n}\sum_{s_j = \pm s_0} \text{Res}_{s=  s_j} \left(\frac1{h(s)}\right) \frac{i \ D_h(s_j)^2}{\phi( s_j)( s_j-z)}  \mathcal{A}( s_j),
\eeq
where, $c=\text{log}\left(s_j+\sqrt{s_j^2-1}\right)$. This leads to an exponential correction in the monic polynomials, and in turn, an exponential correction in the Lanczos coefficients.

We now turn to the solution of the RHP in the continuum limit. 

\subsubsection*{Continuum limit, case (i): finitely many zeros and poles}

While the arguments in the previous section work well for large $n$, the continuum limit is somewhat more delicate because here we are taking $n$ large but at the same time also sending $\lambda \to 0$ in a coordinated way, so it is not completely immediate that $\Delta(s)$ is uniformly small on and outside $\Gamma$. In order to make progress, let us first consider the following specific class of spectral density functions considered in section \ref{sec:finitepoles}:
\beq
\om(E)=\mathcal{N} \sqrt{1-E^2} \prod_{k=0}^{Q-1}(q_k^2-E^2) \ \frac{\prod_{i=0}^{P-1}(p_i^2-E^2)}{\prod_{j=0}^{R-1}(r_j^2-E^2)}.
\eeq
Here, the $q_k$'s are to be scaled in the $\lm\to 0$ limit, while the $\{p_i\}$ and $\{r_j\}$ do not scale. We parameterize the scaled zeros as $q_i=\text{cosh}\left(\frac{\lm f(i)}{2}\right)$, where, $f(i)>f(j)$ for $i>j$ and $f(0)=1$.
From equation \eqref{eq:def delta}, we get:
\beq \label{eq:finiteDelta}
\Delta(s)=-i \ \phi(s)^{-2n-1} \prod_{i=1}^Q k(q_i,s) \frac{\prod_{i=1}^P k(p_i,s)}{\prod_{i=1}^R k(r_i,s)} \mathcal{A}(s),
\eeq
where we have used the factorization property of the Szeg\H{o} function and 
\beq 
k(x,s):=\frac{\left(\phi(x)^2-\phi(s)^{-2} \right)^2}{4 \phi(x)^2 (x^2-s^2)}.
\eeq 
In the continuum limit, the scaled zeros of $h$ at $\{q_k\}$ start approaching the cut $[-1,1],$ and consequently, so also does the contour $\Gamma$, since $\Gamma$ was chosen to lie before we hit the first zero. It is convenient to zoom in near the cut by introducing the new variable $t$ defined as:
\beq
s=\text{cosh}\left(\frac{\lm t}{2} \right).
\eeq
Therefore, in the limit $\lm\to0$, we get
\beq
\phi(s)^{-2 n}=e^{-\lambda n t}= \lm^{2 t}e^{-t\ell },
\eeq
where, in the second equality we have used equation \eqref{eq:scaling}. In terms of the new variable, $\Delta$ becomes:
\beq
\Delta(t) = -i \lm^{2 t} e^{-t\ell} \prod_{i=1}^Q \frac{1-e^{-\lm t} e^{-f(i)\lm}}{1-e^{\lm t} e^{-f(i)\lm}} \ \frac{\prod_{i=1}^P k(p_i)}{ \prod_{i=1}^R k(r_i) \phi(s)} \mathcal{B}(t),
\eeq
where,
\beq
\mathcal{B}(t)=\frac{1}{\lm t} \mat{i}{1}{1}{-i}.
\eeq
Upon simplifying in the $\lm \to 0$ limit, this gives
\beq\label{eq:finite q delta}
\Delta(t)=-i \lm^{2t-1} e^{-t\ell} \prod_{i=1}^Q \left(\frac{f(i)+t}{f(i)-t}\right) \frac{1}{t} \mat{i}{1}{1}{-i}.
\eeq
So, it is clear that if $Q$ is finite, then $\Delta$ is uniformly small on $\Gamma$.\footnote{Note that this is true for all $0<t<1$; the -1 in the power of $\lm^{2t-1}$ is compensated for by the fact that $\Delta$ is a density, i.e., $ds \Delta(s) = dt \Delta(t)$.} This implies that the perturbative approach to solving the RHP is valid and the leading order contribution should come from the nearest zeros. Hence to the leading order we have:
\beq \label{eq:scaled Rsol}
R^{(1)}(z)= \lm^{2}e^{-\ell} \sum_{s_j=\pm s_0} \text{Res}_{s=s_j} \left(\frac1{h(s)}\right) \frac{i \ D_h(s_j)^2}{\phi(s_j)(s_j-z)}  \mathcal{A}(s_j).
\eeq

\subsubsection*{Continuum limit, case (ii): Christoffel deformations of DSSYK}
We will now work out the Christoffel deformations of DSSYK from the RHP method in the continuum limit. But before looking at Christoffel deformations of DSSYK, let us look at the RHP of $R(z)$ for DSSYK itself. Here, the continuum limit is a bit subtle, and we will need to take the continuum limit in a slightly different way -- we first define the $\alpha$-continuum limit in the following way:
\beq
n= \frac{1}{\lm}\log\left(\frac{1}{\lm^{2+\al}}\right) +\frac{\ell}{\lm},\;\;\;\cdots\;\;\;(\alpha-\text{continuum limit}).
\eeq
In the end, we will take the limit $\alpha \to 0^+$. The purpose of introducing $\alpha$ will become clear in what follows. Notice that we can read off the jump matrix for DSSYK from equation \eqref{eq:finiteDelta}, with $Q=\infty, \ P=0$ and $R=0$, this gives:
\beq
\Delta(s)=-i \ \phi(s)^{-2n-1} \prod_{i=1}^{\infty} \frac{(\phi(q_i)^2-\phi(s)^{-2})^2}{4 \ \phi(q_i)^2 (q_i^2-s^2)} \mathcal{A}(s),
\eeq
where, $q_k=\text{cosh}\left( \frac{k \lm}{2}\right)$. Now taking the $\alpha$-continuum limit, we get:
\beq
\Delta(t)=-i \lm^{(2+\al)t-1} e^{-t \ell} \prod_{i=1}^{\infty} \frac{1-e^{-\lm t} e^{-i \lm}}{1-e^{\lm t} e^{-i\lm}} \frac{1}{ t} \mat{i}{1}{1}{-i}.
\eeq
It is easy to check that the infinite product above is given by
\beq
\prod_{i=1}^{\infty} \frac{1-e^{-\lm t} e^{-i \lm}}{1-e^{\lm t} e^{-i\lm}} = \frac{(\bq^{1+t};\bq)_{\infty}}{(\bq^{1-t};\bq)_{\infty}},
\eeq
where, $\bq=e^{-\lm}$. Using the definition of q-Gamma functions, $\Gamma_\bq(x)=\frac{(\bq;\bq)_{\infty}}{(\bq^x;\bq)_{\infty}} (1-\bq)^{1-x}$, in the $\lm\to 0$ limit, we get:
\beq
\prod_{i=1}^{\infty} \frac{1-e^{-\lm t} e^{-i \lm}}{1-e^{\lm t} e^{-i\lm}} = \lm^{-2 t} \frac{\Gamma_\bq(1-t)}{\Gamma_\bq(1+t)},
\eeq
and therefore in the limit $\lm \to 0$ we get:
\beq
\Delta(t)=-i \lm^{\al t-1} e^{-t \ell} \ \frac{\Gamma(1-t)}{\Gamma(1+t)} \frac{1}{t} \mat{i}{1}{1}{-i}.
\eeq
The $-1$ in the power of $\lm$ is spurious because $\Delta(t)$ is a density and we need $dt \Delta(t)=ds \Delta(s)$, so in the end we can see that we get an overall suppression of $\lm^{\al t}$. Note that the perturbative approach to solving the RHP works in this case as long as $\alpha > 0$; if we had set $\alpha=0$, then perturbation theory would not be justified. With $\alpha > 0$, the suppression of $\lm^{\al t}$ is greater for larger values of $t$, therefore, the contribution to $R^{(1)}(z)$ still comes from the nearest pole. And hence, $R(z)$ in this case is just given by taking the continuum limit of the $R(z)$ obtained from the large-n limit. 

Now, consider Christoffel deformtions of DSSYK:
\beq
\om(E)=\om_{\text{DSSYK}} (E)\, \wh(E),
\eeq
where $\wh(E)=\mathcal{N} \frac{\prod_{i=1}^{P}(p_i^2-E^2)}{\prod_{i=1}^{R}(r_i^2-E^2)}$, and $\mathcal{N}$ is an overall normalization constant. Using the factorization property of the Szeg\H{o} function, we get
\beq
D_h(z)=D_{h_{\text{DSSYK}}} D_{\wh}(z).
\eeq
Using this and equation \eqref{eq:def delta}, we find
\beq
\Delta(s)=\frac{\prod_{i=1}^P k(p_i,s)}{\prod_{i=1}^R k(r_i,s)} \ \mathcal{D}\cdot \Delta_{\text{DSSYK}}(s) \cdot\mathcal{D}^{-1},
\eeq
where
\beq
\mathcal{D}=\mat{D_{\wh,\infty}}{0}{0}{D_{\wh,\infty}^{-1}},
\eeq
with $D_{\wh,\infty}$ being the value of the function $D_{\wh}(z)$ as $z\to \infty$. The matrix $\mathcal{D}$ comes from the $D_{\infty}$ factors in the off-diagonal terms in $\mathcal{A}$ and the factor of $\prod_{i=1}^P k(p_i,s)$ comes from $D_h(s)^2/h(s)$.

We now introduce a new matrix $\wR$ defined as $\wR(z)=\mathcal{D}^{-1} R(z) \mathcal{D}$. It is simple to see that this new matrix more or less has the same RHP as $R(z)$, but with a new jump matrix. The new jump matrix is given by:
\beq
\widetilde{V}(s)=\mathbb{1}+\left(\frac{\prod_{i=1}^P k(p_i,s)}{\prod_{i=1}^R k(r_i,s)}\right) \ \Delta_{\text{DSSYK}}(s).
\eeq
This can of course be rewritten as:
\beq \label{eq:jump}
\widetilde{V}(s)=\mathbb{1}+\Delta_{\text{DSSYK}}(s)+ \mathcal{K}(s) \Delta_{\text{DSSYK}}(s),
\eeq
where, 
$$\mathcal{K}(s)=\left(\frac{\prod_{i=1}^P k(p_i,s)}{\prod_{i=1}^R k(r_i,s)}\right)-1.$$ 
Since, $\prod_{i=1}^P k(p_i,s) \sim 1+O(\lm),$ it is clear that to the leading order in $\lm$ the solution to the RHP of $R(z)$ is unchanged when we add a finite number of additional unscaled poles and zeros to the DSSYK spectral density. Therefore, the Lanczos coefficients must also remain unchanged under such deformations.
To calculate the leading order correction to these quantities, notice that in the $\alpha$-continuum limit, we get
\beq
R(z)=\mathbb{1}+\frac{1}{2 \pi i} \oint_{\Gamma} ds \frac{\Delta_{\text{DSSYK}}(s)(1+\mathcal{K}(s))}{s-z}.
\eeq
Picking up the contribution from the poles at $s=\pm \bq$, we get:
\beqn
\tilde{R}(z)&=&\mathbb{1}+R^{(1)}_{\text{DSSYK}}(z) (1+\mathcal{K}(\bq)) \\
&=&\mathbb{1}+R^{(1)}_{\text{DSSYK}}(z) \frac{\prod_{i=1}^{P}\left[\frac{(e^{\kappa_i}-e^{-\lm})^2}{4 e^{\kappa_i}\left(\cosh\left(\frac{\kappa_i}{2}\right)^2- \cosh\left(\frac{\lm_i}{2}\right)^2\right)}\right]}{\prod_{i=1}^{R}\left[\frac{(e^{\mu_i}-e^{-\lm})^2}{4 e^{\mu_i}\left(\cosh\left(\frac{\mu_i}{2}\right)^2- \cosh\left(\frac{\lm_i}{2}\right)^2\right)}\right]},
\eeqn
where we have parameterized $p_i\equiv \cosh \left(\frac{\kappa_i}{2}\right)$ and $r_i\equiv \cosh \left(\frac{\mu_i}{2}\right)$. Expanding the additional factor in powers of $\lm$, we get:
\beq
\tilde{R}(z)=\mathbb{1}+R^{(1)}_{\text{DSSYK}}(z) \frac{\prod_{i=1}^P\left(1+\frac{2 \lm}{e^{\kappa_i}-1}\right)}{\prod_{i=1}^R\left(1+\frac{2 \lm}{e^{\mu_i}-1}\right)}
\eeq
and therefore:
\beq \label{eq:R for Christ}
R(z)=\mathbb{1}+\mathcal{D}R^{(1)}_{\text{DSSYK}}(z) \mathcal{D}^{-1}\frac{\prod_{i=1}^P\left(1+\frac{2 \lm}{e^{\kappa_i}-1}\right)}{\prod_{i=1}^R\left(1+\frac{2 \lm}{e^{\mu_i}-1}\right)}.
\eeq

\subsection{Asymptotics for Lanczos coefficients} \label{app:asymp_lanc}
We will now calculate the Lanczos coefficients. 
\subsubsection{Large-$n$ limit}

We know that up to $O(e^{-c\,n})$ corrections, $R(z)=\mathbb{1}$. Using this, we can go back to the original matrix $Y(z)$ by going through through the sequence of inverse transformations $R \to S \to T \to Y$. From the first element of the matrix $Y(Z)$, we get the following:
\beq\label{eq:monicpi}
\pi_n(z)=\left(\frac{\phi(z)}{2}\right)^n \left((\frac{a(z)+a(z)^{-1})}{2} \frac{D_{\infty}}{D(z)} R_{11}(z) + i \frac{(a(z)-a(z)^{-1})}{2} \frac{R_{12}(z)}{D_{\infty} D(z)}\right).
\eeq
Hence, to leading order we get:
\beq\label{eq:leadpi}
\pi_n^{(0)}(z)=\left(\frac{\phi(z)}{2} \right)^{n} \frac{D_{\infty}}{D(z)} \frac{(a(z)+a(z)^{-1})}{2}.
\eeq
We can insert this into the recursion relations described in equation \eqref{eq:rec pi}. This gives:
\beq\label{eq:rec phi}
z \ \left(\frac{\phi(z)}{2} \right)^{n}= \left(\frac{\phi(z)}{2} \right)^{n+1}+\overline{a}_n \left(\frac{\phi(z)}{2} \right)^{n}+\overline{b}_n^2 \left(\frac{\phi(z)}{2} \right)^{n-1}.
\eeq
Expanding
$\left(\frac{\phi(z)}{2} \right)^{n}$ around $z = \infty$ as 
$$\left(\frac{\phi(z)}{2}\right)^{n}=z^n+c_nz^{n-1}+d_n z^{n-2}+O(z^{n-3}),$$ 
we get $c_n=0$ and $d_n=-n/4$. Substituting this in equation \eqref{eq:rec phi}, we find:
\beqn\label{eq:lanczos}
\overline{a}_n&=&c_n-c_{n+1}, \\
\overline{b}_n^2&=&d_n-d_{n+1}-a_nc_n.
\eeqn
So, to the leading order we have, $\overline{a}_n=0$ and $\overline{b}_n^2=\frac{1}{4}$. 


At the next order, we take the first correction $R^{(1)}$ given in equation \eqref{eq:exp R} and push it through the sequence of inverse transformations $R\to S\to T\to Y$ to obtain:
\beq
\pi_n^{(1)}(z)=\left(\frac{\phi(z)}2 \right)^n \frac{D_{\infty}}{D(z)} \left( \frac{(a(z)-a(z)^{-1})}{2 \ D_{\infty}^2} \ i \  R^{(1)}_{12}(z) + \frac{(a(z)+a(z)^{-1})}{2 } R^{(1)}_{11}(z) \right).
\eeq
Substituting equation \eqref{eq:exp R}, we get:
\beq\label{eq:corrpi}
\pi_n^{(1)}(z)=e^{-2cn} \sum_{s_j=\pm s_0} g(s_j) \left(\frac{\phi(z)}2 \right)^n \frac{D_{\infty}}{D(z)} \left(\frac{a(z)}{a(s_j)}+ \frac{a(s_j)}{a(z)} \right) \frac{(a(s_j)-a(s_j)^{-1})}{s_j-z} ,
\eeq
where,
\beq\label{eq:def g}
g(s_j)=\text{Res}_{s=s_j} \left(\frac1{h(s)}\right) \frac{\ D_h(s_j)^2}{4 \ \phi(s_j)}.
\eeq
Expanding the polynomial $\pi_n(z)$ around $z=\infty$ and substituting in the recursion relations, we get:
\beqn
\overline{a}_n&=&- e^{-2cn} (1-e^{-2c}) \sum_{s_j} g(s_j) \left(a(s_j)^2-\frac1{a(s_j)^2} \right),  \\
\overline{b}_n^2&=&\frac14-e^{-2cn} (1-e^{-2c}) \sum_{s_j} g(s_j)\left[\frac{(a(s_j)-a(s_j)^{-1})^2}{2}+s_j \left(a(s_j)^2-a(s_j)^{-2}\right)\right].
\eeqn
For the case of a symmetric potential, it is clear from equation \eqref{eq:def g} that:
\beq
g(-s)=g(s).
\eeq
Since, $D_h(s_j)$ is an even function for a symmetric potential as shown in section \ref{app:conventions} whereas $\text{Res}_{s=s_j} \left(\frac1{h(s)}\right)$ and $\phi(s_j)$ are odd functions under $s_j \to -s_j$. Also:
\beq
a(-s)=a(s)^{-1}
\eeq
Using these properties, it is clear that $\overline{a}_n=0$ (as should be the case for an even spectral density) and
\beqn 
\overline{b}_n^2 = \frac14 - 4 \ |g(s_0)| \ e^{-2cn}.
\eeqn

Let us consider some examples:

\noindent\textbf{Example 1 (Two zeros):} Consider the spectral density:
\beq
\omega(E)=\mathcal{N} \sqrt{1-E^2} (q^2-E^2).
\eeq
For this theory we get:
\beq
\frac{D_h(s)^2}{h(s)}=\frac{ \left(\phi(q)^2-\phi(s)^{-2}\right)^2}{4 \ \phi(q)^2 (q^2-s^2)}.
\eeq
Substituting this into equation (\ref{eq:def g}), we get:
\beq
g(s_0)= - \frac{\left( e^{\lm}-e^{-\lm}\right)^2}{32 \ e^{\frac{3\lm}{2}} \ \text{cosh}\left(\frac{\lm}{2}\right)},
\eeq
where
\beq
q = \text{cosh}\left(\frac{\lambda}{2}\right).
\eeq
Thus,
\beq
\overline{b}_n=\frac12-\frac{\left( e^{\lm}-e^{-\lm}\right)^2}{8 \ e^{\frac{3\lm}{2}} \ \text{cosh}\left(\frac{\lm}{2}\right)} e^{-\lm n}+\cdots.
\eeq

\noindent \textbf{Example 2 (Multiple zeros):} Consider a spectral density with multiple poles and zeros:
\beq
\om(E)=\mathcal{N} \sqrt{1-E^2} \frac{\prod_{i=1}^Q (q_i^2-E^2)}{\prod_{i=1}^R (r_i^2-E^2)}.
\eeq
In this case, we have:
\beq
\frac{D_h(s)^2}{h(s)}=\frac{\prod_{i=1}^N \left[\frac{ \left(\phi(q_i)^2-\phi(s)^{-2}\right)^2}{4 \ \phi(q_i)^2 (q_i^2-s^2)}\right]}{\prod_{j=1}^M \left[\frac{ \left(\phi(p_j)^2-\phi(s)^{-2}\right)^2}{4 \ \phi(p_i)^2 (p_i^2-s^2)}\right]}.
\eeq
Substituting this in equation (\ref{eq:def g}), we get:
\beq
g(s_0)= - \frac{\left( e^{\lm_1}-e^{-\lm_1}\right)^2}{32 \ e^{\frac{3\lm_1}{2}} \ \text{cosh}\left(\frac{\lm_1}{2}\right)} \ \frac{\prod_{i=2}^N \left(\frac{1-e^{-\lm_1} e^{-\lm_i}}{1-e^{\lm_1} e^{-\lm_i}}\right)}{\prod_{j=1}^M \left(\frac{1-e^{-\lm_1} e^{-\eta_j}}{1-e^{\lm_1} e^{-\eta_j}}\right)},
\eeq
where,
\beqn
q_i & = & \text{cosh}\left(\frac{\lambda_i}{2}\right), \\
p_j & = & \text{cosh}\left(\frac{\eta_j}{2}\right).
\eeqn
Therefore,
\beq
\overline{b}_n = \frac12-\frac{\left( e^{\lm_1}-e^{-\lm_1}\right)^2}{8 \ e^{\frac{3\lm_1}{2}} \ \text{cosh}\left(\frac{\lm_1}{2}\right)} \ \frac{\prod_{i=2}^N \left(\frac{1-e^{-\lm_1} e^{-\lm_i}}{1-e^{\lm_1} e^{-\lm_i}}\right)}{\prod_{j=1}^M \left(\frac{1-e^{-\lm_1} e^{-\eta_j}}{1-e^{\lm_1} e^{-\eta_j}}\right)} e^{- \lm_1 n} + \cdots.
\eeq

\noindent\textbf{Example 3 (Double scaled SYK):} The DSSYK spectral density also has a similar form, but with infinitely many zeros at $q_k = \pm \cosh(\frac{k\lm}{2})$ with $k > 1$. Using the formulas from the previous example, we get:
\beq
g(s_0)= - \frac{\left( e^{\lm}-e^{-\lm}\right)^2}{32 \ e^{\frac{3\lm}{2}} \ \text{cosh}\left(\frac{\lm}{2}\right)} \prod_{k=2}^{\infty} \left(\frac{1- e^{-(k+1) \lm}}{1- e^{-(k-1) \lm}}\right).
\eeq
The infinite product gets simplified due to cancellations between the numerator and denominator:
\beq
\prod_{k=2}^{\infty} \left(\frac{1- e^{-(k+1) \lm}}{1- e^{-(k-1) \lm}}\right)= \frac{1}{(1-e^{-\lm}) (1-e^{-2 \lm})}.
\eeq
From here, we find
\beq
g(s_0)=-\frac1{16},
\eeq
and therefore, 
\beqn
\overline{b}^2_n &=& \frac{1}{4}-\frac{1}{4} e^{-\lm n}+\cdots.
\eeqn
This agrees with the exact form of the Lanczos coefficients in DSSYK:
\beq 
\overline{b}_n^2 = \frac{1}{4}(1-\mathfrak{q}^n),\;\;\mathfrak{q}=e^{-\lm}.
\eeq
Remarkably, from the exact formula, we see that the leading contribution computed using the RHP method above is actually exact and does not receive further corrections at higher orders in perturbation theory.

\subsubsection{Continuum limit, case (i)}
Now, let us consider the case of finite number of scaled zeros. We can use the solution for $R(z)$ given in equation (\ref{eq:scaled Rsol}) to obtain:
\beq
\pi_n^{(1)}(z)=\left(\frac{\phi(z)}2 \right)^n \frac{D_{\infty}}{D(z)} \left( \frac{a(z)-a(z)^{-1}}{2 \ D_{\infty}^2} \ i \  R^{(1)}_{12}(z) + \frac{a(z)+a(z)^{-1}}{2 } R^{(1)}_{11}(z) \right)
\eeq
Since this is similar in structure as the unscaled zero case, we get:
\beq
\overline{b}_n = \frac12-4 \ |g(s_0)| \lm^2 e^{-\ell}
\eeq
where:
\beq
g(s_0)= - \frac{\left( e^{\lm}-e^{-\lm}\right)^2}{32 \ e^{\frac{3\lm}{2}} \ \text{cosh}\left(\frac{\lm}{2}\right)} \ \prod_{i=2}^Q \left(\frac{f(i)+1}{f(i)-1} \right) (1+O(\lm)).
\eeq
This shows that in the $\lm \to 0$ limit, $g(s_0)$ is itself suppressed by a power of $\lm^2$ and therefore to $O(\lm^2)$, we get:
\beq
\overline{b}_n= \frac12.
\eeq
\subsubsection{Continuum limit, case (ii)}
Applying the inverse transformations on $\tilde{R}(z)$ obtained in equation \eqref{eq:R for Christ}, we get:
\beq
\pi_n(z)=D_{\tilde{h},\infty} \left(\pi^{(0)}_{DSSYK}(z)+\mathcal{K}(\bq) \pi^{(1)}_{DSSYK}(z)\right). 
\eeq
Notice that the polynomial only differs by the additional factor of $\mathcal{K}(\bq)$, therefore, this effect would show up in the first correction to the lanczos coefficient as a multiplicative factor, so, we get:
\beqn
\overline{b}_n=\frac12-\mathcal{K}(\bq) \frac{\lm^{2+\alpha}}{4} e^{-\ell}.
\eeqn
Taking the $\al\to0^{+}$ limit, we get:
\beqn
\overline{b}_n=\frac12-\frac{\lm^2}{4} \mathcal{K}(\bq) e^{-\ell}.
\eeqn
But in the $\lambda \to 0$ limit, we get:
\beq
\mathcal{K}(\bq)=1+O(\lambda),
\eeq
so
\beqn
\overline{b}_n=\frac12-\frac{\lm^2}{4} e^{-\ell}.
\eeqn
To get the first correction due to the unscaled poles and zeros, we expand $\mathcal{K}(\bq)$ to leading order in $\lm$, which gives:
\beq
\mathcal{K}(\bq) = 1+2 \lm \left(\sum_{i=1}^P \frac{1}{\phi(p_i)^2-1}-\sum_{i=1}^R \frac{1}{\phi(r_i)^2-1} \right) .
\eeq
This shows that that the correction due to the unscaled poles and zeros come at $O(\lm^3)$ and is given by:
\beq\label{eq:conf}
\overline{b}_n=\frac12-\frac14 \lm^2 e^{-\ell}-\frac{\lm^3 e^{-\ell}}{2}\left(\sum_{i=1}^P \frac{1}{\phi(p_i)^2-1}-\sum_{i=1}^R \frac{1}{\phi(r_i)^2-1} \right),
\eeq
which precisely agrees with the corrections obtained using the other method in equations \eqref{eq:CDzeros} and \eqref{eq:CDpoles}.
\section{Rational Christoffel deformations} \label{app:RationalCD}
A rational Christoffel deformation of $\omega$ is given by
\begin{equation}
    \widetilde{\omega}(E) = \mathcal{N} \frac{\prod_{i=1}^k(p_i - E)}{\prod_{j=1}^l(q_j - E)}\, \omega(E),
\end{equation}
where the points $p_i$ and $q_j$ lie outside the support $[-1, 1]$ and are chosen so that $\widetilde{\omega}$ is real and positive on $[-1, 1]$. In this appendix we state the relevant theorems pertaining to such deformations. The proofs can be found in Appendix \ref{app:ChristoffelProof}.
\begin{theorem}[Rational Christoffel deformations] \label{thm:RationalCD}
    Let $\widetilde{\omega}$ be a Christoffel deformation of $\omega$ that introduces a collection of simple zeros at the points $\{p_i\}_{i=1}^k$ with $p_i > 1$, and simple poles at the points $\{q_j\}_{j=1}^l$ with $q_j > 1$. Let $\widetilde{\pi}_n(E)$ be the $n^{th}$ (monic) orthogonal polynomial with respect to $\widetilde{\omega}$. Then, if $l \leq n$, we have
    \begin{equation} \label{eq:RationalCDPoly}
        \widetilde{\pi}_n(E) = \frac{(-1)^k}{ \prod_{i=1}^k (p_i - E)}
        \frac{\det\begin{pmatrix}
            C[\pi_{n-l} \omega](q_1) & C[\pi_{n-l+1}\omega](q_1) & \dots &C[\pi_{n+k}\omega](q_1) \\
            \vdots & \vdots & \ddots & \vdots \\
            C[\pi_{n-l}\omega](q_l) & C[\pi_{n-l+1}\omega](q_l) & \dots &C[\pi_{n+k}\omega](q_l) \\
            \pi_{n-l}(p_1) & \pi_{n-l+1}(p_1) & \cdots & \pi_{n+k}(p_1) \\
            \vdots & \vdots & \ddots & \vdots \\
            \pi_{n-l}(p_k) & \pi_{n-l+1}(p_k) & \cdots & \pi_{n+k}(p_k) \\
            \pi_{n-l}(E) & \pi_{n-l+1}(E) & \cdots & \pi_{n+k}(E) 
            \end{pmatrix}}{\det
            \begin{pmatrix}
            C[\pi_{n-l}\omega](q_1) & C[\pi_{n-l+1}\omega](q_1) & \dots &C[\pi_{n+k-1}\omega](q_1) \\
            \vdots & \vdots & \ddots & \vdots \\
            C[\pi_{n-l}\omega](q_l) & C[\pi_{n-l+1}\omega](q_l) & \dots &C[\pi_{n+k-1}\omega](q_l) \\
            \pi_{n-l}(p_1) & \pi_{n-l+1}(p_1) & \cdots & \pi_{n+k-1}(p_1) \\
            \vdots & \vdots & \ddots & \vdots \\
            \pi_{n-l}(p_k) & \pi_{n-l+1}(p_k) & \cdots & \pi_{n+k-1}(p_k) \\
            \end{pmatrix}}.
    \end{equation}
    where $C[\pi_a \omega](z)$ is the Cauchy transform of $\pi_a \omega$:
    \begin{equation}
        C[\pi_a](z) = \int_{-1}^{1} dE\, \frac{\pi_a(E)\omega(E)}{z - E}.
    \end{equation}
\end{theorem}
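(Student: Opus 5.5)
We prove the formula in Theorem~\ref{thm:RationalCD} by the standard Christoffel--Uvarov argument: characterize $\widetilde{\pi}_n$ through its orthogonality conditions, then show that the determinant expression satisfies all of them.

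First, we reduce orthogonality with respect to $\widetilde{\omega}$ to orthogonality with respect to $\omega$. Set
\[
P(E)=\prod_{i=1}^k (p_i-E),\qquad Q(E)=\prod_{j=1}^l (q_j-E).
\]
Consider the function $F(E)=P(E)\,\widetilde{\pi}_n(E)$, up to the sign $(-1)^k$. It is a polynomial of degree $n+k$, and it vanishes at every $p_i$.

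Now take any polynomial $s$ of degree at most $n-1$. We have
\[
\int \widetilde{\pi}_n\, s\, \widetilde{\omega}\,dE=\mathcal{N}\int F\,\frac{s}{Q}\,\omega\, dE .
\]
Write $\frac{s}{Q}=t+\sum_j \frac{c_j}{q_j-E}$, where $t$ is a polynomial of degree at most $n-1-l$; this decomposition needs $l\le n$. The $c_j$ are arbitrary, since they can be tuned by choosing $s$ suitably. So orthogonality against all such $s$ is equivalent to two sets of conditions:

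\begin{itemize}
\item[(a)] $F\perp_\omega \Pi_{n-l-1}$, the polynomials of degree at most $n-l-1$;
\item[(b)] $\int F(E)\,\omega(E)/(q_j-E)\,dE=0$ for each $j$.
\end{itemize}

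Condition (a) forces $F$ into the span of $\pi_{n-l},\dots,\pi_{n+k}$, which is $k+l+1$ unknown coefficients. Condition (b) says $C[F\omega](q_j)=0$ with the paper's last Cauchy-transform convention. Vanishing at the $p_i$ gives $F(p_i)=0$. Together these are $k+l$ homogeneous linear conditions on the coefficients.

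Next, we check that the numerator determinant meets all of these conditions. Expand the numerator determinant along its last row. This gives a combination $\sum_m \alpha_m \pi_{n-l+m}(E)$. Linearity of the Cauchy transform and of evaluation then show that replacing the last row by $C[\,\cdot\,\omega](q_j)$ or by the evaluation at $p_i$ reproduces an existing row. Hence the determinant vanishes in each case, so the function satisfies (a), (b) and $F(p_i)=0$.

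Dividing by $P(E)$ therefore yields a polynomial of degree $n$, because $F$ vanishes at the $p_i$. This polynomial is orthogonal to $\Pi_{n-1}$ in $\widetilde{\omega}$.

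For normalization, the coefficient of $\pi_{n+k}$ in the expansion is, up to sign, the minor equal to the denominator determinant. The leading coefficient of $F$ is therefore that minor, and after dividing by $P(E)\sim(-1)^kE^k$ and the denominator, the result is monic. The sign $(-1)^k$ takes care of $P$'s leading coefficient. Uniqueness of monic orthogonal polynomials for a positive weight then identifies this polynomial with $\widetilde{\pi}_n$.

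The main obstacle is showing that the denominator determinant is nonzero; without it the formula is meaningless. We plan to argue by contradiction. Suppose the denominator vanishes. Then there is a nonzero combination $G$ of $\pi_{n-l},\dots,\pi_{n+k-1}$ satisfying the same $k+l$ conditions. Dividing by $P$ gives a nonzero polynomial of degree at most $n-1$ that is $\widetilde{\omega}$-orthogonal to $\Pi_{n-1}$, and in particular orthogonal to itself. Since $\widetilde{\omega}>0$ on $[-1,1]$, this forces it to be zero, a contradiction.

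The delicate point is verifying that the reduction from orthogonality to conditions (a) and (b) runs in both directions at degree $n-1$, via the partial-fraction decomposition. This requires the $q_j$ to be distinct from each other and from the $p_i$, which we assume.
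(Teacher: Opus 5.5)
Your proposal is correct and follows essentially the same route as the paper's proof. The numerator is a combination of $\pi_{n-l},\dots,\pi_{n+k}$; the Cauchy-transform and evaluation conditions hold because inserting them in the last row duplicates an existing row; orthogonality reduces to these conditions via polynomial division and partial fractions in $1/\prod_j(q_j-E)$; and the nonvanishing of the denominator follows by contradiction from $\int\widetilde{\omega}\,g^2=0$. Your kernel-vector version of that last step (a nontrivial combination $G$ of $\pi_{n-l},\dots,\pi_{n+k-1}$ killed by all $k+l$ conditions) is slightly tighter than the paper's. The paper instead replaces the $p_1$ row by $\pi(E)$, which tacitly requires the resulting $D(E)$ not to vanish identically and does not directly cover $k=0$.
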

Note that the above theorem holds only if the number of new poles does not exceed $n$ so that $C[\pi_{n-l} \omega]$ is well defined. There is a version of this theorem for the case $l > n$, but since we are primarily interested in the large-$n$ limit with $k, l$ fixed, we will not state it here. For further details see \cite{ismail2005classical}.
\begin{corollary}[Even rational Christoffel deformations] \label{cor:Cor3}
    Let $\widetilde{\omega}$ be an even rational Christoffel deformation of $\omega$ with simple zeros and poles at $\{\pm p_i\}_{i=1}^{k}$ and $\{\pm q_j\}_{j=1}^l$ respectively. Then, if $2l \leq n$, the deformed orthogonal polynomials take the form
    \begin{equation} \label{eq:RationalCDEven}
        \widetilde{\pi}_n(E) = \frac{(-1)^k}{\prod_{i=1}^k (p_i^2 - E^2)} \frac{\det M^{(n)}_{k,l}(E)}{\det \mathcal{D}^{(n)}_{k,l}}
    \end{equation}
    where we have defined
    \begin{equation}
        M^{(n)}_{k,l}(E) = 
        \begin{pmatrix}
            C[\pi_{n-2l} \omega](q_1) & C[\pi_{n-2(l+1)}\omega](q_1) & \dots &C[\pi_{n+2k}\omega](q_1) \\
            C[\pi_{n-2l} \omega](-q_1) & C[\pi_{n-2(l+1)}\omega](-q_1) & \dots &C[\pi_{n+2k}\omega](-q_1) \\
            \vdots & \vdots & \ddots & \vdots \\
            C[\pi_{n-2l}\omega](-q_l) & C[\pi_{n-2(l+1)}\omega](-q_l) & \dots &C[\pi_{n+2k}\omega](-q_l) \\
            \pi_{n-2l}(p_1) & \pi_{n-2(l+1)}(p_1) & \cdots & \pi_{n+2k}(p_1) \\
            \pi_{n-2l}(-p_1) & \pi_{n-2(l+1)}(-p_1) & \cdots & \pi_{n+2k}(-p_1) \\
            \vdots & \vdots & \ddots & \vdots \\
            \pi_{n-2l}(-p_k) & \pi_{n-2(l+1)}(-p_k) & \cdots & \pi_{n+2k}(-p_k) \\
            \pi_{n-2l}(E) & \pi_{n-2(l+1)}(E) & \cdots & \pi_{n+2k}(E) 
            \end{pmatrix}
    \end{equation}
    and
    \begin{equation}
        \mathcal{D}^{(n)}_{k,l} = \begin{pmatrix}
            C[\pi_{n-2l} \omega](q_1) & C[\pi_{n-2(l+1)} \omega](q_1) & \dots &C[\pi_{n+2(k-1)} \omega](q_1) \\
            C[\pi_{n-2l} \omega](-q_1) & C[\pi_{n-2(l+1)} \omega](-q_1) & \dots &C[\pi_{n+2(k-1)} \omega](-q_1) \\
            \vdots & \vdots & \ddots & \vdots \\
            C[\pi_{n-2l} \omega](-q_l) & C[\pi_{n-2(l+1)} \omega](-q_l) & \dots &C[\pi_{n+2(k-1)} \omega](-q_l) \\
            \pi_{n-2l}(p_1) & \pi_{n-2(l+1)}(p_1) & \cdots & \pi_{n+2(k-1)}(p_1) \\
            \pi_{n-2l}(-p_1) & \pi_{n-2(l+1)}(-p_1) & \cdots & \pi_{n+2(k-1)}(-p_1) \\
            \vdots & \vdots & \ddots & \vdots \\
            \pi_{n-2l}(-p_k) & \pi_{n-2(l+1)}(-p_k) & \cdots & \pi_{n+2(k-1)}(-p_k) \\
            \end{pmatrix}.
    \end{equation}
\end{corollary}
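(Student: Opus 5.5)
The plan is to obtain Corollary \ref{cor:Cor3} as a direct specialization of Theorem \ref{thm:RationalCD}. The even deformation is regarded as a rational Christoffel deformation with $2k$ simple zeros at $\{p_1,-p_1,\dots,p_k,-p_k\}$ and $2l$ simple poles at $\{q_1,-q_1,\dots,q_l,-q_l\}$.

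\textbf{Step 1: rewrite the weight.} The first step is to write
\begin{equation*}
\prod_{i=1}^k (p_i^2-E^2)=(-1)^k\prod_{i=1}^k (p_i-E)(-p_i-E),
\end{equation*}
and similarly for the $q_j$. This puts $\widetilde\omega$ in exactly the form \eqref{eq:RationalCD}, with the overall constant $(-1)^{k+l}$ absorbed into $\mathcal N$. Monic orthogonal polynomials do not depend on the normalization of the weight, so this rescaling is harmless.

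\textbf{Step 2: justify applying the theorem.} Theorem \ref{thm:RationalCD} is stated for points $>1$, but the mirrored points $-p_i,-q_j$ lie below $-1$. I would check that the proof in Appendix \ref{app:ChristoffelProof} only uses two facts. First, the points lie outside $\operatorname{supp}\omega=[-1,1]$, so that $\pi_a(\pm p_i)$ and $C[\pi_a\omega](\pm q_j)$ are finite. Second, $\widetilde\omega$ is positive on $[-1,1]$, which holds because each factor $p_i^2-E^2$ and $q_j^2-E^2$ is positive there. Granting this, the theorem applies with $k\to 2k$ and $l\to 2l$. The hypothesis $l\le n$ becomes $2l\le n$, which is exactly the condition in the corollary.

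\textbf{Step 3: read off the formula.} The theorem then gives $\widetilde\pi_n$ as a ratio of determinants. The numerator matrix has columns $\pi_{n-2l},\pi_{n-2l+1},\dots,\pi_{n+2k}$ (and their Cauchy transforms), which is $2k+2l+1$ consecutive indices. Its rows are the $2l$ Cauchy-transform rows at $\pm q_j$, the $2k$ polynomial rows at $\pm p_i$, and the final row at $E$. The denominator matrix is the corresponding $(2k+2l)\times(2k+2l)$ matrix, with the last column $\pi_{n+2k}$ dropped. These are precisely $M^{(n)}_{k,l}(E)$ and $\mathcal D^{(n)}_{k,l}$, with the column indices read as consecutive, which is forced by matching the number of rows to the number of columns. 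The prefactor is the theorem's $(-1)^{2k}/\prod_{i=1}^k(p_i-E)(-p_i-E)$. By Step 1 this equals $(-1)^k/\prod_{i=1}^k(p_i^2-E^2)$, which is the prefactor in \eqref{eq:RationalCDEven}.

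\textbf{Expected obstacle.} Unlike Corollary \ref{cor:Cor1}, no parity reduction to step-two indices is needed here, because both $+p_i$ and $-p_i$ appear as explicit rows. The main point to verify is therefore Step 2: that the inherited proof, including non-vanishing of $\det\mathcal D^{(n)}_{k,l}$, goes through with points on both sides of the cut. For the non-vanishing I would use the standard argument. If $\det\mathcal D^{(n)}_{k,l}=0$, there would be a nonzero combination of $\pi_{n-2l},\dots,\pi_{n+2k-1}$ that vanishes at all $\pm p_i$ and whose Cauchy transform (times $\omega$) vanishes at all $\pm q_j$. Dividing by $\prod_i(p_i^2-E^2)$ would then yield a nonzero polynomial of degree less than $n$ orthogonal to all polynomials of degree less than $n$ with respect to the positive weight $\widetilde\omega$, which is impossible.
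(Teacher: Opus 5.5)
Your Step 2 is sound: the proof of Theorem~\ref{thm:RationalCD} only needs the points to lie off $[-1,1]$ and $\widetilde\omega>0$ there. So you have correctly derived the unreduced formula, with rows at all $\pm q_j,\pm p_i$ and consecutive columns. But that is just the theorem restated with $k\to 2k$, $l\to 2l$, and your claim that no parity reduction is needed is the gap.

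The display in the statement is garbled, but your reading does not fit it either. With consecutive columns, $\mathcal{D}^{(n)}_{k,l}$ as written (columns $\pi_{n-2l},\dots,\pi_{n+2(k-1)}$) has $2k+2l-1$ columns against $2k+2l$ rows, and you silently replaced its last column by $\pi_{n+2k-1}$. Several things point to a different reading:
\begin{itemize}
\item The labels $\pi_{n+2k}$ and $\pi_{n+2(k-1)}$ follow the step-two pattern of Corollary~\ref{cor:Cor1}, with $\pi_{n-2(l+1)}$ a typo for $\pi_{n-2(l-1)}$.
\item The paper's proof says the odd block ``drops out''.
\item The $k=l=1$ case \eqref{eq:DefPolRationalEven}, which the rest of the paper relies on, is a $3\times 3$ over $2\times 2$ ratio with rows at $q,p,E$ and columns $\pi_{n-2},\pi_n,\pi_{n+2}$.
\end{itemize}
So the intended matrices have rows only at $q_1,\dots,q_l,p_1,\dots,p_k$ (and $E$) and columns $\pi_{n-2l},\pi_{n-2l+2},\dots,\pi_{n+2k}$. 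Getting from your $(2k+2l+1)$-dimensional determinant to that form is exactly the step you skipped.

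To supply it, use the two parity relations. By Lemma~\ref{lm:PolSymm}, $\pi_m(-x)=(-1)^m\pi_m(x)$, and substituting $E\to-E$ gives $C[\pi_m\omega](-z)=(-1)^{m+1}C[\pi_m\omega](z)$. The paper writes $(-1)^m$ here, but only the parity structure matters.
\begin{itemize}
\item Replace each pair of rows at $\pm x$ by their sum and difference. This multiplies numerator and denominator by the same factor $(-2)^{k+l}$, and each new row is supported on columns of a single parity.
\item Set $N=k+l$. The numerator has exactly $N$ columns with $m\not\equiv n \pmod 2$, namely $n-2l+1,\dots,n+2k-1$. Exactly $N$ new rows live on them and form a square block $B$. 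The other $N$ new rows vanish on these columns, so only the $E$ row is mixed.
\item A Laplace expansion along these columns gives $\pm\det B\cdot\det\hat M$. Here $\hat M$ is the $(N+1)\times(N+1)$ matrix with rows $C[\pi_m\omega](q_j)$, $\pi_m(p_i)$, $\pi_m(E)$ and $m=n-2l,n-2l+2,\dots,n+2k$, up to a factor $2$ in each of the first $N$ rows.
\item The denominator splits the same way, as $\pm\det B\cdot\det\hat{\mathcal D}$. The block $B$ is literally the same matrix, and $\hat{\mathcal D}$ is the corresponding $N\times N$ matrix ending at $\pi_{n+2k-2}$.
\end{itemize}
Hence $\det B$ and all powers of $2$ cancel. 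The signs also agree: the column reorderings give $(-1)^{N(N+1)/2}$ versus $(-1)^{N(N-1)/2}$, and moving the $E$ row past $B$ supplies a compensating $(-1)^N$. Non-vanishing of $\det\hat{\mathcal D}$ follows from the theorem's non-vanishing of the full denominator. Your prefactor computation $(-1)^{2k}/\prod_i(p_i-E)(-p_i-E)=(-1)^k/\prod_i(p_i^2-E^2)$ carries over unchanged.
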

Now consider the case $k=1$ and $l=1$ so that the deformed density takes the form
\begin{equation}
    \widetilde{\omega}(E) = \mathcal{N} \frac{p^2 - E^2}{q^2 - E^2} \omega(E).
\end{equation}
This deformation introduces a pair of new zeros at $\pm p$ and a pair of new poles at $\pm q$. Using eq.\ \eqref{eq:RationalCDEven}, it is easy to check that the deformed orthogonal polynomials (for $n \geq 2$) are given by
\begin{equation} \label{eq:DefPolRationalEven}
    \widetilde{\pi}_n(E) = \frac{1}{E^2 - p^2} \left[ \pi_{n+2}(E) - \frac{\tau_{n-2, n+2}(p, q)}{\tau_{n-2,n}(p, q)} \pi_n(E) + \frac{\tau_{n,n+2}(p, q)}{\tau_{n-2, n}(p, q)} \pi_{n-2}(E) \right],
\end{equation}
where the functions $\tau_{a,b}(p,q)$ are defined as
\begin{equation}
    \tau_{a,b}(p,q) = C[\pi_a \omega](q) \pi_b(p) - C[\pi_b \omega](q) \pi_a(p).
\end{equation}
Let $\bar{\mathsf{b}}_n$ and $\bar{b}_n$ be the deformed and undeformed Lanczos coefficients respectively. Using the recursion relations\ \eqref{eq:DefLanczos} and \eqref{eq:UndefLanczos}, and eq.\ \eqref{eq:DefPolRationalEven} we can express $\bar{\mathsf{b}}_n$ in terms of $\bar{b}_n$. The strategy is the same as in the proof of corollary \ref{cor:Cor2} (see Appendix \ref{app:ChristoffelProof}). The final result is
\begin{equation} \label{eq:RationalCDb}
    \bar{\mathsf{b}}_n^2 = \frac{\tau_{n-2, n+2}(p,q) \tau_{n-3, n-1}(p,q)}{\tau_{n-2, n}(p,q) \tau_{n-3, n+1}(p,q)} \bar{b}_n^2 + \frac{\tau_{n-3, n-1}(p,q)}{\tau_{n-3, n+1}(p,q)} \left[ \frac{\tau_{n+1, n+3}(p,q)}{\tau_{n-1, n+1}(p,q)} - \frac{\tau_{n, n+2}(p,q)}{\tau_{n-2, n}(p,q)} \right]
\end{equation}
More generally, for $k,l > 1$ one can show that $\bar{\mathsf{b}}_n^2$ and $\bar{b}_n^2$ are linearly related, i.e. $$\bar{\mathsf{b}}_n^2 = \alpha_n(p,q)\, \bar{b}_n^2 + \beta_n(p,q).$$ Just as in the polynomial case (corollary \ref{cor:Cor2}), the coefficients $\alpha_n(p,q)$ and $\beta_n(p,q)$ can be expressed in terms of the appropriate sub-determinants of eq.\ \eqref{eq:RationalCDEven}.

\section{Proofs of the Christoffel deformation theorem and its corollaries} \label{app:ChristoffelProof}
This appendix contains the proofs of the Christoffel deformation theorems of section \ref{sec:CD} and Appendix \ref{app:RationalCD}. Interested readers can find further details in \cite{szeg1939orthogonal} (see also \cite{ismail2005classical}).
\subsection*{Proof of theorem \ref{thm:Christoffel}}
We closely follow \cite{szeg1939orthogonal}. We consider a Christoffel deformation of a compactly supported weight function $\omega(E)$,
\begin{equation}
    \widetilde{\omega}(E) = \mathcal{N} \prod_{i=1}^k (p_i - E)\, \omega(E),
\end{equation}
where all the points $p_i$ are distinct, and $p_i \in \mathbb{R}-\Sigma_0$. Let $\pi_n$ and $\widetilde{\pi}_n$ be the undeformed and the deformed orthogonal polynomials respectively. We wish to show that
\begin{equation} \label{eq:CDPolyApp}
        \widetilde{\pi}_n(E) = \frac{(-1)^k}{ \prod_{i=1}^k (p_i - E)}
        \frac{\det\begin{pmatrix}
            \pi_n(p_1) & \pi_{n+1}(p_1) & \cdots & \pi_{n+k}(p_1) \\
            \vdots & \vdots & \ddots & \vdots \\
            \pi_{n}(p_k) & \pi_{n+1}(p_k) & \cdots & \pi_{n+k}(p_k) \\
            \pi_n(E) & \pi_{n+1}(E) & \cdots & \pi_{n+k}(E) 
            \end{pmatrix}}{\det \left[ \pi_{n + j -1}(p_i) \right]_{i,j = 1}^k}.
\end{equation}
To begin with, note that the numerator on the right hand side is a polynomial in $E$ (of degree at most $n+k$) which vanishes at the points $\{ p_i \}$. Therefore, $\widetilde{\pi}_n(E)$ must be divisible by each $(p_i - E)$, and so $\widetilde{\pi}_n$ is actually a degree $n$ polynomial. Orthogonality follows from the fact that $\widetilde{\pi}_n$ is a linear combination of the undeformed orthogonal polynomials $\pi_{n+l}$ for $l \geq 0$, i.e.,
\begin{equation}
    \widetilde{\pi}_n(E) = \frac{1}{\prod_{i=1}^k (p_i - E)}\sum_{l=0}^k c_l\, \pi_{n+l}(E),
\end{equation}
and therefore for any monomial $E^m$ with $m < n$, we find
\begin{align*}
    \int_{-1}^{1} dE\, \widetilde{\omega}(E)\, E^m \widetilde{\pi}_n(E) &= \sum_{l=0}^k c_l \mathcal{N}\int_{-1}^{1} dE\, \omega(E)\, E^m \pi_{n+l} (E) \\
    &= 0.
\end{align*}
The coefficient of the leading term in $\widetilde{\pi}_n$ is the coefficient of the degree-$(n+k)$ term on the right hand side, which is 1 on account of the division by $\det[ \pi_{n+j-1}(p_i)]_{i,j=1}^k$ in equation \eqref{eq:CDPolyApp}. Since $\pi_{n+k}$ is monic, it follows that $\widetilde{\pi}_n$ must also be monic, provided that $\det[ \pi_{n+j-1}(p_i)]_{i,j=1}^k$ is non-vanishing. To confirm that this determinant is indeed non-zero, let us assume to the contrary that it vanishes. Consider now the determinant $Q(E)$ defined as
\begin{equation}
    Q(E) = \det \begin{pmatrix}
        \pi_n(E) & \pi_{n+1}(E) & \cdots & \pi_{n+k-1}(E) \\
        \pi_n(p_2) & \pi_{n+1}(p_2) & \cdots & \pi_{n+k-1}(p_2) \\
        \vdots & \vdots & \ddots & \vdots \\
        \pi_n(p_k) & \pi_{n+1}(p_k) & \cdots & \pi_{n+k-1}(p_k)
    \end{pmatrix}.
\end{equation}
If $Q(p_1)=0$, then $Q(E)$ vanishes at each $p_i$. Therefore, $Q(E)$ must be divisible by $\prod_{i=1}^k (p_i - E)$ and we may write $Q(E) = \prod_{i=1}^k (p_i - E) q_{n-1}(E)$, where $q_{n-1}$ is a degree-$(n-1)$ polynomial. As before, observe that $Q(E)$ is a linear combination of the undeformed polynomials $\pi_{n+l}$ for $0 \leq l \leq k-1$, and therefore the overlap between $Q(E)$ and $q_{n-1}(E)$ vanishes. In other words,
\begin{align*}
    \int_{-1}^1 dE\, \omega(E) \,Q(E)\, q_{n-1}(E) \propto \int_{-1}^{1} dE\, \widetilde{\omega}(E) q_{n-1}^2(E) = 0,
\end{align*}
and therefore $q_{n-1} \equiv 0$, a contradiction. This completes the proof.

\subsection*{Proof of corollary \ref{cor:Cor1}}
Assume that both $\omega$ and its deformation $\widetilde{\omega}$ are even. We wish to show that
    \begin{equation} \label{eq:CDPolyEvenApp}
    \widetilde{\pi}_n(E) = \frac{(-1)^k}{\prod_{i=1}^k (p_i^2 - E^2)} \frac{\det 
    \begin{pmatrix}
        \pi_n(p_1) & \pi_{n+2}(p_1) & \cdots & \pi_{n+2k}(p_1) \\
        \vdots & \vdots & \ddots & \vdots \\
        \pi_n(p_k) & \pi_{n+2}(p_k) & \cdots & \pi_{n+2k}(p_k) \\
        \pi_n(E) & \pi_{n+2}(E) & \cdots & \pi_{n+2k}(E) 
    \end{pmatrix}}{\det[ \pi_{n + 2(j-1)}(p_i)]_{i,j=1}^k}.
\end{equation}
We begin by proving the following lemma.
\begin{lemma} \label{lm:PolSymm}
    Let $\pi_n$ be a monic orthogonal polynomial with respect to an even density $\omega$. Then $\pi_n(E) = (-1)^n \pi_n(-E)$.
\end{lemma}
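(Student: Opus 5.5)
The approach is to use uniqueness of monic orthogonal polynomials. For a positive weight $\omega$ on $[-1,1]$, the monic polynomial of degree $n$ that is orthogonal to all polynomials of degree $<n$ is unique. The reason is that the difference of two such polynomials has degree $<n$ and is orthogonal to itself, so its $\omega$-norm vanishes and hence it is zero. Given this, it suffices to show that $\rho_n(E) := (-1)^n \pi_n(-E)$ is also monic of degree $n$ and orthogonal with respect to $\omega$.

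\emph{Monicity.} The leading term of $\pi_n(-E)$ is $(-E)^n = (-1)^n E^n$. Multiplying by $(-1)^n$ shows that $\rho_n$ is monic of degree $n$.

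\emph{Orthogonality.} For $0\le m<n$, substitute $E\to -E$ in the integral and use that $\omega$ is even:
\[
\int_{-1}^{1} dE\,\omega(E)\,E^m \rho_n(E) = (-1)^{n}\int_{-1}^{1} dE\,\omega(-E)\,(-E)^m \pi_n(E)\,(-1)^{m}\cdot(-1)^{m}.
\]
Written more simply, this equals $(-1)^{n+m}\int_{-1}^{1} dE\,\omega(E)\,E^m\pi_n(E)=0$. The support $[-1,1]$ is symmetric, so the substitution does not move the integration domain. Uniqueness then gives $\rho_n=\pi_n$, which is the claim.

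An alternative route is induction on the three-term recursion \eqref{eq:rec pi}. Evenness makes all odd moments vanish, so $\overline{a}_n=0$, and $\pi_{n+1}(E)=E\pi_n(E)-\overline{b}_n^2\pi_{n-1}(E)$ preserves the parity $(-1)^n$ starting from $\pi_0=1$ and $\pi_1=E$. This route requires first proving $\overline{a}_n=0$, which itself relies on the parity, so I would use the uniqueness argument instead.

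The only point that needs care is the uniqueness step. It requires $\omega$ to be positive on a set of positive measure, so that the norm $\int\omega\,q^2$ is definite. This holds here because $\omega(E)=\sqrt{1-E^2}\,h(E)$ with $h>0$ on $[-1,1]$.
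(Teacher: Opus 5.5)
Your proof is correct and is essentially the paper's argument. The paper notes that $\pi_n(-E)$ is orthogonal of degree $n$, hence equals $c\,\pi_n(E)$, and fixes $c=(-1)^n$ from the leading coefficient; you normalize first and then invoke uniqueness of the monic orthogonal polynomial, which is the same fact. (A minor aside: the recursion route you set aside is not actually circular, because $\overline{a}_n=\int E\,\pi_n^2\,\omega\big/\!\int\pi_n^2\,\omega$ vanishes as soon as $\pi_n$ has definite parity, so the parity and $\overline{a}_n=0$ can be proved together by induction.)
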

\begin{proof}
    Since $\omega$ is even, $\pi_n(-E)$ is also an orthogonal polynomial, and since it has the same degree as $\pi_n(E)$, there exists some constant $c$ such that $\pi_n(-E) = c\, \pi_n(E)$. Now $\pi_n(-E) = (-1)^n E^n + \cdots$ and therefore $c = (-1)^n$.
\end{proof}

Let us now rewrite the determinant in the Christoffel formula as follows:
\begin{equation}
    \det\begin{pmatrix}
            \pi_n(p_1) & \pi_{n+1}(p_1) & \cdots & \pi_{n+2k}(p_1) \\
            \vdots & \vdots & \ddots & \vdots \\
            \pi_{n}(-p_k) & \pi_{n+1}(-p_k) & \cdots & \pi_{n+2k}(-p_k) \\
            \pi_n(E) & \pi_{n+1}(E) & \cdots & \pi_{n+2k}(E) 
            \end{pmatrix} =
    \det \begin{pmatrix}
        R_n^+(p_1) \\
        R_n^-(p_1) \\
        \vdots \\
        R_n^+(p_k) \\
        R_n^-(p_k) \\
        R_n(E) 
    \end{pmatrix},
\end{equation}
where each row is given by
\begin{align}
    R_{n}^\pm (E) &= \begin{pmatrix}
        \pi_n(\pm p_i) & \pi_{n+1}(\pm p_i) & \cdots & \pi_{n+k}(\pm p_i)
    \end{pmatrix} \\
    R_n(E) &= \begin{pmatrix}
        \pi_n(E) & \pi_{n+1}(E) & \cdots & \pi_{n+k}(E)
    \end{pmatrix}
\end{align}
We then perform a sequence of row operations that send $R_n^{\pm}(p_i) \to R_n^+(p_i) \pm R_n^-(p_i)$. Using lemma \ref{lm:PolSymm} and a sequence of column operations it is easy to verify that the determinant splits up into an even and an odd block, with the even block being the one shown in eq.\ \eqref{eq:CDPolyEvenApp}. Performing similar operations on the denominator of eq.\ \eqref{eq:CDPolyEvenApp}, we see that the odd blocks and the excess numerical pre-factors cancel out and we are only left with the determinants of the even blocks.

\subsection*{Proof of corollary \ref{cor:Cor2}}
Let $\bar{\mathsf{b}}_n$ and $\bar{b}_n$ be the deformed and undeformed Lanczos coefficients for an even Christoffel deformation. Our goal is to compute the ratio $(\bar{\mathsf{b}}_n / \bar{b}_n)^2$. For notational convenience, let us define the deformation factor $\Delta(E) = \prod_{i=1}^k (E^2 - p_i^2)$. Let us rewrite eq.\ \eqref{eq:CDPolyEvenApp} as
\begin{equation} \label{eq:DefPolyLC}
    \Delta(E) \widetilde{\pi}_n(E) = \pi_{n+2k}(E) + \sum_{j=0}^{k-1} c^{(n)}_j \pi_{n+2j}(E),
\end{equation}
where each coefficient $c_j^{(n)}$ can be expressed in terms of the appropriate sub-determinant in eq.\ \eqref{eq:CDPolyEvenApp}. Now we multiply both sides by $E$:
\begin{equation} \label{eq:DefPolyLC2}
    E\Delta(E) \widetilde{\pi}_n(E) = E\,\pi_{n+2k}(E) + \sum_{j=0}^{k-1} c^{(n)}_j \,E\,\pi_{n+2j}(E).
\end{equation}
Using the recursion relation \eqref{eq:DefLanczos} for $\widetilde{\pi}_n$ on the left hand side of equation \eqref{eq:DefPolyLC2} and then using equation \eqref{eq:DefPolyLC} again, we get
\begin{align}
    E \Delta(E) \widetilde{\pi}_n(E) = \pi_{n+1+2k}(E) + \sum_{j=0}^{k-1} c^{(n+1)}_j \pi_{n+1+2j}(E) + \bar{\mathsf{b}}_n^2 \left(\pi_{n-1 + 2k}(E)  + \sum_{j=0}^{k-1} c_j^{(n-1)} \pi_{n-1+2j}(E) \right). \label{eq:R1}
\end{align}
On the other hand, using the recursion relations \eqref{eq:UndefLanczos} on the right hand side of equation \eqref{eq:DefPolyLC2}, we obtain
\begin{equation}
    E \Delta(E) \widetilde{\pi}_n(E) = \pi_{n+1+2k}(E) + \bar{b}_{n+2k}^2 \pi_{n-1+2k}(E) + \sum_{j=0}^{k-1} c_j^{(n)}\left(\pi_{n+1+2j}(E) + \bar{b}_{n+2j}^2 \pi_{n-1+2j}(E)\right) .\label{eq:R2}
\end{equation}
Comparing the coefficients of $\pi_{n-1}$ in eqs.\ \eqref{eq:R1} and \eqref{eq:R2} we find the following relation between $\bar{\mathsf{b}}_n$ and $\bar{b}_n$: $(\bar{\mathsf{b}_n} / \bar{b}_n)^2 = c^{(n)}_0 / c^{(n-1)}_0$. Expressing the coefficients $c_0^{(n)}$ and $c_0^{(n-1)}$ in terms of the sub-determinants in eq.\ \eqref{eq:CDPolyEvenApp} leads us to the desired result.

\subsection*{Proof of theorem \ref{thm:RationalCD}}
This proof is a modified version of the proof given in \cite{ismail2005classical}. Consider a rational Christoffel deformation
\begin{equation}
    \widetilde{\omega}(E) = \mathcal{N} \frac{\prod_{i=1}^k (p_i - E)}{\prod_{j=1}^l (q_j - E)}\, \omega(E),
\end{equation}
where all the points $p_i$ and $q_j$ are distinct and lie outside the support of $\omega$. We wish to prove eq.\ \eqref{eq:RationalCDPoly}. The strategy is similar to the proof of theorem \ref{thm:Christoffel}. First note that $\widetilde{\pi}_n(E)$ as defined in eq.\ \eqref{eq:RationalCDPoly} is indeed a degree-$n$ polynomial since the determinant on the right hand side has zeros at each $p_i$. To show orthogonality, we need to prove that for every $m < n$
\begin{equation} \label{eq:ortho}
    \int_{-1}^{1} dE\, \widetilde{\omega}(E) E^m \widetilde{\pi}_n(E) = 0.
\end{equation}
For notational convenience, we define the following: 
\begin{align}
    A(E) &= \prod_{i=1}^k (p_i - E) \\
    B(E) &= \prod_{j=1}^l (q_j - E).
\end{align}
Now let us rewrite the deformed orthogonal polynomials as a linear combination of the undeformed ones:
\begin{equation}
    \widetilde{\pi}_n(E) = \frac{1}{A(E)} \sum_{j=-l}^{k} c_j \pi_{n+j}(E) \equiv \frac{S(E)}{A(E)},
\end{equation}
where $S = \sum_{j=-l}^k c_j \pi_{n+j}$ and the coefficients $c_j$ are determined by eq.\ \eqref{eq:RationalCDPoly}. We then have
\begin{align}
    \int_{-1}^{1} dE\, \widetilde{\omega}(E) E^m \widetilde{\pi}_n(E) = \mathcal{N} \int_{-1}^{1}dE\, \omega(E) \frac{E^m}{B(E)} S(E).
\end{align}
Performing polynomial division on the ratio $E^m / B(E)$, we can find two new polynomials $Q(E)$ and $R(E)$ such that
\begin{equation}
    E^m = B(E) Q(E) + R(E),
\end{equation}
where $\deg Q = m - l$ and $\deg R \leq l-1$.\footnote{If $m < \deg B = l$, then $Q=0$ and $R = E^m$.} Note that
\begin{equation}
    \int_{-1}^{1} dE\, \omega(E) Q(E) S(E) = 0,
\end{equation}
since $\deg Q = m - l < n-l$. Therefore, the only potential non-zero contribution comes from the remainder term. We have
\begin{equation} \label{eq:O1}
    \int_{-1}^{1} dE\, \widetilde{\omega}(E) E^m \widetilde{\pi}_n(E) = \mathcal{N} \int_{-1}^{1} dE\, \omega(E) \frac{R(E)}{B(E)} S(E).
\end{equation}
Since $\deg R \leq \deg B - 1$, the ratio $R/B$ decays to zero as $|E| \to \infty$ and has simple poles at the points $\{q_j\}_{j=1}^l$. Therefore, we can always find numbers $\{ \mu_j\}_{j=1}^l$ such that
\begin{equation}
    \frac{R(E)}{B(E)} = \sum_{j=1}^l \frac{\mu_j}{q_j - E}.
\end{equation}
Substituting the above in to eq.\ \eqref{eq:O1} we get
\begin{align}
    \begin{split}
        \int_{-1}^{1} dE\, \widetilde{\omega}(E) E^m \widetilde{\pi}_n(E) &= \mathcal{N} \sum_{i=1}^l \int_{-1}^{1} dE\, \omega(E) \frac{\mu_i}{q_i - E} S(E) \\
        &= \mathcal{N} \sum_{i=1}^l \mu_i \sum_{j=-l}^k c_j \int_{-1}^{1} dE\, \omega(E) \frac{\pi_{n+j}(E)}{q_i - E} \\
        &= \mathcal{N} \sum_{i=1}^l \mu_i \sum_{j=-l}^k c_j C[\pi_{n+j} \omega](q_i).
    \end{split}
\end{align}
Now, since $c_j$ is precisely the minor corresponding to $\pi_{n+j}(E)$ in the numerator of eq.\ \eqref{eq:RationalCDPoly}, for each $q_i$ the combination
\begin{equation}
    \sum_{j=-l}^k c_j C[\pi_{n+j} \omega](q_i) =0.
\end{equation}
This proves that $\widetilde{\pi}_n(E)$ is indeed orthogonal with respect to the deformed measure $\widetilde{\omega}$.

Finally, we would like to show that the denominator in eq.\ \eqref{eq:RationalCDPoly} is non-vanishing. To that end, consider the determinant
\begin{equation}
    D(E) = \det
    \begin{pmatrix}
        C[\pi_{n-l}\omega](q_1) & C[\pi_{n-l+1}\omega](q_1) & \dots &C[\pi_{n+k-1}\omega](q_1) \\
        \vdots & \vdots & \ddots & \vdots \\
        C[\pi_{n-l}\omega](q_l) & C[\pi_{n-l+1}\omega](q_l) & \dots &C[\pi_{n+k-1}\omega](q_l) \\
        \pi_{n-l}(E) & \pi_{n-l+1}(E) & \cdots & \pi_{n+k-1}(E) \\
        \pi_{n-l}(p_2) & \pi_{n-l+1}(p_2) & \cdots & \pi_{n+k-1}(p_2) \\
        \vdots & \vdots & \ddots & \vdots \\
        \pi_{n-l}(p_k) & \pi_{n-l+1}(p_k) & \cdots & \pi_{n+k-1}(p_k) 
    \end{pmatrix}
    \equiv \sum_{j=-l}^{k-1} d_j \pi_{n+j}(E).
\end{equation}
Assume to the contrary that the denominator vanishes. This implies that $D(E)$ has zeros at each $p_i$ and therefore there exists a degree-$(n-1)$ polynomial $q_{n-1}(E)$ such that $D(E) = A(E) q_{n-1}(E)$. Now consider
\begin{equation} \label{eq:O2}
    \int_{-1}^{1} dE\, \widetilde{\omega}(E) q^2_{n-1}(E) = \mathcal{N} \int_{-1}^{1} dE\, \omega(E) D(E) \frac{q_{n-1}(E)}{B(E)}.
\end{equation}
Now, $\deg B = l \leq n-1$. Therefore, we can find polynomials $h(E)$ and $r(E)$ such that
\begin{equation}
    q_{n-1}(E) = B(E) h(E) + r(E).
\end{equation}
Plugging this into eq.\ \eqref{eq:O2}, we observe that the term proportional to $h$ vanishes since $\deg h = n-l -1$ and $D(E) = \sum_{j=-l}^{k-1} d_j \pi_{n+j}(E)$. To process the term proportional to $r$, note that since $\deg r \leq l-1$, we can find numbers $\alpha_i$ such that
\begin{equation}
    \frac{r(E)}{B(E)} = \sum_{i=1}^l \frac{\alpha_i}{q_i - E}.
\end{equation}
Therefore, we have
\begin{align*}
    \int_{-1}^{1} dE\, \widetilde{\omega}(E) q^2_{n-1}(E) &= \mathcal{N} \sum_{i=1}^l \alpha_i \sum_{j=-l}^{k-1} d_j \int_{-1}^{1} dE\, \frac{\pi_{n+j}(E) \omega(E)}{q_i - E} \\
    &= \mathcal{N} \sum_{i=1}^l \alpha_i \sum_{j=-l}^{k-1} d_j C[\pi_{n+j} \omega](q_i).
\end{align*}
Since, the coefficients $d_j$ are the minors corresponding to $\pi_{n+j}$ in $D(E)$ the combination $\sum_{j=-l}^{k-1} d_j C[\pi_{n+j}\omega](q_i)$ vanishes for each $q_i$, and therefore
\begin{equation}
    \int_{-1}^{1} dE\, \widetilde{\omega}(E) q_{n-1}^2(E) = 0,
\end{equation}
a contradiction. This completes the proof.

\subsection*{Proof of corollary \ref{cor:Cor3}}
The proof is almost identical to that of corollary \ref{cor:Cor1}. The idea is to use the relations
\begin{align}
    \pi_n(-E) &= (-1)^n \pi_n(E) \\
    C[\pi_n \omega](-E) &= (-1)^n C[\pi_n \omega](E)
\end{align}
to cast the determinants in eq.\ \eqref{eq:RationalCDPoly} into block form and then observe that the odd block drops out. We leave it to the reader to fill in the details.

\section{Scaling limit of rational Christoffel deformations of DSSYK} \label{sec:RationalCDScaling}
We consider a rational Christoffel deformation that introduces a single pair of zeros at $\pm p$ and a single pair of poles at $\pm q$:
\begin{equation}
    \widetilde{\omega}(E) = \mathcal{N} \frac{p^2 - E^2}{q^2 - E^2}\, \omega(E).
\end{equation}
From Appendix \ref{app:RationalCD} (see eq.\ \eqref{eq:DefPolRationalEven}), we know that the deformed monic orthogonal polynomials can be written as
\begin{equation}
    (E^2 - p^2) \widetilde{\pi}_n(E) = \pi_{n+2}(E) - \eta_n \pi_n(E) + \delta_n \pi_{n-2}(E)
\end{equation}
where we have defined
\begin{align}
    \eta_n = \frac{\tau_{n-2, n+2}(p,q)}{\tau_{n-2, n}(p,q)}, \qquad
    \delta_n = \frac{\tau_{n, n+2}(p,q)}{\tau_{n-2, n}(p,q)}.
\end{align}
and $\tau_{a,b}(p,q) = C[\pi_a \omega](q) \pi_b(p) - (a \leftrightarrow b)$. Let $\bar{\mathsf{b}}_n$ and $\bar{b}_n$ be the deformed and undeformed Lanczos coefficients. The relation between the two is given in eq.\ \eqref{eq:RationalCDb} which, in this notation, becomes
\begin{equation} \label{eq:Rationalbn}
    \bar{\mathsf{b}}_n^2 = \frac{\eta_n}{\eta_{n-1}} \bar{b}_n^2 + \frac{\delta_{n+1} - \delta_n}{\eta_{n-1}},
\end{equation}
To compute the scaling limit of $\bar{\mathsf{b}}_n$, it is useful to introduce the ratios
\begin{align}
    r_n(p) = \frac{\pi_{n+1}(p)}{\pi_n(p)}, \qquad s_n(q) = \frac{C[\pi_{n+1} \omega](q)}{C[\pi_{n} \omega](q)}.
\end{align}
Then, the quantities $\eta_n$ and $\delta_n$ can be completely expressed in terms of the ratios $r_n$ and $s_n$. Indeed, note that the determinant $\tau_{a,b}$ can be written as
\begin{equation}
    \tau_{a,b}(p,q) = C[\pi_a \omega](q) \pi_a(p) \left( \prod_{j=a}^{b-1} r_j(p) - \prod_{j=a}^{b-1} s_j(q) \right).
\end{equation}
Plugging this into the definitions of $\eta_n$ and $\delta_n$, we get
\begin{align}
    \eta_n &= \frac{\prod_{j=-2}^1 r_{n+j}(p) - \prod_{j=-2}^{1} s_{n+j}(q)}{r_{n-2}(p) r_{n-1}(p) - s_{n-2}(q) s_{n-1}(q)}, \label{eq:etars} \\
    \delta_n &= r_{n-2}(p) r_{n-1}(p) s_{n-2}(q) s_{n-1}(q) \left( \frac{ r_n(p) r_{n+1}(p) - s_n(q) s_{n+1}(q)}{r_{n-2}(p) r_{n-1}(p) - s_{n-2}(q) r_{n-1}(q) } \right). \label{eq:deltars}
\end{align}
Therefore, in order to compute the triple scaling limits of $\eta_n$ and $\delta_n$ it suffices to compute the triple scaling limits of $r_n$ and $s_n$.

Now, recall from section \ref{sec:continuum} that $r_n(p)$ satisfies the recursion relation
\begin{equation}
    p = r_n(p) + \frac{\bar{b}_n^2}{r_{n-1}(p)}.
\end{equation}
To find the recursion relation satisfied by $s_n$, note that the Cauchy transforms satisfy the recursion relation (for $n \geq 1$)
\begin{equation}
    q C[\pi_n \omega](q) = C[\pi_{n+1} \omega](q) + \bar{b}_n^2 C[\pi_{n-1} \omega](q).
\end{equation}
Dividing both sides by $C[\pi_n \omega](q)$ we get the following recursion relation for $s_n$:
\begin{equation}
    q = s_n(q) + \frac{\bar{b}_n^2}{s_{n-1}(q)}.
\end{equation}
As before, we assume that $r_n(p)$ and $s_n(q)$ have well defined scaling limits $r(\ell, p)$ and $s(\ell, q)$ with $p$ and $q$ fixed. In this limit, the recursion relations become
\begin{align}
    p &= r(\ell, p) + \frac{1}{4}( 1 - \lambda^2 e^{-\ell}) \frac{1}{r(\ell - \lambda, p)}, \label{eq:rRec} \\
    q &= s(\ell, q) + \frac{1}{4} (1 - \lambda^2 e^{-\ell}) \frac{1}{s(\ell - \lambda, q)}. \label{eq:sRec}
\end{align}
As before, the idea now is to expand $r$ and $s$ in powers of $\lambda$ and solve for the coefficients order by order using eqs.\ \eqref{eq:rRec} and \eqref{eq:sRec}. Since the recursion relations are non-singular at $\lambda = 0$, we may postulate the following expansions:
\begin{align}
    r(\ell, p) &= \sum_{n=0}^{\infty} r_{(n)}(\ell, p) \lambda^n, \\
    r(\ell - \lambda, p) &= \sum_{n, m = 0}^{\infty} \frac{(-1)^n}{n!} \lambda^{n+m} \partial_{\ell}^n r_{(m)}(\ell, p), \\
    s(\ell, p) &= \sum_{n=0}^{\infty} s_{(n)}(\ell, p) \lambda^n, \\
    s(\ell - \lambda, p) &= \sum_{n, m = 0}^{\infty} \frac{(-1)^n}{n!} \lambda^{n+m} \partial_{\ell}^n s_{(m)}(\ell, p).
\end{align}
Although formally the eqs.\ \eqref{eq:rRec} and \eqref{eq:sRec} look the same, they have distinct perturbative solutions. This is because $r$ and $s$ have different asymptotic behaviors for large values of $p$ and $q$. It is clear from definition that $r(\ell, p) \sim p$ as $|p| \to \infty$. However, $s(\ell, q) \sim 1/4q$ as $|q| \to \infty$ (and $\lambda \to 0$). This follows from the large-$|q|$ expansion of the Cauchy transform and the orthogonality of $\pi_n$. Indeed, we have
\begin{align}
\begin{split}
    \int_{-1}^{1} dE\, \frac{\pi_n(E) \omega(E)}{q - E} &= \sum_{k=0}^{\infty} \frac{1}{q^{k+1}} \int_{-1}^{1} dE\,  E^k \pi_n(E) \omega(E) \\
    &= \frac{\gamma_n^2}{q^{n+1}} + O(q^{-n-2}),
\end{split}
\end{align}
where in the second line we have used orthogonality of $\pi_n$ and $\gamma_n$ is the norm of $\pi_n$ with respect to $\omega$. Therefore, asymptotically we have
\begin{equation}
    s_n(q) \underset{|q| \to \infty}{\sim} \left(\frac{\gamma_{n+1}}{\gamma_n}\right)^2 \frac{1}{q}.
\end{equation}
Now, for monic orthogonal polynomials $\gamma_{n+1} = \bar{b}_{n+1} \gamma_n$. Using the fact that $\bar{b}^2_n \to 1/4$ as $n\to \infty$ (and $\lambda \to 0$), we find that the leading order contribution to $s(\ell, q)$ at large $q$ and small $\lambda$ is given by $s(\ell, q) \sim 1/4q$.

With this distinction in mind, we can now compute the coefficients $r_{(n)}$ and $s_{(n)}$. As we saw in section \ref{sec:CDTS}, the asymptotic condition $r(\ell, p) \sim p$ implies that the solution for $r$ takes the form
\begin{equation} \label{eq:rO2}
    r(\ell, p) = \frac{\phi(p)}{2} \left( 1 + \lambda^2 \frac{1}{2(\phi^2(p) - 1)} e^{-\ell} \right) + O(\lambda^3)
\end{equation}
up to $O(\lambda^2)$. We can similarly find the solution for $s$. At $O(\lambda^0)$ we have
\begin{equation}
    s_{(0)}^2 - q s_{(0)} + \frac{1}{4} = 0.
\end{equation}
The physical solution with the correct asymptotic behavior is the negative branch and we have
\begin{equation}
    s_{(0)}(\ell, q) = \frac{q - \sqrt{q^2 - 1}}{2} = \frac{1}{2\phi(q)}.
\end{equation}
Repeating the arguments of section \ref{sec:CDTS} we see that $s_{(1)} = 0$. Then, using the above form of $s_{(0)}$, we get the following solution for $s_{(2)}$:
\begin{equation}
    s_{(2)}(\ell, q) = - \frac{\phi(q)}{2 (\phi(q)^2 - 1)} e^{-\ell}.
\end{equation}
Note that it has the same form as $r_{(2)}$ except for an overall minus sign. The perturbative solution for $s$ up to $O(\lambda^2)$ is thus given by
\begin{equation} \label{eq:sO2}
    s(\ell, q) = \frac{1}{2\phi(q)} \left(1 - \lambda^2 \frac{\phi^2(q)}{ \phi^2(q) - 1} \right) + O(\lambda^3) 
\end{equation}
Just as the polynomial case, the $O(\lambda^3)$ corrections to $r$ and $s$ do not contribute to the $O(\lambda^3)$ correction to the Lanczos ratio. To see this, let us assume that $\eta_n$ and $\delta_n$ have well-defined scaling limits $\eta(\ell)$ and $\delta(\ell)$. Since $r(\ell)$ and $s(\ell)$ have well-defined expansions around $\lambda = 0$, we may expand $\eta$ and $\delta$ as
\begin{align}
    \eta(\ell) &= \sum_{n=0}^{\infty} \lambda^n \eta_{(n)}(\ell) \label{eq:etaExp} \\
    \delta(\ell) &= \sum_{n=0}^{\infty} \lambda^n \delta_{(n)}(\ell). \label{eq:deltaExp}
\end{align}
Note that $\eta_{(0)}$ and $\delta_{(0)}$ are independent of $\ell$ and $\eta_{(1)} = \delta_{(1)} = 0$. This follows from eqs.\ \eqref{eq:etars} and \eqref{eq:deltars} and the fact that $r(\ell)$ and $s(\ell)$ have the above properties. Using the expansions \eqref{eq:etaExp} and \eqref{eq:deltaExp}, we see that in the triple scaling limit
\begin{align}
    \frac{\eta_n}{\eta_{n-1}} &\to \frac{\eta(\ell)}{\eta(\ell - \lambda)} = 1 - \lambda^3 \frac{\partial_\ell \eta_{(2)}(\ell)}{\eta_{(0)}(\ell)} + O(\lambda^4), \\
    \frac{\delta_{n+1} - \delta_n}{\eta_{n-1}} &\to \frac{\delta(\ell + \lambda) - \delta(\ell)}{\eta(\ell - \lambda)} = \lambda^3 \frac{\partial_{\ell}\delta_{(2)}(\ell)}{\eta_{(0)}(\ell)} + O(\lambda^4).
\end{align}
Plugging the above into eq.\ \eqref{eq:Rationalbn} and using the fact that $\bar{b}_n^2 = 1/4 + O(\lambda^2)$, we get
\begin{equation} \label{eq:LRCDR}
    \left( \frac{\bar{\mathsf{b}}_n^2}{\bar{b}_n^2} \right) = 1 + \lambda^3 \left( \frac{\partial_{\ell} \eta_{(2)}(\ell) + 4 \partial_{\ell} \delta_{(2)}(\ell)}{\eta_{(0)}} \right) + O(\lambda^4).
\end{equation}
Therefore it suffices to compute $\eta$ and $\delta$ (and therefore $r$ and $s$) up to $O(\lambda^2)$.

From eq.\ \eqref{eq:etars} it follows that in the triple scaling limit we have
\begin{equation}
    \eta(\ell) = \frac{\prod_{j=-2}^{1} r(\ell + j \lambda, p) - \prod_{j=-2}^{1} s(\ell + j \lambda, q) }{r(\ell - 2\lambda,p) r(\ell -\lambda, p) - s(\ell - 2\lambda, q) s(\ell - \lambda, q)}.
\end{equation}
Now, note that we can replace each factor $r(\ell + j\lambda)$ (and likewise $s(\ell + j \lambda)$) by $r(\ell)$ (and similarly $s(\ell)$) since the corrections coming from the shifts $\ell \to \ell + j \lambda$ appear at $O(\lambda^3)$:
\begin{align}
\begin{split}
    r(\ell + j\lambda, p) &=r(\ell) + j \lambda \partial_\ell r(\ell) + \cdots \\
    &= r(\ell) + j\lambda^3 \partial_{\ell} r_{(2)}(\ell) + \cdots,
\end{split}
\end{align}
where in the second line we have used the fact that $r_{(0)}$ is independent of $\ell$. Therefore the above expression for $\eta(\ell)$ can be simplified to
\begin{align} \label{eq:etaO2}
    \eta(\ell) &= \frac{r^4(\ell, p) - s^4(\ell, q)}{r^2(\ell, p) - s^2(\ell, q)} + O(\lambda^3) \\
    &= r^2(\ell, p) + s^2(\ell, q) + O(\lambda^3)
\end{align}
Similar manipulations on $\delta(\ell)$ lead us to the following:
\begin{equation} \label{eq:deltaO2}
    \delta(\ell) = r^2(\ell, p) s^2(\ell, q) + O(\lambda^3).
\end{equation}
Now, using eqs.\ \eqref{eq:rO2}, \eqref{eq:sO2}, \eqref{eq:etaO2}, \eqref{eq:deltaO2}, \eqref{eq:LRCDR} and some elementary algebra, one can show that the Lanczos ratio takes the form
\begin{equation}
    \left( \frac{\bar{\mathsf{b}}_n}{\bar{b}_n} \right)^2 = 1 - 2\lambda^3 e^{-\ell} \left( \frac{1}{\phi^2(p) - 1} - \frac{1}{\phi^2(q) - 1} \right) + O(\lambda^4).
\end{equation}
This precisely coincides with eq.\ \eqref{eq:conf} with $P=1$ and $Q=1$.

We can generalize the above to arbitrary but finite values of $P$ and $Q$ using induction. Consider a rational Christoffel deformation with $P$ pairs of zeros and $Q$ pairs of poles:
\begin{equation}
    \widetilde{\omega}^{(P,Q)}(E) = \mathcal{N} \frac{\prod_{i=1}^P(p_i^2 - E^2)}{\prod_{j=1}^Q (q_j^2 - E^2)}\, \omega(E).
\end{equation}
Let $\pi_n^{(P,Q)}(E)$ be the corresponding orthogonal polynomials and $\bar{\mathsf{b}}_n^{(P,Q)}$ be the associated Lanczos coefficients. Let us assume that $\bar{\mathsf{b}}_n^{(P,Q)}$ is given by
\begin{equation}\label{eq:CDpoles}
    \left( \frac{\bar{\mathsf{b}}_n^{(P,Q)}}{\bar{b}_n} \right)^2 = 1 - 2\lambda^3 e^{-\ell} \left( \sum_{i=1}^P \frac{1}{\phi^2(p_i) - 1} - \sum_{j=1}^Q \frac{1}{\phi^2(q_j) - 1} \right) + O(\lambda^4).
\end{equation}
We can derive eq.\ \eqref{eq:conf} by performing induction on $P$ and $Q$ separately. Now, consider adding in a new pair of zeros at a pair of new points $\pm p$ via the deformation $$\widetilde{\omega}^{(P,Q)}(E) \to \widetilde{\omega}^{(P+1,Q)}(E) \propto (p^2 - E^2) \widetilde{\omega}^{(P,Q)}(E).$$ Using arguments similar to the one given in section \ref{sec:CDTS}, one can show that
\begin{equation}
    \left( \frac{\bar{\mathsf{b}}^{(P+1,Q)}_n}{\bar{\mathsf{b}}^{(P,Q)}_n} \right)^2 = 1 - 2\lambda^3 e^{-\ell} \left( \frac{1}{\phi^2(p) - 1} \right) + O(\lambda^4),
\end{equation}
which completes the induction on $P$. For the induction step on $Q$, consider a deformation at introduces a pair of new poles at the points $\pm q$:
$$\widetilde{\omega}^{(P,Q)}(E) \to \widetilde{\omega}^{(P,Q+1)}(E) \propto \frac{1}{q^2 - E^2} \widetilde{\omega}^{(P,Q)}(E).$$
Then using the Christoffel deformation formula for poles and a similar set of manipulations, one can show that
\begin{equation}
    \left( \frac{\bar{\mathsf{b}}^{(P,Q+1)}_n}{\bar{\mathsf{b}}^{(P,Q)}_n} \right)^2 = 1 + 2\lambda^3 e^{-\ell} \left( \frac{1}{\phi^2(q) - 1} \right) + O(\lambda^4),
\end{equation}
which completes the induction on $Q$.

The above analysis shows that for any rational Christoffel deformation of the DSSYK density of states, corrections to the Liouville Hamiltonian appear only at $O(\lambda^3)$, and that the corrections are in agreement with eq.\ \eqref{eq:conf}. This serves as a consistency check between the Christoffel deformation method and the RHP method for computing deformations of Lanczos coefficients.

\section{$r(\ell,p)$ in DSSYK from q-Hermite polynomials}\label{sec:Hermite}
In this appendix, we wish to give a separate derivation of the formulas for $r(\ell, p)$ in the DSSYK model obtained in the main text using the recursion method. Here, we will directly use the fact that the monic polynomials in the DSSYK model are the continuous $\bq$-Hermite polynomials \cite{Berkooz:2024lgq}:
\beqn\label{eq:QHP}
\pi_{n}(p)&=& \frac{H_n(\cosh(x)|\bq)}{2^n}\nonumber\\
&=&\frac{1}{2^{n-1}}\sum_{k=0}^{[n/2]} \binom{n}{k}_\bq \cosh[(n-2k)x]\nonumber\\
&=&\frac{1}{2^{n}}\sum_{k=0}^{n} \binom{n}{k}_\bq e^{x(n-2k)} ,
\eeqn
where we have defined
\beq 
p= \cosh(x),\;\;\;e^{x} = \phi(p),
\eeq
and
\beq
\binom{n}{k}_\bq=\frac{(\bq;\bq)_n}{(\bq;\bq)_k (\bq;\bq)_{n-k}},
\eeq 
with $(a;\bq)_n$ being the $\bq$-Pochhammer symbol
\beq
(a;\bq)_n = \prod_{k=0}^{n-1}(1-a\,\bq^{k}).
\eeq 
 The second line in equation \eqref{eq:QHP} makes it clear that these are indeed polynomials in $p = \cosh(x)$. We wish to take the continuum limit (i.e., the triple scaling limit) of these monic polynomials, i.e.,  $\lm \to 0$, $n = \frac{1}{\lm}\log(1/\lm^2) + \frac{\ell}{\lm},$ with $\ell$ and $p=\cosh(x)$ fixed. We first write the polynomials as:
\beq
\pi_n(p) = \left(\frac{\phi(p)}{2}\right)^n g_n(x),\;\;g_n(x) = \sum_{k=0}^n \frac{(\bq;\bq)_n}{(\bq;\bq)_k (\bq;\bq)_{n-k}}\,e^{-2kx}.
\eeq
We now use the Euler-McLaurin formula to approximate the sum over $k$ in $g_n$ by an integral:
\beq
g_n(x) \sim \frac{1}{\lm}\int_0^{\log(1/\lm^2)+\ell} dy\,e^{\frac{1}{\lm}S(x,y,\ell)}+\cdots,
\eeq 
where we have defined $y = k\lm$ and the end-point contributions are exponentially suppressed in $\frac{1}{\lm}$. To write out the action, we need some way of writing the q-Pochhammer symbols that is amenable to the continuum limit. We will again use the Euler-Mclaurin formula. Consider, for instance:
\beqn \label{eq:Sk}
\log\,(\bq;\bq)_k &=& \sum_{j=1}^{k}\log(1-\bq^j)\nonumber\\
&\sim & \frac{1}{\lm}\int_{\lm}^{y}ds\log(1-e^{-s}) + \frac{1}{2}\log(1-e^{-\lm})+\frac{1}{2}\log(1-e^{-y}) + \cdots
\nonumber\\
&=& \frac{1}{\lm}\text{Li}_2(e^{-y}) + \frac{1}{2}\log(1-e^{-y}) + \cdots,
 \eeqn
where in the last line we have dropped terms which give rise to an overall $n$-independent coefficient in $g_n$, and so cancel out in the ratio $\frac{g_{n+1}}{g_n}$. The second term in the last line above comes from an endpoint contribution in the Euler-Mclaurin formula, and there are of course, higher order corrections of this type as well. Fortunately, it turns out that such endpoint contributions give rise to $O(\lm^3)$ corrections to the ratio $r_n$  -- this happens because in $(\bq;\bq)_k$, these endpoint terms do not directly depend on $\ell$, but only depend on $\ell$ indirectly through the saddle point value $y_*$, whose dependence on $\ell$ turns out to enter at $O(\lm^2)$, as we will see below. Consequently the contribution of the endpoint terms to the ratio enters at $O(\lm^3).$ Evaluating the other Pochhammer symbols similarly, we find:
\beq 
\log \frac{(\bq;\bq)_n}{(\bq;\bq)_{n-k}} \sim \frac{1}{\lm}\left(\text{Li}_2(\lm^2 e^{-\ell}) -\text{Li}_2(\lm^2 e^{y-\ell})\right) + \frac{1}{2}\log(1-\lm^2 e^{-\ell}) - \frac{1}{2}\log(1-\lm^2 e^{y-\ell})+\cdots.
\eeq 
Here, note that the endpoint contributions enter at $O(\lm^2)$ and so their contribution to the ratio enters at $O(\lm^3)$; so we can once again drop the endpoint contributions if we only care about the ratio $r_n$. Thus, the action becomes:
\beq 
S(x,y,\ell) = \text{Li}_2(\lm^2 e^{-\ell}) - \text{Li}_2(\lm^2 e^{y-\ell})-\text{Li}_2(e^{-y}) - 2xy+ \cdots,
\eeq 
up to terms that contribute at $O(\lm^3)$ in the ratio. We will now perform the integral over $y$ using the saddle point method. The equation of motion for $y$ takes the form:\footnote{Note that the endpoint contribution to the action in equation \eqref{eq:Sk} enters the equation of motion at $O(\lm)$. However, it changes the solution by an $O(\lm)$ term that is $\ell$-independent, and so modifies the on-shell action at $O(\lm^2)$ in an $\ell$-independent way. So this contribution also enters the ratio at $O(\lm^3)$.}
\beq 
(e^{2x}-1)e^y - e^{2x}+O(\lm)=0.
\eeq 
In the $\lm \to 0$ limit, the solution is given by
\beq 
e^{y_*} = \frac{e^{2x}}{(e^{2x}-1)}+ O(\lm).
\eeq 
Note that $y_*$ does not depend on $\ell$ at leading order in $\lm$; this dependence only enters at $O(\lm^2)$. Evaluating the on-shell action, we get 
\beq 
S_{\text{o.s}}(x,\ell) = s_0(x) -\frac{\lm^2 e^{-\ell}}{e^{2x}-1} + \cdots,
\eeq  
where $s_0$ does not depend on $\ell$. Computing the desired ratio, we get
\beq
r_n = \frac{\phi}{2}\frac{g_{n+1}}{g_n} 
\simeq   \frac{\phi}{2} e^{\partial_{\ell}S_{\text{o.s}}} = \frac{\phi}{2}\left(1+\lm^2 \frac{e^{-\ell}}{(\phi^2 -1)}+\cdots\right).
\eeq
This precisely agrees with the formulas we obtained from the recursion relation, see equations \eqref{eq:assumpt} and \eqref{eq:r_2}. The one-loop correction about the saddle point contributes at $O(\lm^3)$. 

\bibliographystyle{JHEP}
\bibliography{Reference_jt.bib}

@article{Berkooz:2024lgq,
    author = "Berkooz, Micha and Mamroud, Ohad",
    title = "{A cordial introduction to double scaled SYK}",
    eprint = "2407.09396",
    archivePrefix = "arXiv",
    primaryClass = "hep-th",
    doi = "10.1088/1361-6633/ada889",
    journal = "Rept. Prog. Phys.",
    volume = "88",
    number = "3",
    pages = "036001",
    year = "2025"
}

@article{Berkooz:2018jqr,
    author = "Berkooz, Micha and Isachenkov, Mikhail and Narovlansky, Vladimir and Torrents, Genis",
    title = "{Towards a full solution of the large N double-scaled SYK model}",
    eprint = "1811.02584",
    archivePrefix = "arXiv",
    primaryClass = "hep-th",
    doi = "10.1007/JHEP03(2019)079",
    journal = "JHEP",
    volume = "03",
    pages = "079",
    year = "2019"
}

@article{Berkooz:2018qkz,
    author = "Berkooz, Micha and Narayan, Prithvi and Simon, Joan",
    title = "{Chord diagrams, exact correlators in spin glasses and black hole bulk reconstruction}",
    eprint = "1806.04380",
    archivePrefix = "arXiv",
    primaryClass = "hep-th",
    doi = "10.1007/JHEP08(2018)192",
    journal = "JHEP",
    volume = "08",
    pages = "192",
    year = "2018"
}

@article{Lin:2022rbf,
    author = "Lin, Henry W.",
    title = "{The bulk Hilbert space of double scaled SYK}",
    eprint = "2208.07032",
    archivePrefix = "arXiv",
    primaryClass = "hep-th",
    doi = "10.1007/JHEP11(2022)060",
    journal = "JHEP",
    volume = "11",
    pages = "060",
    year = "2022"
}

@article{Rabinovici:2023yex,
    author = "Rabinovici, E. and S{\'a}nchez-Garrido, A. and Shir, R. and Sonner, J.",
    title = "{A bulk manifestation of Krylov complexity}",
    eprint = "2305.04355",
    archivePrefix = "arXiv",
    primaryClass = "hep-th",
    doi = "10.1007/JHEP08(2023)213",
    journal = "JHEP",
    volume = "08",
    pages = "213",
    year = "2023"
}

@article{Ambrosini:2024sre,
    author = "Ambrosini, Marco and Rabinovici, Eliezer and S{\'a}nchez-Garrido, Adri{\'a}n and Shir, Ruth and Sonner, Julian",
    title = "{Operator K-complexity in DSSYK: Krylov complexity equals bulk length}",
    eprint = "2412.15318",
    archivePrefix = "arXiv",
    primaryClass = "hep-th",
    reportNumber = "CERN-TH-2025-040",
    doi = "10.1007/JHEP08(2025)059",
    journal = "JHEP",
    volume = "08",
    pages = "059",
    year = "2025"
}

@article{Sachdev_1993,
   title={Gapless spin-fluid ground state in a random quantum Heisenberg magnet},
   volume={70},
   ISSN={0031-9007},
   url={http://dx.doi.org/10.1103/PhysRevLett.70.3339},
   DOI={10.1103/physrevlett.70.3339},
   number={21},
   journal={Physical Review Letters},
   publisher={American Physical Society (APS)},
   author={Sachdev, Subir and Ye, Jinwu},
   year={1993},
   month=May, pages={3339–3342} }

@misc{Kitaev2,
author = {Kitaev, Alexei},
title = {A simple model of holography 2},
journal = {Talk at KITP},
year= 2015,
 howpublished = {\url{https://online.kitp.ucsb.edu/online/entangled15/kitaev2/}},
}

@misc{Kitaev1,
  author = {Kitaev, Alexei},
  title = {A simple model of quantum holography 1},
  journal = {Talk at KITP},
  year = 2015,
  howpublished = {\url{https://online.kitp.ucsb.edu/online/entangled15/kitaev/}},
}

@article{Maldacena:2016hyu,
    author = "Maldacena, Juan and Stanford, Douglas",
    title = "{Remarks on the Sachdev-Ye-Kitaev model}",
    eprint = "1604.07818",
    archivePrefix = "arXiv",
    primaryClass = "hep-th",
    doi = "10.1103/PhysRevD.94.106002",
    journal = "Phys. Rev. D",
    volume = "94",
    number = "10",
    pages = "106002",
    year = "2016"
}

@article{Almheiri:2014cka,
    author = "Almheiri, Ahmed and Polchinski, Joseph",
    title = "{Models of AdS$_{2}$ backreaction and holography}",
    eprint = "1402.6334",
    archivePrefix = "arXiv",
    primaryClass = "hep-th",
    doi = "10.1007/JHEP11(2015)014",
    journal = "JHEP",
    volume = "11",
    pages = "014",
    year = "2015"
}

@article{Maldacena:2016upp,
    author = "Maldacena, Juan and Stanford, Douglas and Yang, Zhenbin",
    title = "{Conformal symmetry and its breaking in two dimensional Nearly Anti-de-Sitter space}",
    eprint = "1606.01857",
    archivePrefix = "arXiv",
    primaryClass = "hep-th",
    doi = "10.1093/ptep/ptw124",
    journal = "PTEP",
    volume = "2016",
    number = "12",
    pages = "12C104",
    year = "2016"
}

@article{Saad:2019lba,
    author = "Saad, Phil and Shenker, Stephen H. and Stanford, Douglas",
    title = "{JT gravity as a matrix integral}",
    eprint = "1903.11115",
    archivePrefix = "arXiv",
    primaryClass = "hep-th",
    month = "3",
    year = "2019"
}

@article{Kitaev:2017awl,
    author = "Kitaev, Alexei and Suh, S. Josephine",
    title = "{The soft mode in the Sachdev-Ye-Kitaev model and its gravity dual}",
    eprint = "1711.08467",
    archivePrefix = "arXiv",
    primaryClass = "hep-th",
    doi = "10.1007/JHEP05(2018)183",
    journal = "JHEP",
    volume = "05",
    pages = "183",
    year = "2018"
}

@article{Sarosi:2017ykf,
    author = "S{\'a}rosi, G{\'a}bor",
    title = "{AdS$_{2}$ holography and the SYK model}",
    eprint = "1711.08482",
    archivePrefix = "arXiv",
    primaryClass = "hep-th",
    doi = "10.22323/1.323.0001",
    journal = "PoS",
    volume = "Modave2017",
    pages = "001",
    year = "2018"
}

@article{Cotler:2016fpe,
    author = "Cotler, Jordan S. and Gur-Ari, Guy and Hanada, Masanori and Polchinski, Joseph and Saad, Phil and Shenker, Stephen H. and Stanford, Douglas and Streicher, Alexandre and Tezuka, Masaki",
    title = "{Black Holes and Random Matrices}",
    eprint = "1611.04650",
    archivePrefix = "arXiv",
    primaryClass = "hep-th",
    reportNumber = "SU-ITP-16-19, SU-ITP-16/19, YITP-16-124",
    doi = "10.1007/JHEP05(2017)118",
    journal = "JHEP",
    volume = "05",
    pages = "118",
    year = "2017",
    note = "[Erratum: JHEP 09, 002 (2018)]"
}

@article{Bagrets:2016cdf,
    author = "Bagrets, Dmitry and Altland, Alexander and Kamenev, Alex",
    editor = "Unno, Yoshinobu and Ohsugi, Takashi and Hou, Suen and Sadrozinski, Hartmut F. -W. and Lou, Xinchou and Zhu, Hongbo and Ouyang, Qun",
    title = "{Sachdev{\textendash}Ye{\textendash}Kitaev model as Liouville quantum mechanics}",
    eprint = "1607.00694",
    archivePrefix = "arXiv",
    primaryClass = "cond-mat.str-el",
    doi = "10.1016/j.nuclphysb.2016.08.002",
    journal = "Nucl. Phys. B",
    volume = "911",
    pages = "191--205",
    year = "2016"
}

@article{Harlow:2018tqv,
    author = "Harlow, Daniel and Jafferis, Daniel",
    title = "{The Factorization Problem in Jackiw-Teitelboim Gravity}",
    eprint = "1804.01081",
    archivePrefix = "arXiv",
    primaryClass = "hep-th",
    doi = "10.1007/JHEP02(2020)177",
    journal = "JHEP",
    volume = "02",
    pages = "177",
    year = "2020"
}

@article{Jafferis:2022wez,
    author = "Jafferis, Daniel Louis and Kolchmeyer, David K. and Mukhametzhanov, Baur and Sonner, Julian",
    title = "{Jackiw-Teitelboim gravity with matter, generalized eigenstate thermalization hypothesis, and random matrices}",
    eprint = "2209.02131",
    archivePrefix = "arXiv",
    primaryClass = "hep-th",
    doi = "10.1103/PhysRevD.108.066015",
    journal = "Phys. Rev. D",
    volume = "108",
    number = "6",
    pages = "066015",
    year = "2023"
}

@article{Okuyama:2023kdo,
    author = "Okuyama, Kazumi",
    title = "{Discrete analogue of the Weil-Petersson volume in double scaled SYK}",
    eprint = "2306.15981",
    archivePrefix = "arXiv",
    primaryClass = "hep-th",
    doi = "10.1007/JHEP09(2023)133",
    journal = "JHEP",
    volume = "09",
    pages = "133",
    year = "2023"
}

@article{Okuyama:2023aup,
    author = "Okuyama, Kazumi and Suyama, Takao",
    title = "{Solvable limit of ETH matrix model for double-scaled SYK}",
    eprint = "2311.02846",
    archivePrefix = "arXiv",
    primaryClass = "hep-th",
    doi = "10.1007/JHEP04(2024)094",
    journal = "JHEP",
    volume = "04",
    pages = "094",
    year = "2024"
}

@article{Okuyama:2024eyf,
    author = "Okuyama, Kazumi",
    title = "{Baby universe operators in the ETH matrix model of double-scaled SYK}",
    eprint = "2408.03726",
    archivePrefix = "arXiv",
    primaryClass = "hep-th",
    doi = "10.1007/JHEP10(2024)249",
    journal = "JHEP",
    volume = "10",
    pages = "249",
    year = "2024"
}

@article{Miyaji:2025ucp,
    author = "Miyaji, Masamichi and Mori, Soichiro and Okuyama, Kazumi",
    title = "{Finite N bulk Hilbert space in ETH matrix model for double-scaled SYK. Null states, state-dependence and Krylov state complexity}",
    eprint = "2505.13194",
    archivePrefix = "arXiv",
    primaryClass = "hep-th",
    reportNumber = "YITP 25-71",
    doi = "10.1007/JHEP08(2025)084",
    journal = "JHEP",
    volume = "08",
    pages = "084",
    year = "2025"
}

@article{Blommaert:2024ymv,
    author = "Blommaert, Andreas and Mertens, Thomas G. and Papalini, Jacopo",
    title = "{The dilaton gravity hologram of double-scaled SYK}",
    eprint = "2404.03535",
    archivePrefix = "arXiv",
    primaryClass = "hep-th",
    doi = "10.1007/JHEP06(2025)050",
    journal = "JHEP",
    volume = "06",
    pages = "050",
    year = "2025"
}

@article{Blommaert:2025avl,
    author = "Blommaert, Andreas and Levine, Adam and Mertens, Thomas G. and Papalini, Jacopo and Parmentier, Klaas",
    title = "{Wormholes, branes and finite matrices in sine dilaton gravity}",
    eprint = "2501.17091",
    archivePrefix = "arXiv",
    primaryClass = "hep-th",
    doi = "10.1007/JHEP09(2025)123",
    journal = "JHEP",
    volume = "09",
    pages = "123",
    year = "2025"
}

@incollection{kuijlaars2003riemann,
  title={Riemann-Hilbert analysis for orthogonal polynomials},
  author={Kuijlaars, Arno BJ},
  booktitle={Orthogonal Polynomials and Special Functions: Leuven 2002},
  pages={167--210},
  year={2003},
  publisher={Springer}
}

@article{Fokas:1991za,
    author = "Fokas, A. S. and Its, A. R. and Kitaev, A. V.",
    title = "{The Isomonodromy approach to matrix models in 2-D quantum gravity}",
    reportNumber = "INS-183",
    doi = "10.1007/BF02096594",
    journal = "Commun. Math. Phys.",
    volume = "147",
    pages = "395--430",
    year = "1992"
}

@misc{bleher,
      title={Semiclassical asymptotics of orthogonal polynomials, Riemann-Hilbert problem, and universality in the matrix model}, 
      author={Pavel Bleher and Alexander Its},
      year={1999},
      eprint={math-ph/9907025},
      archivePrefix={arXiv},
      primaryClass={math-ph},
      url={https://arxiv.org/abs/math-ph/9907025}, 
}

@article{Deift_et_al,
author = {Deift, P. and Kriecherbauer, T. and McLaughlin, K. T-R and Venakides, S. and Zhou, X.},
title = {Strong asymptotics of orthogonal polynomials with respect to exponential weights},
journal = {Communications on Pure and Applied Mathematics},
volume = {52},
number = {12},
pages = {1491-1552},
doi = {https://doi.org/10.1002/(SICI)1097-0312(199912)52:12<1491::AID-CPA2>3.0.CO;2-\#},
url = {https://onlinelibrary.wiley.com/doi/abs/10.1002/%28SICI%291097-0312%28199912%2952%3A12%3C1491%3A%3AAID-CPA2%3E3.0.CO%3B2-%23},

year = {1999}
}

@book{Mehta,
author= {Mehta, Madal Lal},
title = {Random Matrices},
publisher = {Academic Press},
edition = {Revised and Enlarged Second Edition},
address = {San Diego},
pages = {xv},
year = {1991}
}

@article{DiFrancesco:1993cyw,
    author = "Di Francesco, P. and Ginsparg, Paul H. and Zinn-Justin, Jean",
    title = "{2-D Gravity and random matrices}",
    eprint = "hep-th/9306153",
    archivePrefix = "arXiv",
    reportNumber = "LA-UR-93-1722, SACLAY-SPH-T-93-061",
    doi = "10.1016/0370-1573(94)00084-G",
    journal = "Phys. Rept.",
    volume = "254",
    pages = "1--133",
    year = "1995"
}

@article{Aguilar-Gutierrez:2026ogo,
    author = "Aguilar-Gutierrez, Sergio E.",
    title = "{Deforming the Double-Scaled SYK and Reaching the Stretched Horizon From Finite Cutoff Holography}",
    eprint = "2602.06113",
    archivePrefix = "arXiv",
    primaryClass = "hep-th",
    doi = "10.1002/prop.70112",
    journal = "Fortsch. Phys.",
    volume = "74",
    number = "5",
    pages = "e70112",
    year = "2026"
}

@article{Aguilar-Gutierrez:2026nmd,
    author = "Aguilar-Gutierrez, Sergio E. and Kukolj, Trivko and Seitz, Josef",
    title = "{q-Askey Deformations of Double-Scaled SYK}",
    eprint = "2605.13956",
    archivePrefix = "arXiv",
    primaryClass = "hep-th",
    month = "5",
    year = "2026"
}

@article{Gibbons:1976ue,
    author = "Gibbons, G. W. and Hawking, S. W.",
    title = "{Action Integrals and Partition Functions in Quantum Gravity}",
    reportNumber = "PRINT-76-0995 (CAMBRIDGE)",
    doi = "10.1103/PhysRevD.15.2752",
    journal = "Phys. Rev. D",
    volume = "15",
    pages = "2752--2756",
    year = "1977"
}

@article{Lewkowycz:2013nqa,
    author = "Lewkowycz, Aitor and Maldacena, Juan",
    title = "{Generalized gravitational entropy}",
    eprint = "1304.4926",
    archivePrefix = "arXiv",
    primaryClass = "hep-th",
    doi = "10.1007/JHEP08(2013)090",
    journal = "JHEP",
    volume = "08",
    pages = "090",
    year = "2013"
}

@article{Eynard:2015aea,
    author = "Eynard, Bertrand and Kimura, Taro and Ribault, Sylvain",
    title = "{Random matrices}",
    eprint = "1510.04430",
    archivePrefix = "arXiv",
    primaryClass = "math-ph",
    month = "10",
    year = "2015"
}

@article{Marolf:2017kvq,
    author = "Marolf, Donald and Parrikar, Onkar and Rabideau, Charles and Izadi Rad, Ali and Van Raamsdonk, Mark",
    title = "{From Euclidean Sources to Lorentzian Spacetimes in Holographic Conformal Field Theories}",
    eprint = "1709.10101",
    archivePrefix = "arXiv",
    primaryClass = "hep-th",
    doi = "10.1007/JHEP06(2018)077",
    journal = "JHEP",
    volume = "06",
    pages = "077",
    year = "2018"
}

@article{Balasubramanian:2022tpr,
    author = "Balasubramanian, Vijay and Caputa, Pawel and Magan, Javier M. and Wu, Qingyue",
    title = "{Quantum chaos and the complexity of spread of states}",
    eprint = "2202.06957",
    archivePrefix = "arXiv",
    primaryClass = "hep-th",
    doi = "10.1103/PhysRevD.106.046007",
    journal = "Phys. Rev. D",
    volume = "106",
    number = "4",
    pages = "046007",
    year = "2022"
}

@article{Baiguera:2025dkc,
    author = "Baiguera, Stefano and Balasubramanian, Vijay and Caputa, Pawel and Chapman, Shira and Haferkamp, Jonas and Heller, Michal P. and Halpern, Nicole Yunger",
    title = "{Quantum complexity in gravity, quantum field theory, and quantum information science}",
    eprint = "2503.10753",
    archivePrefix = "arXiv",
    primaryClass = "hep-th",
    reportNumber = "YITP-25-39",
    doi = "10.1016/j.physrep.2025.11.001",
    journal = "Phys. Rept.",
    volume = "1159",
    pages = "1--77",
    year = "2026"
}

@article{Nandy:2024evd,
    author = "Nandy, Pratik and Matsoukas-Roubeas, Apollonas S. and Mart{\'\i}nez-Azcona, Pablo and Dymarsky, Anatoly and del Campo, Adolfo",
    title = "{Quantum dynamics in Krylov space: Methods and applications}",
    eprint = "2405.09628",
    archivePrefix = "arXiv",
    primaryClass = "quant-ph",
    reportNumber = "RIKEN-iTHEMS-Report-24",
    doi = "10.1016/j.physrep.2025.05.001",
    journal = "Phys. Rept.",
    volume = "1125-1128",
    pages = "1--82",
    year = "2025"
}

@article{Rabinovici:2025otw,
    author = "Rabinovici, Eliezer and S{\'a}nchez-Garrido, Adri{\'a}n and Shir, Ruth and Sonner, Julian",
    title = "{Krylov Complexity}",
    eprint = "2507.06286",
    archivePrefix = "arXiv",
    primaryClass = "hep-th",
    reportNumber = "CERN-TH-2025-128",
    month = "7",
    year = "2025"
}

@article{Parker:2018yvk,
    author = "Parker, Daniel E. and Cao, Xiangyu and Avdoshkin, Alexander and Scaffidi, Thomas and Altman, Ehud",
    title = "{A Universal Operator Growth Hypothesis}",
    eprint = "1812.08657",
    archivePrefix = "arXiv",
    primaryClass = "cond-mat.stat-mech",
    doi = "10.1103/PhysRevX.9.041017",
    journal = "Phys. Rev. X",
    volume = "9",
    number = "4",
    pages = "041017",
    year = "2019"
}

@book{viswanath1994recursion,
  title={The recursion method: application to many-body dynamics},
  author={Viswanath, VS and M{\"u}ller, Gerhard},
  year={1994},
  publisher={Springer}
}

@article{Papadodimas:2013wnh,
    author = "Papadodimas, Kyriakos and Raju, Suvrat",
    title = "{Black Hole Interior in the Holographic Correspondence and the Information Paradox}",
    eprint = "1310.6334",
    archivePrefix = "arXiv",
    primaryClass = "hep-th",
    reportNumber = "ICTS-2013-20, ICTS/2013/20",
    doi = "10.1103/PhysRevLett.112.051301",
    journal = "Phys. Rev. Lett.",
    volume = "112",
    number = "5",
    pages = "051301",
    year = "2014"
}

@article{Papadodimas:2013jku,
    author = "Papadodimas, Kyriakos and Raju, Suvrat",
    title = "{State-Dependent Bulk-Boundary Maps and Black Hole Complementarity}",
    eprint = "1310.6335",
    archivePrefix = "arXiv",
    primaryClass = "hep-th",
    reportNumber = "ICTS-2013-21, ICTS/2013/21",
    doi = "10.1103/PhysRevD.89.086010",
    journal = "Phys. Rev. D",
    volume = "89",
    number = "8",
    pages = "086010",
    year = "2014"
}

@article{Almheiri:2014lwa,
    author = "Almheiri, Ahmed and Dong, Xi and Harlow, Daniel",
    title = "{Bulk Locality and Quantum Error Correction in AdS/CFT}",
    eprint = "1411.7041",
    archivePrefix = "arXiv",
    primaryClass = "hep-th",
    reportNumber = "SU-ITP-14-30, SU-ITP-14/30",
    doi = "10.1007/JHEP04(2015)163",
    journal = "JHEP",
    volume = "04",
    pages = "163",
    year = "2015"
}

@article{Basu:2024tgg,
    author = "Basu, Ritam and Ganguly, Anirban and Nath, Souparna and Parrikar, Onkar",
    title = "{Complexity growth and the Krylov-Wigner function}",
    eprint = "2402.13694",
    archivePrefix = "arXiv",
    primaryClass = "hep-th",
    doi = "10.1007/JHEP05(2024)264",
    journal = "JHEP",
    volume = "05",
    pages = "264",
    year = "2024",
    note = "[Erratum: JHEP 03, 202 (2026)]"
}

@article{Basu:2025mmm,
    author = "Basu, Ritam and Chowdhury, Pratyusha and Ganguly, Anirban and Nath, Souparna and Parrikar, Onkar and Paul, Suprakash",
    title = "{Wigner negativity, random matrices and gravity}",
    eprint = "2506.02110",
    archivePrefix = "arXiv",
    primaryClass = "hep-th",
    doi = "10.1007/JHEP01(2026)106",
    journal = "JHEP",
    volume = "01",
    pages = "106",
    year = "2026"
}

@article{Kar:2021nbm,
    author = "Kar, Arjun and Lamprou, Lampros and Rozali, Moshe and Sully, James",
    title = "{Random matrix theory for complexity growth and black hole interiors}",
    eprint = "2106.02046",
    archivePrefix = "arXiv",
    primaryClass = "hep-th",
    doi = "10.1007/JHEP01(2022)016",
    journal = "JHEP",
    volume = "01",
    pages = "016",
    year = "2022"
}

@article{Muck:2022xfc,
    author = {M{\"u}ck, Wolfgang and Yang, Yi},
    title = "{Krylov complexity and orthogonal polynomials}",
    eprint = "2205.12815",
    archivePrefix = "arXiv",
    primaryClass = "hep-th",
    doi = "10.1016/j.nuclphysb.2022.115948",
    journal = "Nucl. Phys. B",
    volume = "984",
    pages = "115948",
    year = "2022"
}

@article{Murugan:2026rfa,
    author = "Murugan, Jeff and Van Zyl, Hendrik J. R. and Watanabe, Masataka",
    title = "{Spectral Topology and Universal Krylov Dynamics}",
    eprint = "2608.07258",
    archivePrefix = "arXiv",
    primaryClass = "hep-th",
    month = "8",
    year = "2026"
}

@article{Balasubramanian:2022dnj,
    author = "Balasubramanian, Vijay and Magan, Javier M. and Wu, Qingyue",
    title = "{Tridiagonalizing random matrices}",
    eprint = "2208.08452",
    archivePrefix = "arXiv",
    primaryClass = "hep-th",
    doi = "10.1103/PhysRevD.107.126001",
    journal = "Phys. Rev. D",
    volume = "107",
    number = "12",
    pages = "126001",
    year = "2023"
}

@article{Witten:1998qj,
    author = "Witten, Edward",
    title = "{Anti de Sitter space and holography}",
    eprint = "hep-th/9802150",
    archivePrefix = "arXiv",
    reportNumber = "IASSNS-HEP-98-15",
    doi = "10.4310/ATMP.1998.v2.n2.a2",
    journal = "Adv. Theor. Math. Phys.",
    volume = "2",
    pages = "253--291",
    year = "1998"
}

@article{Gubser:1998bc,
    author = "Gubser, S. S. and Klebanov, Igor R. and Polyakov, Alexander M.",
    title = "{Gauge theory correlators from noncritical string theory}",
    eprint = "hep-th/9802109",
    archivePrefix = "arXiv",
    reportNumber = "PUPT-1767",
    doi = "10.1016/S0370-2693(98)00377-3",
    journal = "Phys. Lett. B",
    volume = "428",
    pages = "105--114",
    year = "1998"
}

@article{Erd_s_2014,
   title={Phase Transition in the Density of States of Quantum Spin Glasses},
   volume={17},
   ISSN={1572-9656},
   url={http://dx.doi.org/10.1007/s11040-014-9164-3},
   DOI={10.1007/s11040-014-9164-3},
   number={3-4},
   journal={Mathematical Physics, Analysis and Geometry},
   publisher={Springer Science and Business Media LLC},
   author={Erdős, László and Schröder, Dominik},
   year={2014},
   month=Dec, pages={441–464} }

@article{Dumitriu:2002ntg,
    author = "Dumitriu, Ioana and Edelman, Alan",
    title = "{Matrix models for beta ensembles}",
    eprint = "math-ph/0206043",
    archivePrefix = "arXiv",
    doi = "10.1063/1.1507823",
    journal = "J. Math. Phys.",
    volume = "43",
    number = "11",
    pages = "5830--5847",
    year = "2002"
}

@article{linalg2,
author = {Druskin, Vladimir and Knizhnerman, Leonid},
title = {Krylov subspace approximation of eigenpairs and matrix functions in exact and computer arithmetic},
journal = {Numerical Linear Algebra with Applications},
volume = {2},
number = {3},
pages = {205-217},
doi = {https://doi.org/10.1002/nla.1680020303},
url = {https://onlinelibrary.wiley.com/doi/abs/10.1002/nla.1680020303},
year = {1995}
}

@article{Jiang:2019pam,
    author = "Jiang, Jiaqi and Yang, Zhenbin",
    title = "{Thermodynamics and Many Body Chaos for generalized large q SYK models}",
    eprint = "1905.00811",
    archivePrefix = "arXiv",
    primaryClass = "hep-th",
    doi = "10.1007/JHEP08(2019)019",
    journal = "JHEP",
    volume = "08",
    pages = "019",
    year = "2019"
}

@article{Anninos:2022qgy,
    author = "Anninos, Dionysios and Galante, Dami{\'a}n A. and Sheorey, Sameer U.",
    title = "{Renormalisation group flows of deformed SYK models}",
    eprint = "2212.04944",
    archivePrefix = "arXiv",
    primaryClass = "hep-th",
    doi = "10.1007/JHEP11(2023)197",
    journal = "JHEP",
    volume = "11",
    pages = "197",
    year = "2023"
}

@article{Yang:2018gdb,
    author = "Yang, Zhenbin",
    title = "{The Quantum Gravity Dynamics of Near Extremal Black Holes}",
    eprint = "1809.08647",
    archivePrefix = "arXiv",
    primaryClass = "hep-th",
    doi = "10.1007/JHEP05(2019)205",
    journal = "JHEP",
    volume = "05",
    pages = "205",
    year = "2019"
}

@article{Berkooz:2024ofm,
    author = "Berkooz, Micha and Brukner, Nadav and Jia, Yiyang and Mamroud, Ohad",
    title = "{Path integral for chord diagrams and chaotic-integrable transitions in double scaled SYK}",
    eprint = "2403.05980",
    archivePrefix = "arXiv",
    primaryClass = "hep-th",
    doi = "10.1103/PhysRevD.110.106015",
    journal = "Phys. Rev. D",
    volume = "110",
    number = "10",
    pages = "106015",
    year = "2024"
}

@article{Berkooz:2024ifu,
    author = "Berkooz, Micha and Frumkin, Ronny and Mamroud, Ohad and Seitz, Josef",
    title = "{Twisted times, the Schwarzian and its deformations in DSSYK}",
    eprint = "2412.14238",
    archivePrefix = "arXiv",
    primaryClass = "hep-th",
    doi = "10.1007/JHEP05(2025)080",
    journal = "JHEP",
    volume = "05",
    pages = "080",
    year = "2025"
}

@article{Balasubramanian:2026klv,
    author = "Balasubramanian, Vijay and Caputa, Pawel and Parrikar, Onkar and Singh, Vivek",
    title = "{Wigner negativity in Krylov space and emergent semiclassicality}",
    eprint = "2607.01351",
    archivePrefix = "arXiv",
    primaryClass = "hep-th",
    month = "7",
    year = "2026"
}

@article{linalg0,
author = {Saad, Y.},
title = {Analysis of Some Krylov Subspace Approximations to the Matrix Exponential Operator},
journal = {SIAM Journal on Numerical Analysis},
volume = {29},
number = {1},
pages = {209-228},
year = {1992},
doi = {10.1137/0729014},

URL = { 
    
        https://doi.org/10.1137/0729014
    
    

},
eprint = { 
    
        https://doi.org/10.1137/0729014
    
    

}

}

@article{linalg,
author = {Hochbruck, Marlis and Lubich, Christian},
title = {On Krylov Subspace Approximations to the Matrix Exponential Operator},
journal = {SIAM Journal on Numerical Analysis},
volume = {34},
number = {5},
pages = {1911-1925},
year = {1997},
doi = {10.1137/S0036142995280572},

URL = { 
    
        https://doi.org/10.1137/S0036142995280572
    
    

},
eprint = { 
    
        https://doi.org/10.1137/S0036142995280572
    
    

}
}

@article{kuijlaars2004riemann,
  title={The Riemann--Hilbert approach to strong asymptotics for orthogonal polynomials on [- 1, 1]},
  author={Kuijlaars, Arno BJ and McLaughlin, KT-R and Van Assche, Walter and Vanlessen, Maarten},
  journal={Advances in mathematics},
  volume={188},
  number={2},
  pages={337--398},
  year={2004},
  publisher={Elsevier}
}

@article{deift1993steepest,
  title={A steepest descent method for oscillatory Riemann--Hilbert problems. Asymptotics for the MKdV equation},
  author={Deift, Percy and Zhou, Xin},
  journal={Annals of Mathematics},
  pages={295--368},
  year={1993},
  publisher={JSTOR}
}

@book{szeg1939orthogonal,
  title={Orthogonal polynomials},
  author={Szeg{\H o}, Gabor},
  volume={23},
  year={1939},
  publisher={American Mathematical Soc.}
}

@book{ismail2005classical,
  title={Classical and quantum orthogonal polynomials in one variable},
  author={Ismail, Mourad},
  volume={13},
  year={2005},
  publisher={Cambridge university press}
}

\end{document}